\DeclareFontFamily{U}{BOONDOX-calo}{\skewchar\font=45 }
\DeclareFontShape{U}{BOONDOX-calo}{m}{n}{
<-> s*[1.05] BOONDOX-r-calo}{}
\DeclareFontShape{U}{BOONDOX-calo}{b}{n}{
<-> s*[1.05] BOONDOX-b-calo}{}
\DeclareMathAlphabet{\mathcalboondox}{U}{BOONDOX-calo}{m}{n}
\SetMathAlphabet{\mathcalboondox}{bold}{U}{BOONDOX-calo}{b}{n}
\DeclareMathAlphabet{\mathbcalboondox}{U}{BOONDOX-calo}{b}{n}
\newtcolorbox{done}{%
enhanced, breakable, size=minimal, parbox=false, after={\par}, 
before upper={\indent}, colback=white, 
overlay = {\draw[line width=2pt, blue] (frame.north east) -|
([xshift=3mm]frame.east)|-(frame.south east);},
overlay first={\draw[line width=2pt, blue] (frame.north east) -|
([xshift=3mm]frame.south east);},
overlay middle={\draw[line width=2pt, blue] ([xshift=3mm]frame.north east) -- 
([xshift=3mm]frame.south east);},
overlay last={\draw[line width=2pt, blue] ([xshift=3mm]frame.north east)|-
(frame.south east);},
} %
\newtcolorbox{plfok}{%
enhanced, breakable, size=minimal, parbox=false, after={\par}, 
before upper={\indent}, colback=white, 
overlay = {\draw[line width=2pt, blue, dashed] (frame.north east) -|
([xshift=3mm]frame.east)|-(frame.south east);},
overlay first={\draw[line width=2pt, blue, dashed] (frame.north east) -|
([xshift=3mm]frame.south east);},
overlay middle={\draw[line width=2pt, blue, dashed] ([xshift=3mm]frame.north east) -- 
([xshift=3mm]frame.south east);},
overlay last={\draw[line width=2pt, blue,dashed] ([xshift=3mm]frame.north east)|-
(frame.south east);},
} 
\def\widebreve#1{\mathop{\vbox{\m@th\ialign{##\crcr\noalign{\kern\p@}%
\brevefill\crcr\noalign{\kern0.1\p@\nointerlineskip}%
$\hfil\displaystyle{#1} \hfil$\crcr}}} \limits}
\def\brevefill{$\m@th \setbox \z@\hbox{}%
\hfill\scalebox{0.6}{\rotatebox[origin=c]{90}{(}} \kern1pt $}
\newtheorem{theorem}{Theorem}[section]
\newtheorem{proposition}[theorem]{Proposition}
\newtheorem{lemma}[theorem]{Lemma}
\newtheorem{corollary}[theorem]{Corollary}
\numberwithin{equation}{section}
\newcommand \epss {\eps_\star}
\newcommand \crochet {\mathbf X}
\newcommand \Fenergy {\mathcalboondox F}           
\newcommand \Eenergy {\mathcalboondox E}    
\newcommand \Time {\mathbf T}
\newcommand \E {\mathcal{E}} 
\newcommand \ME {\mathcal{EM}} 
\renewcommand \H {\mathcal H} 
\newcommand \N {\mathcal N} 
\newcommand \Lcal {\mathcal L} 
\newcommand \Mscr       {\mathscr M} 
\newcommand \Mext       {\Mscr^{\mathcal E}} 
\newcommand \MH 		{\Mscr^{\mathcal{H}}} 
\newcommand \MME 		{\Mscr^{\mathcal{EM}}}  
\newcommand \MM 		{\Mscr^{\mathcal{M}}} 
\newcommand \Mtran      {\Mscr^{\mathcal{M}}}  
\newcommand \MMEfar 	{\Mscr^{\textbf{far}}} 
\newcommand \MMEnear  	{\Mscr^{\textbf{near}}} 
\newcommand \Mnear {\Mscr^\textbf{near}}
\newcommand \Mfar{\Mscr^\textbf{far}}
\newcommand \near {\textbf{near}}
\newcommand \far {\textbf{far}}
\newcommand \delH {\del^{\mathcal H}} 
\newcommand \delsH {\slashed \del^{\mathcal H}}
\newcommand \delN {\del^{\mathcal{N}}}
\newcommand \delsN {\slashed \del^{\mathcal{N}}}
\newcommand \PsiN {{\Psi^{\mathcal N}{}}}  
\newcommand \PhiN {{\Phi^{\mathcal N}}{}}
\newcommand \HN {H^{\mathcal{N}}{}}
\newcommand \hN {h^{\mathcal{N}}}
\newcommand \delsE {\slashed \del^\E}
\newcommand \delsEH {\slashed {\del}^{\mathcal{EH}}}
\newcommand \delsME {\slashed {\del}^{\mathcal{EM}}} 
\newcommand \rhoH {{\mathbf r}^{\mathcal{H}}}    
\newcommand \rhoE {{\mathbf r}^{\mathcal{E}}}                
\newcommand {\hs}{\slashed{h}}
\newcommand {\us}{\slashed{u}}
\newcommand {\Abb}{\mathbb{A}}
\newcommand {\Cbb}{\mathbb C}
\renewcommand {\Bbb}{\mathbb{B}}
\newcommand \Fbb {\mathbb F}
\newcommand \Pbb {\mathbb P}
\newcommand \Qbb {\mathbb Q}
\newcommand \SbbME{\mathbb S^{\mathcal{EM}}}
\newcommand \Ibb	{\mathbb I}  	
\newcommand \wR {\tensor[^{(w)}]{}{}  R}
\newcommand \Ebf {\mathbf E}
\newcommand \la \langle
\newcommand \ra \rangle
\newcommand \Div {\text{Div}}
\newcommand \init {\textbf{init}}
\newcommand \Tr {\text{Tr}}
\newcommand \dive {\text{div} \hskip.05cm}
\newcommand \del \partial 
 \newcommand \delu \delH 
\newcommand \delEH {\del^{\mathcal{E \hskip-.06cm H}}} 
\newcommand \delus  \delsH
\newcommand \delts  \delsN
\def\blfootnote{\gdef\@thefnmark{} \@footnotetext}
\newcommand \Rwave {\tensor[^{(w)} ]{R}{^\star}}
\newcommand \vep \epsilon
\newcommand \Sch {S}
\newcommand \source{\textbf{sour}}
\newcommand \Boxt {\widetilde \Box}
\newcommand \Lscr{\mathscr{L}}
\newcommand \BoxChapeau {\widehat \Box} 
\newcommand \Hcal {\mathcal H}
\newcommand \RR{\mathbb{R}}
\newcommand \Abf	{\mathbf{A}}
\newcommand \Bbf	{\mathbf{B}}
\newcommand {\eps} \epsilon
\let\oldmarginpar\marginpar
\renewcommand\marginpar[1]{\- \oldmarginpar[\raggedleft\footnotesize #1]%
{\raggedright\footnotesize #1}}
\newcommand \TJ T
\newcommand \gd g  
\newcommand \Rd  R 
\newcommand \Rbd R 
\newcommand \omegad  \omega 
\newcommand \nablabd \nabla  
\newcommand \Gammad \Gamma 
\newcommand \Gammab \Gamma 
\newcommand \Deltab \Delta 
\newcommand \coef \kappa
\newcommand \Ocal {\mathcal O}
\renewcommand \BoxChapeau {\widetilde \Box}
\newcommand \lbf {\mathbf l}
\newcommand \ord {\textbf{ord}}
\renewcommand \deg {\textbf{deg}}
\newcommand \Mink {\textbf{Mink}}
\newcommand \rank {\textbf{rank}}
\newcommand \gMink {g_\textbf{Mink}}
\renewcommand \Sch {\textbf{Sch}}
\newcommand \pertur {\textbf{pertur}}
\newcommand \Tbb {\mathbb T}
\newcommand \Mgood{\Mscr^{\textbf{good}}}
\newcommand \Mbad{\Mscr^{\textbf{bad}}}
\newcommand \glue {{}}                 
\newcommand \err {\textbf{err}}
\newcommand \easy {\textbf{easy}}
\newcommand \super {\textbf{super}} 
\newcommand \LOmega {Y_\text{rot}} 
\newcommand \exposant \rho
\newcommand \vecnnu \nu
\newcommand \notrelapse L
\newcommand{\smallbullet}{{\scriptscriptstyle\mspace{.5mu}\bullet\mspace{.5mu}}}
\let\oldaligned\aligned
\def\aligned{\oldaligned\relax}
\newcommand \be {\begin{equation}}
\newcommand \ee {\end{equation}}
\newcommand \bei {\begin{itemize}}
\newcommand \eei {\end{itemize}} 
\begin{document}

\title{\bf Einstein-Klein-Gordon spacetimes
\\
in the harmonic near-Minkowski regime}  

\author{\large Philippe G. LeFloch\footnote{Laboratoire Jacques-Louis Lions and Centre National de la Recherche Scientifique,
Sorbonne Universit\'e, 
4 Place Jussieu, 75252 Paris, France. Email: {\sl contact@philippelefloch.org}.
\newline $^\dag$ School of Mathematics and Statistics, Xi'an Jiaotong University, Xi'an, 710049 Shaanxi, People's Republic of China.
Email: {\sl yuemath@mail.xjtu.edu.cn}.
{\it Key words and phrases.} Einstein equations; nonlinear stability of Minkowski spacetime; self-gravitating massive field; harmonic decay; near-Minkowski regime; Euclidean--hyperboloidal foliation method.
\hfill Version: August 2022. 
} 
\, and Yue Ma$^\dag$} 

\date{}

\maketitle  


\begin{abstract}   
We study the initial value problem for the Einstein-Klein-Gordon system and establish the global nonlinear stability of massive matter in the near-Minkowski regime when the initial geometry is a perturbation of an asymptotically flat, spacelike hypersurface in Minkowski spacetime and the metric enjoys the harmonic decay $1/r$ (in term of a suitable  distance function $r$ at spatial infinity).  
Our analysis encompasses matter fields that have small energy norm and solely enjoys a slow decay at spacelike infinity. Our proof is based on the Euclidean-hyperboloidal foliation method recently introduced by the authors, and distinguishes between the decay along asymptotically hyperbolic slices and the decay along asymptotically Euclidean slices. We carefully analyze the decay of metric component at the harmonic level $1/r$, especially the metric component in the direction of the light cone. In presence of such a slow-decaying matter field, we establish a global existence theory for the Einstein equations expressed as a coupled system of nonlinear wave and Klein-Gordon equations.  
\end{abstract}
 
  
{
\small 

\setcounter{secnumdepth}{2}
\setcounter{tocdepth}{1} 
\tableofcontents
} 


\section{Introduction} 

\subsection{Global evolution problem for the Einstein-matter equations}

\paragraph{Main purpose.}

We consider solutions to the Einstein equations in the near-Minkowski regime and study the global nonlinear evolution problem when a suitable initial data set is prescribed, consisting of a Riemannian  metric, a symmetric two-tensor, and matter data. These data are assumed to be close to a  sufficiently flat and asymptotically Schwarzschild-like perturbation of a hypersurface in Minkowski spacetime. For the {\sl vacuum} Einstein equations, this problem was solved first by Christodoulou and Klainerman \cite{CK}, while an alternative proof in wave coordinates was given later on by Lindblad and Rodnianski \cite{LR1,LR2}. Solutions with lower decay at spacelike infinity were constructed by Bieri~\cite{Bieri} (and \cite{BieriZipser}), while most recent contributions are due to Hintz and Vasy \cite{HintzVasy1,HintzVasy2}. All of these results easily extend to {\sl massless} matter fields. 

Our aim in the present paper is to solve the global evolution problem for the Einstein-matter system when the matter field under consideration is {\sl massive.} We are going to describe the problem and first state a simplified version of our main result (Theorem~\ref{theo-main-result}) when the data have Schwarzschild decay. Later in this paper (cf.~Theorem~\ref{theo-main-result-qualitative}) we state a more general result, when the metric at spatial infinity enjoys harmonic-type decay. 

A recent literature on the global dynamics of self-gravitating massive fields is now available. The present paper is a companion to our work~\cite{PLF-YM-main} and, on the one hand, provides the arguments that are necessary in order to establish the Schwarzschild-type decay of solutions to the Einstein equations and, on the other hand, can also be considered as an overview of the more general proof in~\cite{PLF-YM-main} (see also \cite{PLF-YM-long}). These results grew from earlier work by the authors in \cite{PLF-YM-book,PLF-YM-two}.

While our project came under completion we learned that Ionescu and Pausader simultaneously solved the same problem by introducing a completely different methodology, which is based on the notion of spacetime resonances; see \cite{IP3}. A somewhat different class of initial data sets is covered therein, as far as the functional norms and the spatial decay of solutions are concerned (specially since the regularity in \cite{IP3} is stated in weighted Fourier spaces). We will not attempt to review here the vast literature on the subject, and we refer the reader to our detailed overview in \cite{PLF-YM-main}. Recent work includes, among others, the major contributions by Bigorgne~\cite{Bigorgne,Bigorgne2}, Fajman et al. \cite{FJS,FJS3}, Lindblad et al. \cite{LTay}, Smulevici \cite{Smulevici} and Wang \cite{Wang}. 

 
\paragraph{Einstein equations and massive fields.}

We consider four-dimensional manifolds $(\Mscr, g)$ in which $\Mscr \simeq [0, + \infty) \times \RR^3$ and $g$ is a Lorentzian metric with signature $(-, +, +, +)$ and Levi-Civita connection $\nabla_g= \nabla$. Greek indices $\alpha,\beta, \ldots$ describe $0, 1,2,3$, and we denote by $R_{\alpha\beta}$ the Ricci curvature of $g$ and by $R = g^{\alpha\beta} R_{\alpha\beta}$ its scalar curvature. Throughout, we use implicit summation over repeated indices, as well as raising and lowering indices with respect to the metric $g_{\alpha\beta}$ and its inverse denoted by $g^{\alpha\beta}$. 

We impose that Einstein-matter equations hold, namely 
\begin{equation}\label{eq 1 einstein-massif}
G_{\alpha\beta} =  8\pi \, T_{\alpha\beta}, 
\end{equation}
in which the left-hand side is Einstein's curvature tensor defined (in abstract indices) as 
\begin{equation}
G_{\alpha\beta} = R_{\alpha\beta} - {R \over 2} \, g_{\alpha\beta}. 
\end{equation}
The right-hand side of \eqref{eq 1 einstein-massif} is the energy-momentum tensor, which depends upon the matter content and, for instance, is taken to vanish identically for vacuum Einstein spacetimes, so that \eqref{eq 1 einstein-massif} then reduces to the Ricci-flat condition $R_{\alpha\beta}=0$. 


While our method of analysis should apply to many massive matter models, we are interested here in Klein-Gordon scalar fields $\phi: \Mscr \to \RR$ with energy-momentum tensor 
\begin{equation}\label{eq:Talphabeta}
T_{\alpha\beta} = \nabla_\alpha \phi \nabla_\beta \phi - \Big( {1 \over 2} \nabla_\gamma \phi \nabla^\gamma \phi + U(\phi) \Big) g_{\alpha\beta}. 
\end{equation}
Here, the potential function $U=U(\phi)$ depends on the nature of the matter under consideration and, throughout, we assume that
\begin{equation}\label{eq:Uofphi}
U(\phi) = {1\over 2} c^2  \phi^2 + \Ocal(\phi^3)
\end{equation}
for some (mass-like) constant $c>0$. 
Recalling the Bianchi identity $\nabla^\alpha R_{\alpha\beta} = {1 \over 2} \nabla_\beta R$, we have $\nabla^\alpha G_{\alpha\beta} = 0$ and, consequently, 
\begin{equation}\label{Eq1-15bis}
\nabla^\alpha T_{\alpha\beta} =0.
\end{equation}
In turn, we deduce that $\phi$ satisfies the nonlinear Klein-Gordon equation
\begin{equation}\label{eq-KGG}
\Box_g \phi - U'(\phi) = 0. 
\end{equation}
For instance, when $U(\phi)={c^2 \over 2} \phi^2$,  this is $\Box_g \phi - c^2 \phi = 0$, which is linear in $\phi$ but also involves the unknown metric. 


\paragraph{Initial value problem.}

For suitable initial data, the equation \eqref{eq-KGG} is expected to uniquely determine the evolution of the matter.
Our challenge is precisely to study the nonlinear coupling problem when the metric $g$ itself is one of the unknowns and solves Einstein equations with suitably prescribed initial data. 
When a foliation by spacelike hypersurfaces is chosen and a suitable gauge choice is made,  we use Latin indices varying between $1$ and $3$ and the Einstein equations decompose into 
and evolution equations 
\begin{equation}\label{equa-Einstein} 
G_{ab} = 8\pi \, T_{ab}, \qquad \quad a,b=1,2,3. 
\end{equation}
and constraint equations 
\begin{equation}\label{equa-the-constraints} 
G_{00} = 8\pi \, T_{00}, \qquad \quad
G_{0a} = 8\pi \, T_{0a}, \qquad \quad a,b=1,2,3. 
 \end{equation} 
The initial value problem is then formulated by specifying a Riemannian metric $g_0$ and a symmetric two-tensor field $k_0$, defined on a manifold with topology $\RR^{3}$, and satisfying the constraints \eqref{equa-the-constraints}, namely 
\begin{equation}
\label{eq:ee11}
\aligned
R_{g_0} + (\Tr_{g_0} k_0)^2 -  | k_0 |_{g_0}^2 & = 16 \pi T_{00},   
\qquad
\Div_{g_0} \big ( k_0 - (\Tr_{g_0} k_0) g_0 \big) = -8 \pi T_{0 \smallbullet}.   
\endaligned
\end{equation}
where $R_{g_0}$ denotes the scalar curvature, $\Tr_{g_0}$ the trace, and $\Div_{g_0}$ the divergence of the metric $g_0$. 
We then seek a globally hyperbolic Cauchy development~\cite{CBG,YCB} consisting of a spacetime metric $g$ together with a matter field $\phi$ satisfying Einstein's evolution equations and defining a manifold whose induced geometry (first- and second-fundamental forms) on an initial hypersurface is given by the pair $(g_0, k_0)$. 
  

\subsection{Self-gravitating fields in the asymptotically Schwarzschild regime}

\paragraph{Wave-Klein-Gordon system of interest.}

We work on the manifold $\Mscr \simeq \RR^{1+3}_+ := \{(t,x)\in\RR^{1+3},t\geq 1\}$ covered by a single coordinate chart $(t,x)= (t,x^a)$ with $t \geq 1$ and $x \in \RR^3$. We also write $r^2 := |x|^2 = \sum_{a=1}^3|x^a|^2$. We introduce the Minkowski metric $g_\Mink := -dt^2 + \sum_a(dx^a)^2$, and observe that $\Mscr$ is the future of the initial hypersurface $\{t=1\}$. We also introduce the  \textsl{outgoing light cone} 
\begin{equation}\label{equa-lightcone}  
\qquad 
\Lscr:= \big\{ r = t-1 \big\} \subset \RR_+^{3+1}
\end{equation} 
and its constant-$t$ slices are denoted by $\Lscr_t$.

More precisely, we rely on global coordinate functions $x^\alpha: \Mscr \to \RR$ (with $\alpha=0,1,2,3$) satisfying 
the wave gauge conditions 
\begin{equation}\label{eq:wcooE}
\Box_g x^\alpha = 0, 
\end{equation}
and we express the Einstein equations as a nonlinear system of second-order partial differential equations, supplemented with second-order constraints. The unknowns are the metric coefficients $g_{\alpha\beta}$ and the matter field $\phi$. Specifically, we find 
\begin{equation}\label{MainPDE-limit}
\aligned
\BoxChapeau_g g_{\alpha\beta} = \Fbb_{\alpha\beta}(g, g;\del g,\del g) 
-16\pi \, \big( \del_{\alpha}\phi\del_{\beta}\phi + U(\phi)g_{\alpha\beta} \big),
\qquad
\BoxChapeau_g \phi  - U'(\phi) = \, 0, 
\endaligned
\end{equation}
where  
$\BoxChapeau_{\gd} := \gd^{\alpha'\beta'} \del_{\alpha'} \del_{\beta'}$ denotes a modified wave operator, and the wave gauge constraints take the form  
\begin{equation}\label{eq:gamnul3}
\aligned
\Gamma^\alpha & =   g^{\alpha\beta} \Gamma_{\alpha\beta}^\lambda = 0, 
\qquad
\Gamma_{\alpha \beta}^{\lambda}
 = {1 \over 2} \,  g^{\lambda \lambda'} \big(\del_\alpha g_{\beta \lambda'}
+ \del_\beta g_{\alpha \lambda'} - \del_{\lambda'} g_{\alpha \beta} \big). 
\endaligned
\end{equation} 
The nonlinearities $\Fbb = \Pbb+\Qbb$ have an important structure especially in connection with the Euclidian-hyperboloidal foliation, which is described and play a central role in~\cite{PLF-YM-main}. 
 

\paragraph{Merging the Minkowski and the Schwarzschild solutions.}

In wave coordinates the Schwarz\-schild metric $g_{\Sch}$ takes the form (with $\omega_a := x_a/r$) 
\begin{equation}\label{eq Sch-wave-new}
\aligned
g_{\Sch,00} & =   - \frac{r-m}{r+m},
\qquad g_{\Sch,0a} = 0,
\qquad 
g_{\Sch, ab} = \frac{r+m}{r-m} \omega_a \omega_b + \frac{(r+m)^2}{r^2}(\delta_{ab} - \omega_a \omega_b). 
\endaligned
\end{equation}
Consider a (regular) cut-off function $\chi^\star(r)$ that vanishes for $r\leq 1/2$ and is identically $1$ for all $r\geq 3/4$. Given a (small) mass coefficient $m>0$ we introduce the reference metric 
\begin{equation}\label{equa-defineMS-new} 
g_\glue^\star = \gMink + \chi^\star (r) \, \chi^\star(r/(t-1)) (g_\Sch - \gMink), 
\qquad 
t \geq 2. 
\end{equation} 
This metric coincides with $\gMink$ in the cone $\big\{ r/(t-1)< 1/2 \big\}$ and coincides with $g_\Sch$ in the exterior $\big\{ r/(t-1)\geq 3/4 \big\}$ which contains the light cone $\Lscr$.  This metric satisfies the so-called light-bending property, saying by definition that the light cone coefficient 
\begin{equation}\label{equa-Sch-bending-new} 
r \, g_{\glue}^{\star}(\lbf,\lbf) = 4m + \Ocal(1/r) \qquad \text{ for the metric } g_\glue^\star, 
\end{equation}
is positive, where the {\sl light cone direction} is defined by 
\begin{equation}
\lbf := \del_t - (x^a/r)\del_a.
\end{equation}  
This is essentially the same construction as in~\cite{LR1}. 


\paragraph{Class of initial data sets.}

We are interested in solutions to the Einstein-matter system in the harmonic regime (corresponding to $\lambda=1$ in~\cite{PLF-YM-main}) and we establish a sharp decay property in the case where the reference metric is constructed from the Schwarzschild metric. The initial metric $g_0$ is assumed to be sufficiently close to the Euclidean metric while the initial second fundamental form $k_0$ is sufficiently small. Let us introduce the following decomposition (and $a, b=1,2,3$) 
\begin{equation}\label{equa-decomp-data} 
g_{0ab} = g^\star_{0ab} + u_{0ab} = \delta_{ab} + h^\star_{0ab} + u_{0ab}, 
\qquad \quad
k_{0 ab} = k^{\star}_{0ab} + l_{0ab}. 
\end{equation}
We aim at covering a variety of asymptotic behaviors and, at this juncture, it is convenient to introduce the following terminology. 

\begin{itemize}

\item The part $h^\star_{0}$ is referred to as the {\bf initial reference} and will be assumed to be small in a (weighted, high-order) pointwise norm. 

\item The part $u_{0}$ is referred to as the {\bf initial perturbation} and will be assumed to be small in  (weighted, high-order) energy norm. 

\end{itemize} 

An example of a such decomposition is provided by the construction in Lindblad and Rodnianski~\cite{LR1}, where the initial data is decomposed as the sum of a finite-energy perturbation plus an (asymptotically) Schwarzschild metric outside of a compact set (with sufficiently small and positive mass). In our theory, the two parts are treated differently. Indeed, $h_{0}^{\star}$ is the initial trace of $h^\star$ while $u_0$ is propagated.  

For the metric perturbation, we introduce the energy norms 
\begin{equation}\label{equa-norms} 
\aligned
\Ebf^{\text{metric}}_{\kappa,N} (g_0, k_0)
= &
\sum_{|I|\leq N}\big\|\la r\ra^{\kappa + |I|} \big( |\del_x^I\del_x u_0| + |\del_x^I u_1| \big) \big\|_{L^2(\RR^3)},
\\
\Ebf^{\text{matter}}_{\mu,N} (\phi_0, \phi_1) 
:= &
\sum_{|I|\leq N}\big\|\la r\ra^{\mu + |I|} (|\del_x^I\del_x\phi_0| + |\del_x^I\phi_0| + |\del_x^I\phi_1|)\big\|_{L^2(\RR^3)}.
\endaligned
\end{equation}
While our proof below will provide a more general result as stated below, we find it convenient to state first our result for the perturbation of the Schwarzschild solution. 

 
Given an initial data set we decompose it according to \eqref{equa-decomp-data} and we introduce the {\sl linear development} denoted by $u_\init$ of the initial data set $(u_{0 \alpha\beta},u_{1 \alpha\beta})$, that is, we introduce the solution to the (free, linear) wave equation with this initial data. It was established in~\cite[Proposition~\ref{prop1-12-01-2022}]{PLF-YM-main} that, under the assumption that the norm above is finite, 
\begin{equation}\label{eq3-09-05-2021-new}
|u_{\init}|  \lesssim C_0\eps (t+r+1)^{-1}. 
\end{equation} 
Our main assumption beyond the smallness on the norms~\eqref{equa-norms} is the following sign condition which we referred to as the {\sl light-bending condition}
\begin{equation}\label{eq3'-27-05-2020-initial-new-Sch}
\inf_{\Mscr^\near_\ell} \big(4 m + r \, u_\init(\lbf,\lbf) \big) 
\geq \epss. 
\end{equation}
Here, a parameter $\ell \in (0,1/2]$ being fixed once for all, we have defined the {\sl near-light cone} domain to be 
\begin{equation}\label{eq1-06-07-2021-new}
\Mscr^\near_{\ell}  := \Big\{  t \geq 2, \quad t-1 \leq r \leq \frac{t}{1-\ell} \Big\}, 
\end{equation} 
where in agreement with~\cite{PLF-YM-main} it is convenient to restrict attention  to $t \geq 2$. Interestingly, this condition can be easily satisfied when 

\bei 

\item either $\eps$ is small with respect to $\epss$, so that the contribution $m$ from the Schwarzschild metric dominates, 

\item both initial data $u_0 \geq 0$ and $u_1 \geq 0$ are non-negative (as is clear from the fact that fundamental solution to the wave equation is a non-negative measure),  

\item or yet a combination of the above two extreme examples, namely, the negative contribution of the perturbation is small with respect to the Schwarzschild mass. 

\eei


\paragraph{Main statement.}

We are in a position to state our main result. In fact, a more general statement concerning perturbations of reference metrics with harmonic decay is actually established in this paper; see Theorem~\ref{theo-main-result-qualitative}. 

\begin{theorem}[Nonlinear stability of self-gravitating Klein-Gordon fields. Perturbations of the Schwarzschild solution]
\label{theo-main-result} 
A constant $C_\star >0$ being fixed, the following result holds for all sufficiently small $\eps, \epss$ satisfying $\eps\leq C_\star \epss$. Consider the reference metric $g_\glue^\star$ defined in \eqref{equa-defineMS-new} by merging together the Minkowski metric $g_\Mink$ and the Schwarzschild metric with mass $\epss$ along a light cone. Consider a set of contraint-satisfying initial data $(g_0,k_0,\phi_0,\phi_1)$, a large integer $N$, and some exponents $(\kappa,\mu,\eps)$  
satisfying  
\begin{equation}\label{eq2-10-04-2022-M}
\kappa\in(1/2,1), 
\qquad 
\mu\in(3/4,1),
\qquad 
\kappa\leq \mu.  
\end{equation}
Then provided the initial data satisfies the light bending condition \eqref{eq3'-27-05-2020-initial-new-Sch} together with the smallness condition
\begin{equation}\label{eq1-10-04-2022-M}
\aligned 
\Ebf^{\text{metric}}_{\kappa,N}  (g_0, k_0)
+ \Ebf^{\text{matter}}_{\mu,N} (\phi_0, \phi_1) 
 \leq \eps, 
\qquad 
\endaligned
\end{equation}
the maximal globally hyperbolic Cauchy development of $(g_0,k_0,\phi_0,\phi_1)$ associated with the Einstein-massive field system \eqref{eq 1 einstein-massif} and \eqref{eq:Talphabeta} is future causally geodesically complete, and asymptotically approaches Minkowski spacetime in all (timelike, null, spacetime) directions. Moreover, the component $g(\lbf, \lbf)$ has a harmonic decay and enjoys the light bending condition, namely 
\begin{equation}\label{eq3-09-05-2021-new-g}
|g(\lbf, \lbf)| \lesssim {\eps_\star+\eps \over t+r+1}, 
\qquad \quad
\inf_{\Mscr^\near_{\ell}} r \, g(\lbf,\lbf)  
\geq \eps_\star/2.
\end{equation}
\end{theorem}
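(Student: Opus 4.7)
The proof I would propose proceeds by reducing Theorem~\ref{theo-main-result} to the more general Theorem~\ref{theo-main-result-qualitative} announced for arbitrary harmonic-decay reference metrics, and then verifying its hypotheses for the explicit choice $g_\glue^\star$. Decomposing $g_{\alpha\beta} = g^\star_{\glue,\alpha\beta} + h_{\alpha\beta}$, one checks that away from the cutoff region $g_\Sch$ already solves the wave gauge to the required accuracy and yields the identity
\[
g_\Sch(\lbf,\lbf) \,=\, -\frac{r-m}{r+m} + \frac{r+m}{r-m} \,=\, \frac{4 m r}{r^2 - m^2} \,=\, \frac{4 m}{r} + \Ocal(r^{-3}),
\]
which is the precise quantitative content of the light-bending property~\eqref{equa-Sch-bending-new} at the reference level. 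The cut-off $\chi^\star$ glues this harmonic decay to the flat metric inside $\{r \leq (t-1)/2\}$, so $g_\glue^\star$ meets the structural demands of the general theorem, while the initial-data split~\eqref{equa-decomp-data} isolates the perturbative part $(u_0, l_0)$ whose norm is controlled by~\eqref{eq1-10-04-2022-M}.

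The substantive content lies in the general harmonic-regime theorem, whose proof is a bootstrap on the Euclidean--hyperboloidal foliation of~\cite{PLF-YM-main}, splitting $\Mscr$ into an asymptotically Euclidean exterior $\{r \geq t-1\}$ and an asymptotically hyperbolic interior built from the hyperboloids $\{t^2 - r^2 = s^2\}$. On each region I would propagate a hierarchy of weighted $L^2$ energies up to order $N$: in the exterior with radial weight $\la r\ra^{\kappa+|I|}$ for the metric perturbation $h$ and $\la r\ra^{\mu+|I|}$ for the scalar $\phi$, and in the interior with the analogous hyperboloidal weights. Commutators in the exterior are restricted to translations and rotations only, since the data is not compactly supported and boosts would demand faster decay; in the interior one commutes with the full boost--rotation algebra and uses the hyperboloidal Klainerman--Sobolev inequality to upgrade $L^2$ bounds into pointwise $t^{-3/2}$-type decay for $\phi$ and $1/(t+r+1)$-type decay for $h$. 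The two regions are stitched together across $\Lscr$ by compatible transition weights, and the quasi-null algebra of $\Fbb = \Pbb + \Qbb$ is exploited to isolate the borderline term that feeds into the transport equation governing $h(\lbf,\lbf)$.

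The main obstacle is closing the estimate for $g(\lbf,\lbf)$ at the sharp harmonic rate $1/r$. Because the matter field decays only as $\la r\ra^{-\mu}$ with $\mu < 1$, the quadratic source $\del_\alpha \phi\, \del_\beta \phi + U(\phi) g_{\alpha\beta}$ in~\eqref{MainPDE-limit} is only borderline integrable along outgoing null rays, so a naive transport bound for $r\, h(\lbf,\lbf)$ loses a logarithm --- a loss that, if unchecked, would overwhelm the $4 \eps_\star$ reservoir supplied by Schwarzschild and violate the light-bending condition at large times. The resolution, and the reason for hypothesis~\eqref{eq3'-27-05-2020-initial-new-Sch}, is that this borderline contribution can be controlled inside $\Mscr^\near_\ell$ and absorbed against the bootstrap lower bound on $r \, g(\lbf,\lbf)$: a continuous-induction argument on $t$ propagates the strict inequality $r\, g(\lbf,\lbf) \geq \eps_\star$ from $t=1$ onward as the bound $\eps_\star/2$, provided $\eps$ is small relative to $\eps_\star$ as encoded in $\eps \leq C_\star \eps_\star$. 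Once this is secured, the pointwise estimate~\eqref{eq3-09-05-2021-new-g}, the global closing of the energy bootstrap, future causal geodesic completeness, and the Minkowski asymptotics all follow as standard consequences of the uniform curvature decay.
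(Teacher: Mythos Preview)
Your overall reduction to Theorem~\ref{theo-main-result-qualitative} and the bootstrap on the Euclidean--hyperboloidal foliation are correct in spirit, but several of the specific mechanisms you describe are not the ones that actually close the argument, and at least two of them would fail.

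First, your assertion that in the exterior one commutes only with translations and rotations is incorrect: the paper uses the full admissible family $Z=\del^I L^J\Omega^K$, including boosts, in the Euclidean--merging domain (see Proposition~\ref{prop1-12-02-2020-new} and the energy functionals $\Fenergy_\kappa^{\ME,p,k}$). Without boosts the Sobolev inequalities of Propositions~\ref{pro204-11-2-one-new} and~\ref{lem 2 d-e-I-new} would not yield the necessary pointwise decay. Second, you misidentify the borderline term: the matter source $\del\phi\,\del\phi + U(\phi)g$ is in fact \emph{super-critical} --- Lemma~\ref{lemma-111-new} and \eqref{eq1-28-11-2020-second} give $|\Tbb(\phi)|_{N-3}\lesssim r^{-3}\crochet^{1-2\mu}s^{1+3\delta}$, which is comfortably integrable since $\mu>3/4$. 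The genuine borderline term is the quasi-linear piece $h^{\N00}\del_t\del_t u$, and the whole point of the harmonic regime is that once $|h^{\N00}|_{N-4}\lesssim r^{-1}$ is secured, this term becomes $r^{-2}\crochet^{-1-\kappa}$-decaying (Lemma~\ref{eq3-10-06-2022-lem}) and feeds harmlessly into the Kirchhoff estimate.

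Third, and most importantly, the sharp $1/r$ decay of $h^{\N00}$ is \emph{not} obtained by integrating a transport equation along outgoing null rays --- that is how the paper controls $\del\us$ in Proposition~\ref{prop1-23-07-2020-new}, but for $h^{\N00}$ itself the argument (Proposition~\ref{prop1-10-06-2022}) relies on the Kirchhoff sup-norm estimates of Proposition~\ref{Linfini wave} applied to $\Box u$, combined with a secondary bootstrap on $r|h^{\N00}|_{N-4}\leq\eps_s$ and an integration toward the light cone to cover the bad strip $\Mbad$. The light-bending condition is not used to ``absorb a logarithm'' in this step; its role is as a \emph{sign} condition $\HN^{00}\leq 0$ required for the integration-along-characteristics estimate \eqref{eq1'-10-01-2021-new} and for the positivity of the boundary flux $\frac{d}{ds}\Eenergy^{\Lcal}_{g,c}\geq 0$ in the energy estimate (see \eqref{eq8-27-03-2021-new} and \eqref{eq3-11-07-2022}). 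Your narrative of absorbing a matter-induced logarithmic loss against the $4\epss$ reservoir does not match the actual mechanism and would not by itself close the bootstrap.
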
 


\paragraph{Harmonic decay.}
 
Our proof of Theorems~\ref{theo-main-result} and \ref{theo-main-result-qualitative} follows from the general method in~\cite{PLF-YM-main} in which we now take the parameter $\lambda$ therein to be the critical value $\lambda = 1$. 
We would like to point out here several significant differences and several new ingredients that are required for the proof in the present paper.  
First of all, the use of a hierarchy property for commutators was not necessary for the range $\lambda<1$, but turns out to be essential now. On the other hand, we revisit \cite[Section~\ref{section-8-added}]{PLF-YM-main} concerning the pointwise decay of metric components (and their derivatives) and we use a decomposition of the wave operators
in which the component $\HN^{00}$ is directly controlled. 
In turn, in the analysis in \cite[Section~\ref{section---12}]{PLF-YM-main} we  now decouple the contributions of the Hessian and of the commutators. Interestingly, this leads to significant simplifications in comparison with the arguments in \cite{PLF-YM-main}, while simultaneously new estimates are required. 
For instance, in the analysis of the (now harmonic) decay of the null metric component in \cite[Section~\ref{section---13}]{PLF-YM-main},  we no longer introduce on the ``loss'' exponent $\theta$. In order to recover the desired $1/r$ behavior for the metric, the application of the Kirchhoff formula must now be done at the $1/r$ level of decay, and this requires a sharp control of the source-terms in the wave equation. Such terms are due to the Ricci curvature of the reference metric (assumed to have sufficient decay, and even to vanish for the Schwarzschild metric), the commutators (discussed earlier) and the quasilinear terms (also discussed earlier). Observe that it is also important to distinguish between zero-order estimates (i.e. without differentiation) and high-order estimates. 
Finally, we point out that, interestingly, our results also apply in the {\sl vacuum spacetimes} by taking $\phi$ to vanish identically: our proof then is somewhat simpler than the one in \cite{LR2}, thanks to the fact that it takes advantage of the light-bending condition ---a consequence of the assumed positivity of the mass of the Schwarzschild metric. 


\section{Euclidean-hyperboloidal foliation and functional inequalities} 

\subsection{Spacetime foliation and vector fields}

\paragraph{Preliminary.}

We provide here an overview of the technical tools introduced by the authors in \cite{PLF-YM-main} when developing the Euclidean-Hyperboloidal Foliation Method. This method generally applies to establish the global existence of solutions with small amplitude for coupled systems of wave and Klein-Gordon equations. The method relies on the following key ingredients. 

\begin{itemize}

\item Construction of a spacetime foliation consisting of asymptotically hyperboloidal slices and asymptotically flat slices.

\item Functional analysis tools on such a foliation: admissible vector fields, Sobolev inequalities, and energy estimates. 

\item Sup-norm estimates for wave equations and Klein-Gordon equations. 

\end{itemize}

\noindent For the full set of notions and results, we refer to \cite{PLF-YM-main}. 

 
\paragraph{Foliation of interest.}
 
In order to foliate $\Mscr$ by spacelike hypersurfaces, we introduce a cut-off function $\chi: \RR \to [0,1]$ satisfying  
\begin{subequations}
\begin{equation}\label{eq3-04-05-2020-one-new} 
\chi(x) = 
\begin{cases}
0,   & x \leq 0,
\\
1,   & x> 1. 
\end{cases}
\end{equation} 
We also introduce the {\sl hyperboloidal and Euclidean radii} at a time $s$ 
\begin{equation}\label{equa-rhoHrhoE-new}
\rhoH(s) := {1 \over 2} (s^2 -1), \qquad \rhoE(s) := {1 \over 2} (s^2 +1), 
\end{equation} 
Then, a function $\xi$ referred to as the {\sl foliation coefficient} is defined by 
\begin{equation}\label{equa-def-xi-new} 
\xi(s,r) := 1-\chi(r- \rhoH(s)) 
= \begin{cases}
1, \quad & r <\rhoH(s),
\\
0, \quad & r > \rhoE(s). 
\end{cases}
\end{equation}
We next define the {\sl Euclidean--hyperboloidal time function} $T = \Time(s,r)$ by solving the ordinary differential equation 
\begin{equation}\label{eq5-05-05-2020-one-new}
\del_r \Time(s,r) = \frac{r \, \xi(s,r)}{(s^2+r^2)^{1/2}},
\qquad 
\Time(s,0) = s. 
\end{equation}
\end{subequations}
In turn, our foliation $\Mscr_s := \{(t,x^a)\in\Mscr\,/\, t = \Time(s,r) \}$ is a one-parameter family of asymptotically Euclidean, spacelike hypersurfaces. The future of the initial surface $\{t=1\}$ is decomposed as 
\begin{subequations}
\begin{equation}
\{t\geq 1\} = \Mscr^{\init}\cup \bigcup_{s\geq 2}\Mscr_s, 
\qquad
\Mscr_s = \MH_s \cup \Mtran_s \cup \Mext_s, 
\end{equation}
with $\Mscr^{\init} := \{(t,x)\,/\, 1\leq t\leq T(2,r)\}$ and 
\label{equa-nosconditions-new}
\begin{equation}\label{eq:617def-one-new}
\aligned
\MH_s :& =   \big \{ t = \Time(s,|x|), \quad \quad
|x| \leq \rhoH(s) \big\}
\qquad 
&& \text{asymptotically hyperboloidal,}
\\
\MM_s :& =   \big\{t = \Time(s,|x| ), \quad \, \rhoH(s)\leq  |x| \leq \rhoE(s) 
\big\}
\qquad 
&&  \text{merging (or transition),}
\\
\Mext_s :& =   \big\{ t=\Time(s), \quad \qquad\quad\rhoE(s) \leq |x| 
\big\}
&& \text{asymptotically Euclidean}.
\endaligned
\end{equation} 
We also write $\MME_s := \Mext \cup \MM$. 
\end{subequations}

 
\paragraph{Frames of interest.}

The following terminology will be used. 

\begin{subequations}

\begin{itemize}

\item {\sl The semi-hyperboloidal frame} (SHF)
\begin{equation}\label{eq:semihf-one-new}
\delH_0: = \del_t, 
\qquad\qquad
\delH_a = \delsH_a: = \frac{x^a}{t} \del_t + \del_a
\end{equation}
is defined in $\Mscr_s$ and is the appropriate frame in the hyperboloidal domain in order to establish the relevant {\sl decay properties in timelike and null directions.} Some of our arguments will involve radial integration based on  
$\delsH_r := (x^a /r)\delsH_a$.

\item {\sl The semi-null frame}  (SNF)
\begin{equation}\label{eq=nulllframedef-one-new}
\delN_0 : = \del_t, 
\qquad \qquad
\delN_a = \delsN_a:= {x^a \over r} \del_t + \del_a 
\end{equation}
is defined in $\Mscr_s$ {\sl except} on the center line $r=0$, and is the appropriate frame within the Euclidean-merging domain in order to exhibit the structure of the nonlinearities of the field equations and, in turn, to establish the  relevant {\sl decay properties in spatial and null directions}. 

\item {\sl The Euclidean--hyperboloidal frame} (EHF) is defined as 
\begin{equation}\label{equation-87-one-new} 
\aligned
\delEH_0 := \del_t, 
\qquad\qquad
\delEH_a & = \delsEH_a 
:=   \del_a + (x^a/r)\del_r \Time \, \del_t 
= \del_a + x^a\xi(s,r) (s^2 + r^2)^{-1/2} \del_t
\endaligned
\end{equation} 
consists of tangent vectors  $\delsEH_a$ to the slices $\Mscr_s$. Observe that $\delEH_a = \delH_a$ in $\MH_s$, while $\delEH_a = \del_a$ in $\Mext_s$. The expressions of the vectors $\delEH_a$ are more involved in the merging $\MM_s$, where the vectors $\delEH_a$ interpolate between $\delH_a$ and $\del_a$.  Some of our arguments will involve radial integration based on  
$\delsEH_r := (x^a/r)\delsEH_a$. 

\end{itemize} 

\end{subequations}


\paragraph{Admissible vector fields.}

Minkowski spacetime admits three sets of {\sl Killing fields}. 
\begin{itemize}

\item The {spacetime translations} generated by the vector fields $\del_\alpha$ ($\alpha=0,1,2,3$). 

\item  The {Lorentz boosts} generated by the vector fields 
$L_a := x_a \del_t  + t \, \del_a$ ($a=1,2,3$).  

\item The {spatial rotations} generated by the vector fields 
$\Omega_{ab} := x_a \del_b - x_b \del_a$ ($a, b=1,2,3$).
\end{itemize} 
\noindent 
We refer to $\del_\alpha, L_a, \Omega_{ab}$ as the {\sl admissible fields} which commute with the wave and Klein-Gordon operators in Minkowski spacetime. In defining high-order norms, we combine admissible vector fields together. An operator $Z=\del^I L^J \Omega^K$ is called an {\sl ordered admissible operator.} To such an operator $Z = \del^I L^J \Omega^K$, we associate its {\bf order, degree,} and {\bf rank} by  
\begin{equation}
\ord(Z) = |I|+|J| + |K|, 
\qquad 
\deg(Z) = |I|, 
\qquad 
\rank(Z) = |J| + |K|. 
\end{equation} 


\subsection{Energy and Sobolev inequalities}

\paragraph{Weighted energy inequality.}

For the foliation under consideration, the fundamental energy functional associated with the wave equation and, more generally, the Klein-Gordon equations involve the {weight} $\zeta = \zeta(t,x)$ defined by 
\begin{equation}\label{equa-defzeta-one-new}
\zeta(s,r)^2 : = 1 - \frac{r^2 \xi^2(s,r)}{s^2+r^2}
= \begin{cases}
s^2/t^2, \quad & r <\rhoH(s),
\\
1, \quad & r > \rhoE(s). 
\end{cases}
\end{equation}
In addition, we introduce a weight which reduces to a constant in the interior of the light cone and is essentially the distance to the light cone in the exterior domain. Throughout, we are given a smooth and non-decreasing function $\aleph$ satisfying $\aleph(y) = 0$ for $y \leq { -1}$ and $\aleph(y) = y+1$ for $y \geq { 0}$, and we set 
\begin{equation}\label{eq:weight-one-new}  
\crochet := 1 + \aleph({r-t}). 
\end{equation}  

Consider the wave or Klein-Gordon equation (with $c \geq 0$) with unknown $u$, namely 
$
g^{\alpha\beta} \del_\alpha \del_\beta u - c^2 u = f$ 
associated with a metric 
$g^{\alpha\beta} =:  g_\Mink^{\alpha\beta} + H^{\alpha\beta}$ and a right-hand side $f$. The energy-flux vector
\begin{subequations}
\begin{equation}\label{eq:39-one-new} 
V_{g, \eta,c}[u]
:=
-\crochet^{2 \eta} \Big(
g^{00} |\del_t u|^2 - g^{ab} \del_au \, \del_bu - c^2u^2, \ 2 g^{a\beta} \del_t u  \del_\beta u \Big)
\end{equation}
depends upon the metric $g$ as well as the weight $\crochet^\eta$.
By setting   
\begin{equation}\label{eq7-08-05-2020-one-new}
\aligned
\Omega_{g, \eta,c}[u] 
& := 
-2\eta\crochet^{-1} \aleph'({ r-t}) (-1,x^a/r) \cdot V_{g,\eta,c}[u]
\\
& = 2 \, \eta\crochet^{2\eta-1} \aleph'({ r-t}) \, \big(g^{\N ab}\delsN_au\delsN_bu - H^{\N00}|\del_tu|^2 + c^2u^2\big),
\\
- G_{g, \eta}[u] 
& :=    \del_tH^{00} | \crochet^\eta \del_t u|^2 - \del_tH^{ab} \crochet^{2 \eta} \del_au\del_b u + 2 \crochet^{2 \eta}  \del_aH^{a\beta} \del_t u\del_{\beta} u, 
\endaligned
\end{equation}
we find 
\begin{equation}\label{eq 2 energy-curved-mod-0-one-new}
\aligned
-2 \crochet^{2 \eta} \, \del_tu f
= \dive  V_{g, \eta,c}[u] + \Omega_{g, \eta,c}[u] - G_{g, \eta}[u]. 
\endaligned
\end{equation}   
\end{subequations}

\begin{subequations}
Next, we consider the integral  
\begin{equation}
\Eenergy_{g, \eta,c}(s,u) := \int_{\Mscr_s}V_{g, \eta,c}[u] \cdot n_s \, d\sigma_s.
\end{equation}
By integrating the energy identity \eqref{eq 2 energy-curved-mod-0-one-new} over the domain limited by a slice of the foliation and the initial slice and using Stokes' formula,  we arrive at the \textsl{energy identity} 
\begin{equation}\label{eq2-08-05-2020-one-new}
\Eenergy_{g, \eta,c}(s_1,u) - \Eenergy_{g, \eta,c}(s_0,u)
+ \int_{\Mscr_{[s_0,s_1]}} \!\!\!\!\!\!\!\!\big(\Omega_{g, \eta,c}[u] - G_{g, \eta}[u]\big) \, dxdt 
= -2 \int_{\Mscr_{[s_0,s_1]}} \!\!\!\!\!\!\!\!  \del_t uf \, \crochet^{2 \eta} dxdt.
\end{equation}
\end{subequations}
The Jacobian $J$ of our parameterization $(t,x) \mapsto (s,x)$, as established in \cite[Lemma~\ref{lem1-22-05-2020}]{PLF-YM-main}, satisfies 
\begin{equation}\label{borne-sup-J} 
J \leq
\begin{cases}
s/t
 \quad      & \text{ in } \MH_s,
\\
\xi s \, (s^2 +r^2)^{-1/2} + (1- \xi) \, 2s
\quad    & \text{ in } \MM_s, 
\\
2s \quad  & \text{ in } \Mext_s,  
\end{cases}
\end{equation} 
and 
\begin{equation}\label{lem1-22-05-2020-new}
\zeta^2 s \lesssim J \lesssim  
\, \zeta^2 s \qquad \text{ in } \MM. 
\end{equation}
After a change of variable and differentiation we can rewrite \eqref{eq2-08-05-2020-one-new} in the equivalent form 
\begin{subequations}
\begin{equation}
\aligned
&\frac{d}{ds} \Eenergy_{g,\eta,c}(s,u)
+ 2\eta \int_{\Mscr_s}  \big( g^{\N ab} \delsN_au\delsN_bu + c^2u^2
\big)  \aleph'({ r-t}) \crochet^{2\eta-1}\ Jdx
\\
&= \int_{\Mscr_s} \Big(G_{g, \eta}[u] + \eta\crochet^{2\eta-1} \aleph'({ r-t}) \HN^{00}  |\del_t u|^2   \Big)\ Jdx 
+ \int_{\Mscr_s}  |\del_t u f| \, \crochet^{2 \eta} \, Jdx, 
\endaligned
\end{equation}   
in which the latter integral is controlled by 
\begin{equation}
\aligned
\int_{\Mscr_s}  |\del_t u f| \, \crochet^{2 \eta} \, Jdx 
\lesssim 
\int_{s_0}^{s_1} \Eenergy_{\eta,c}(s,u)^{1/2} 
\|J\zeta^{-1}\crochet^{\eta} f\|_{L^2(\Mscr_s)} \, ds. 
\endaligned
\end{equation}  
\end{subequations}

We will rely on the following weighted energy estimate in the Euclidean-merging domain concerning any solution $u: \MME_{[s_0,s_1]} \to \RR$ to the wave or Klein-Gordon equation $g^{\alpha\beta} \del_\alpha \del_\beta u - c^2 u = f$ with right-hand side $f: \MME_{[s_0,s_1]} \to \RR$: 
\begin{equation}\label{prop energy-ici-exterior-new-equation} 
\aligned
& \frac{d}{ds} \Eenergy_{g,\eta,c}^{\ME}(s,u) + \frac{d}{ds} \Eenergy^{\Lcal}_{g, c}(s,u;s_0)  
+  2 \eta  \int_{\MME_s}\big( g^{\N ab} \delsN_au\delsN_bu +  c^2u^2\big)  \crochet^{2\eta-1} \aleph'({ r-t}) \, Jdx
\\
& = \int_{\MME_s} \Big(G_{g, \eta}[u] + \eta\crochet^{2\eta-1} \aleph'({ r-t}) \HN^{00}  |\del_t u|^2   \Big) \, Jdxds 
+ \int_{\MME_s} \crochet^{2 \eta} \del_t u f \, Jdxds, 
\endaligned
\end{equation}
in which the latter integral is bounded by 
$\int_{s_0}^{s_1} \, (\Eenergy_{\eta,c}^{\ME}(s,u))^{1/2} \, \big\|J \, \zeta^{-1} \crochet^{\eta} f\big\|_{L^2(\MME_s)} \, ds$, while the second term in the left-hand side is defined as 
\begin{equation}\label{eq8-27-03-2021-new}
\frac{d}{ds} \Eenergy_{g,c}^{\Lcal}(s,u;s_0) 
= s \int \big(- \HN^{00} |\del_t u|^2 + g^{\N ab} \delsN_au\delsN_bu + c^2u^2 \big) \, d\sigma,
\end{equation}
where in the domain of integration is defined by $t=r+1$ and $\rhoH(s_0)\leq r \leq \rhoH(s_1)$.


\paragraph{Commutator estimates.}
 
We also introduce the Japanese bracket
$\la y \ra := \sqrt{ 1+ |y|^2}$ for all real $y$. 
Then, for any function $u=u(t,x)$ we define
$
|u|_N :=  \max_{\ord{Z}\leq N} |Z u|
$
and $
|u|_{p,k} :=  \max_{\ord{Z}\leq N\atop \rank {Z}\leq k} |Z u|$, 
where the first maximum is over all ordered admissible operators.
Various calculus rules based on this notation were established in \cite{PLF-YM-main}, which we will not repeat here. Let us only extract an important consequence of~\cite[Proposition~\ref{prop1-12-02-2020}]{PLF-YM-main}. 
We also decompose each slice $\MME_s$ into near and far regions so that 
\begin{equation}\label{equa-nearfardefinition-new}
\MMEnear_s  := \MME_s\cap \big\{  t-1 \leq r \leq 2t \big\}, \qquad\qquad
 \MMEfar_s := \MME_s\cap \big\{r\geq 2t \big\}. 
\end{equation}

\begin{proposition}[Hierarchy property for quasi-linear commutators. Euclidean-merging domain] 
\label{prop1-12-02-2020-new}

\hskip.1cm

{\bf 1. Estimate in the near-light cone domain.} For any function $u$ defined in the near-light cone domain and for any operator $Z$ with $\ord(Z) = p$ and $\rank(Z) = k$, one has 
\begin{subequations}
\label{eq4-12-02-2020-second}
\begin{equation}\label{eq4a-12-02-2020-second}
\aligned
& \, \big|[Z,H^{\alpha\beta} \del_\alpha \del_{\beta}]u\big|
\\
& \lesssim \big( |\HN^{00}| + t^{-1}|r-t| |H|\big) \, |\del\del u|_{p-1,k-1} 
+ T_{p,k}^\textbf{hier}[H,u] + T_{p,k}^\easy[H,u] + T_{p,k}^{\textbf{super}}[H,u]
\quad \text{ in } \MMEnear_{[s_0, s_1]}, 
\endaligned
\end{equation}
with 
\begin{equation}\label{eq4b-12-02-2020-second}
\aligned
T_{p,k}^\textbf{hier}[H,u] & := \sum_{p_1+p_2=p\atop p_1+k_2=k} \big(|L\HN^{00} |_{p_1-1} + t^{-1}|r-t| | L H |_{p_1-1} \big) \, |\del\del u|_{p_2,k_2},
\\
T_{p,k}^\easy[H,u]
& :=\sum_{p_1+p_2=p\atop k_1+k_2=k} \big(|\del \HN^{00} |_{p_1-1,k_1} + t^{-1}|r-t| |\del H|_{p_1-1,k_1} \big) \, |\del\del u|_{p_2,k_2}, 
\\ 
T_{p,k}^{\textbf{super}}[H,u]
 & := t^{-1}|H||\del u|_p + t^{-1} \hskip-.3cm \sum_{0\leq p_1\leq p-1}|H|_{p_1+1}|\del u|_{p-p_1}. 
\endaligned
\end{equation}
\end{subequations} 
	
{\bf 2. Estimate in the Euclidean-merging domain.} For any function $u$ defined in the Euclidean-merging domain, one has (with $\LOmega= L_a$ or $\Omega_{ab}$) 
\begin{equation}\label{eq5-12-02-2020-new}
\aligned
|[Z,H^{\alpha\beta} \del_\alpha \del_{\beta}]u|
& \lesssim   
|H| \, | \del\del u|_{p-1,k-1} 
\\
& \quad 
+ \hskip-.6cm
\sum_{p_1+p_2=p\atop p_1+k_2=k  \text{ with } k_1=p_1}  
\hskip-.6cm 
| \LOmega H|_{p_1-1} |\del \del u|_{p_2,k_2}
+ \hskip-.3cm 
\sum_{p_1+p_2=p\atop k_1+k_2=k} \hskip-.3cm 
|\del H|_{p_1-1,k_1} |\del\del u|_{p_2,k_2}
\quad \text{ in } \MME_{[s_0,s_1]}. 
\endaligned
\end{equation}
\end{proposition}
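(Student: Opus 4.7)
The plan is to combine a Leibniz expansion of $Z(H^{\alpha\beta}\del_\alpha\del_\beta u)$ with a semi-null decomposition of the quasilinear symbol, carefully tracking order and rank across every intermediate rewriting. First I would write $Z$ as an ordered product of admissible fields and apply the Leibniz rule to obtain
$$
Z(H^{\alpha\beta}\del_\alpha\del_\beta u) = \sum_{Z = Z_1 Z_2} C_{Z_1,Z_2}\,(Z_1 H^{\alpha\beta})\,(Z_2\del_\alpha\del_\beta u).
$$
Subtracting $H^{\alpha\beta}\del_\alpha\del_\beta Zu$ leaves the sum over $Z_1\neq\mathrm{id}$ together with the residue $H^{\alpha\beta}[Z,\del_\alpha\del_\beta]u$. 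Using the constant-coefficient relations $[L_a,\del_\alpha]$ and $[\Omega_{ab},\del_\alpha]$, I would rewrite both $[Z,\del\del]$ and each $Z_2\del\del$ as linear combinations of $\del\del Z'$ with $|Z'|\leq|Z|-1$ (resp.\ $|Z_2|$); since each such commutation converts a rank-$1$ factor ($L$ or $\Omega$) into a rank-$0$ factor ($\del$), the rank of $Z'$ drops by at least one, which is precisely the origin of the $|\del\del u|_{p-1,k-1}$ bookkeeping in the principal term.

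Next, in the near-light cone region $\Mscr^\near_{\ell}$ I would invoke the semi-null frame together with the two key identities $\delsN_a = L_a/t + ((t-r)x^a/(rt))\del_t$ and $|[\delsN_a,\del_\beta]u|\lesssim (1/r)|\del u|$. Applied to a symmetric quantity $V_{\alpha\beta}$, these yield the schematic decomposition
$$
H^{\alpha\beta}V_{\alpha\beta} = \HN^{00}V_{00} + \mathcal{O}\bigl(|H|\cdot|V_\text{tan}|\bigr),
$$
where $V_\text{tan}$ denotes any entry involving at least one tangential frame vector $\delsN$. When $V_{\alpha\beta}=\del_\alpha\del_\beta Z'u$ is the commutator residue from Step~1, the tangential entries split into exactly two contributions: a factor $t^{-1}|r-t|$ in front of $|\del\del u|_{p-1,k-1}$ (from the $\del_t$ coefficient in the identity for $\delsN_a$), and a factor controlled by $t^{-1}|L\del u|_{p-1}+t^{-1}|\del u|_{p-1}$ (from the $L_a/t$ term and from $[\delsN,\del]$); the former completes the principal estimate $(|\HN^{00}|+t^{-1}|r-t||H|)|\del\del u|_{p-1,k-1}$ while the latter supplies the leading piece $t^{-1}|H||\del u|_p$ of $T^{\mathrm{super}}_{p,k}$.

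For the sum over $Z_1\neq\mathrm{id}$, I would then split according to whether the leftmost factor of $Z_1$ is a partial derivative or a Lorentz/rotation field. If it is a $\del$, extracting it yields $|\del H|_{p_1-1,k_1}$, and re-applying Step~2 to the accompanying $\del\del Z_2'u$ produces the $T^{\mathrm{easy}}$ combination $|\del\HN^{00}|_{p_1-1,k_1}+t^{-1}|r-t||\del H|_{p_1-1,k_1}$ subject to $k_1+k_2=k$. If it is an $L$ or $\Omega$, the ordering convention forces $Z_1$ to contain no $\del$ at all, so $\mathrm{rank}(Z_1)=p_1$ and $k_1=p_1$; extracting it delivers $|LH|_{p_1-1}$, followed by the null decomposition producing $|L\HN^{00}|_{p_1-1}+t^{-1}|r-t||LH|_{p_1-1}$, which is $T^{\mathrm{hier}}$. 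The higher-order tail $t^{-1}\sum_{p_1}|H|_{p_1+1}|\del u|_{p-p_1}$ of $T^{\mathrm{super}}_{p,k}$ appears when the $1/r$-commutator $[\delsN,\del]$ of Step~2 is iterated on a factor of $H$ that has already been differentiated. The Euclidean-merging estimate of Part~2 is obtained by repeating Steps~1 and~3 while skipping the null decomposition of Step~2: the principal term collapses to $|H||\del\del u|_{p-1,k-1}$, and $|\del H|_{p_1-1,k_1}$, $|\LOmega H|_{p_1-1}$ replace their null-enhanced counterparts exactly as in \eqref{eq5-12-02-2020-new}.

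The main technical obstacle is the simultaneous control of order and rank. The sharp rank relation $p_1+k_2=k$ in $T^{\mathrm{hier}}$ (rather than the weaker $k_1+k_2=k$ of $T^{\mathrm{easy}}$) requires that no partial derivative be implicitly generated when Step~2 is applied to the Lorentz-only case; this forces us to absorb the $t^{-1}|\del u|_p$-type remainders into $T^{\mathrm{super}}_{p,k}$ rather than letting them leak into the $|\del\del u|_{p-1,k-1}$ factor, and to iterate the null decomposition by induction on $p$ in such a way that no top-order term $|H||\del\del u|_{p,k}$ is ever generated, which would be catastrophic for the energy hierarchy exploited later in the paper.
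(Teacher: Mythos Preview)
The paper does not supply its own proof of this proposition: it is stated as ``an important consequence of \cite[Proposition~\ref{prop1-12-02-2020}]{PLF-YM-main}'' and the argument is deferred entirely to the companion work. So there is no in-paper proof to compare against.

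That said, your reconstruction is sound and follows the standard route one expects for such commutator hierarchies. The three structural ingredients you isolate --- the Leibniz split $Z(H\del\del u)=\sum (Z_1H)(Z_2\del\del u)$, the constant-coefficient commutator $[Z,\del\del]$ dropping rank by one, and the semi-null reduction $H^{\alpha\beta}\del_\alpha\del_\beta = \HN^{00}\del_t\del_t + (\text{tangential})$ combined with $\delsN_a = t^{-1}L_a + (t-r)x^a(rt)^{-1}\del_t$ --- are exactly what is needed, and your identification of \emph{why} the $p_1+k_2=k$ constraint in $T^{\textbf{hier}}$ differs from the $k_1+k_2=k$ constraint in $T^{\textbf{easy}}$ (namely, the ordering $\del^IL^J\Omega^K$ forces $Z_1$ to be pure $L^J\Omega^K$ once its leftmost factor is not a $\del$) is the crux of the hierarchy. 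The only point worth flagging is organizational rather than mathematical: in practice one typically first reorders $Z_2\del_\alpha\del_\beta$ into $\del_\alpha\del_\beta Z_2$ plus lower-rank corrections \emph{before} passing to the null frame, so that the null decomposition is applied uniformly to expressions of the form $(Z_1H^{\alpha\beta})\del_\alpha\del_\beta Z_2'u$; your plan does this implicitly but it helps to make the order of operations explicit to avoid double-counting the $t^{-1}$ remainders that feed $T^{\textbf{super}}$.
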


The above estimates will be applied to study the evolution of our high-order energy functionals. In \eqref{eq4-12-02-2020-second}, the term $T_{p,k}^\textbf{hier}[H,u]$ is the most challenging contribution to the change of the energy functional but involves terms at a {\sl lower rank}, that is, contains {\sl strictly fewer boosts or rotations}; this structure will allow us to formulate an induction argument on the rank. The term $T^\easy$ is easier since it contains the factor $\del H$ which enjoys (comparatively) good $L^2$ and pointwise decay, while $T^{\textbf{super}}$ contains a favorable $1/t$ factor. 


\paragraph{Functional inequalities.}

It is necessary to revisit the standard Sobolev inequalities and formulate them along the foliation of interest. We only select here two  results. 
In the inequalities below, recall that $\delsME{}^I$ denotes any $|I|$-order operator determined from the fields $\{\delsME_a\}_{a=1,2,3}$, while $\delsE{}^I$ denotes any a $|I|$-order operator determined from the fields $\{\del_a\}_{a=1,2,3}$.  

\begin{proposition}[Sobolev inequality in the Euclidean-merging domain]
\label{pro204-11-2-one-new}
Fix an exponent $\eta \geq 0$ and set $C(\eta) := 1 + \eta + \eta^2$.
For all sufficiently regular functions defined in $\Mscr_{[s_0,s_1]}$ with $2 \leq s_0 \leq s \leq s_1$, one has  
\begin{subequations}
\begin{equation}\label{ineq 2 sobolev-one-new}
r \, \crochet^\eta |u(t,x)| 
\lesssim 
C(\eta) \sum_{|I| + |J| \leq 2} \|\crochet^\eta  \delsME{}^I\Omega^J u\|_{L^2(\MME_s)}, 
\qquad (t,x)\in \MME_s, 
\end{equation}
\begin{equation}\label{ineq 1 sobolev-one-new}
r \, \crochet^\eta |u(t,x)| 
\lesssim
C(\eta) \sum_{|I| + |J| \leq 2} \|\crochet^\eta  \delsE{}^I\Omega^J u\|_{L^2( \Mext_s)},
\qquad 
(t,x)\in \Mext_s. 
\end{equation}
\end{subequations}
\end{proposition}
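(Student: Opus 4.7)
The plan is to establish this as a Klainerman--Sobolev-type estimate on the spacelike slice $\MME_s$, combining a 2D Sobolev embedding on the constant-$r$ 2-spheres inside the slice with a radial integration along the slice, and treating the weight $\crochet^\eta$ separately via the Leibniz rule. I focus on the first inequality; the second one (on $\Mext_s$ with $\delsE$ in place of $\delsME$) follows identically.

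\textbf{Reduction to the unweighted estimate.} First I would remove the weight. The rotations $\Omega_{ab}$ annihilate $\crochet = 1 + \aleph(r-t)$, since $\Omega_{ab}$ is tangent to the spheres $\{|x|=r\}$ at fixed $t$ and $\crochet$ depends only on $r-t$; hence $\Omega^J$ commutes with multiplication by $\crochet^\eta$. For the tangential fields, using $|\delsME_a(r-t)| \lesssim 1$ in $\MME_s$, $|\aleph'|, |\aleph''| \lesssim 1$, and $\crochet \geq 1$, one checks that $|\delsME^p \crochet^\eta| \lesssim (\eta+\eta^2)\,\crochet^\eta$ for $p \leq 2$. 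The Leibniz rule then yields
$$\|\delsME^I \Omega^J(\crochet^\eta u)\|_{L^2(\MME_s)} \lesssim C(\eta)\sum_{|I''|\leq |I|}\|\crochet^\eta \delsME^{I''}\Omega^J u\|_{L^2(\MME_s)}, \quad C(\eta) = 1 + \eta + \eta^2,$$
so it suffices to prove the unweighted estimate for $v := \crochet^\eta u$.

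\textbf{Angular Sobolev estimate.} Next I would fix $(t_0, x_0) \in \MME_s$ with $r_0 := |x_0|$. The slice is foliated by the 2-spheres $S_r := \MME_s \cap \{|x| = r\}$, each isometric to the Euclidean 2-sphere of radius $r$ since the slice projects trivially along rotation orbits. The rotations span the tangent bundle of $S_r$ with $|\Omega f|(y) = r\,|\nabla_{S_r} f|(y)$. Applying the standard 2D Sobolev embedding $H^2(S^2) \hookrightarrow L^\infty(S^2)$ to the rescaled function $\bar v(\omega) := v(t_0, r_0 \omega)$ on the unit 2-sphere and rescaling back produces
$$r_0^2\, |v(t_0, x_0)|^2 \lesssim \sum_{|J|\leq 2}\|\Omega^J v\|_{L^2(S_{r_0})}^2.$$

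\textbf{Radial integration along the slice.} For each $|J|\leq 2$, introduce $G_J(r) := \|\Omega^J v\|_{L^2(S_r)}^2$. A direct computation using the $r^2$-factor in the area element of $S_r$ gives
$$G_J'(r) = 2\,\langle \Omega^J v,\, \delsME_r \Omega^J v\rangle_{L^2(S_r)} + \tfrac{2}{r}G_J(r),$$
where $\delsME_r := (x^a/r)\delsME_a$ is the radial tangential derivative along the slice. By density of compactly supported smooth functions, $G_J(r) \to 0$ as $r \to \infty$, so $G_J(r_0) = -\int_{r_0}^\infty G_J'(r)\,dr$; Cauchy--Schwarz combined with $r_0 \geq 1$ then yields
$$G_J(r_0) \lesssim \|\Omega^J v\|_{L^2(\MME_s)}^2 + \|\delsME_r \Omega^J v\|_{L^2(\MME_s)}^2.$$
Combining with the angular estimate, summing over $|J|\leq 2$, taking square roots, and absorbing the radial contributions via $|\delsME_r \Omega^J v| \leq |\delsME \Omega^J v|$, yields the unweighted bound
$$r_0\, |v(t_0, x_0)| \lesssim \sum_{|I|+|J|\leq 2}\|\delsME^I \Omega^J v\|_{L^2(\MME_s)}.$$
Restoring the weight via the first step completes the proof.

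\textbf{Main obstacle.} The principal subtlety will be handling the merging region $\MM_s$, where the slice is curved and the tangential frame $\delsME_a = \delsEH_a = \partial_a + x^a\xi(s,r)(s^2+r^2)^{-1/2}\partial_t$ carries a nontrivial $\partial_t$-component. I would need to verify (i) that the constant-$|x|$ 2-spheres inside the slice remain isometric to Euclidean 2-spheres of radius $r$ (which is immediate since the slice projects trivially along rotation orbits), (ii) that the radial fundamental-theorem-of-calculus along the slice is valid in the chosen parametrization (which follows from $\partial_r \Time = r\xi(s,r)(s^2+r^2)^{-1/2}$), and (iii) that the angular Sobolev constants are uniform across the foliation. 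All of these reduce to uniform bounds on $\Time$ and $\xi$ established in~\cite{PLF-YM-main}, which guarantee that the induced metric on the slice is uniformly comparable to the Euclidean one. A secondary technicality is the precise bookkeeping of $\eta$-factors when distributing $\crochet^\eta$; the form $C(\eta) = 1 + \eta + \eta^2$ reflects that at most two tangential derivatives can fall on the weight, each producing a factor of $\eta$.
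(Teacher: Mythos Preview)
Your overall strategy---splitting off the weight via Leibniz, then combining an angular Sobolev estimate on the $2$-spheres with a radial integration along the slice---is natural, but as written it loses one derivative. After your angular step ($H^2(S^2)\hookrightarrow L^\infty$, costing two rotations) and your radial step (fundamental theorem of calculus, costing one $\delsME_r$), the terms you actually produce are $\|\Omega^J v\|_{L^2}$ and $\|\delsME_r\Omega^J v\|_{L^2}$ with $|J|\le 2$; the latter has $|I|+|J|=1+2=3$. Your final assertion that these are ``absorbed'' into $\sum_{|I|+|J|\le 2}\|\delsME^I\Omega^J v\|_{L^2}$ is therefore a miscount: the inequality $|\delsME_r\Omega^J v|\le |\delsME\Omega^J v|$ only rewrites the radial derivative as a tangential one and does not lower the total order.

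The paper does not give a proof here (it cites the companion work~\cite{PLF-YM-main}), but a standard route to the sharp count $|I|+|J|\le 2$ is to avoid the anisotropic ``angular then radial'' split altogether. Localize instead to the \emph{thin} spherical shell $\{r_0\le r\le r_0+1\}\cap\MME_s$, viewed as $[0,1]\times S^2$ with the product metric $dr^2+g_{S^2}$. On this fixed $3$-manifold (independent of $r_0$) the embedding $H^2\hookrightarrow L^\infty$ holds with a uniform constant; the relevant $H^2$-norm uses $\del_r$ and $\nabla_{S^2}\sim\Omega$ as unit-length vector fields, and the volume form is $dr\,d\omega=r^{-2}\,dx\sim r_0^{-2}\,dx$ on the shell. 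This yields directly
\[
|u(t_0,x_0)|^2 \;\lesssim\; r_0^{-2}\sum_{i+|J|\le 2}\big\|\del_r^{\,i}\Omega^J u\big\|_{L^2(dx,\,\text{shell})}^2,
\]
and multiplying by $r_0^2$, then expressing $\del_r=(x^a/r)\delsME_a$ in terms of the tangential frame, gives exactly the stated range $|I|+|J|\le 2$. Your weight-removal step and your remarks on the merging region are correct and carry over unchanged to this argument.
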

 
Next, we rely on the boosts $L_a = x^a\del_t+ t \, \del_a$ which are tangent to the hyperboloidal slices. 

\begin{proposition}[Sobolev inequality in the hyperboloidal domain]
\label{prop:glol-Soin-one-new}
For any function defined on a hypersurface $\MH_s$, the following estimate holds (in which $t^2 = s^2+ |x|^2$): 
\begin{equation}
\sup_{\MH_s} t^{3/2} \, |u(t,x)|
\lesssim 
\sum_{|J| \leq 2} \| L^J u\|_{L^2(\MH_s)}
\simeq 
\sum_{m=0,1,2} \| t^m (\slashed \del^\H)^m  u\|_{L^2(\MH_s)}.   
\end{equation}
\end{proposition}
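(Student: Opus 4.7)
My plan is to reduce the estimate to the classical three-dimensional Sobolev embedding $H^2(B_1)\hookrightarrow L^\infty(B_1)$ by a scale-by-scale rescaling adapted to the hyperboloidal geometry, after first reinterpreting the boosts $L_a$ as weighted tangential derivatives on the slice.

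\textbf{Step 1: boosts as tangential derivatives on $\MH_s$.} I parameterize $\MH_s$ by $x\in B(0,\rhoH(s))$ via $t=\sqrt{s^2+|x|^2}$ and set $\widetilde u(x) := u(t(x),x)$. The chain rule gives $\del_{x^a}\widetilde u = \bigl(\del_a+(x^a/t)\del_t\bigr)u|_{\MH_s}$, which is precisely the tangential derivative corresponding to $\slashed\del^\H$; I denote it $\widehat\del_a$. A direct computation then yields $L_a u|_{\MH_s} = t(x)\,\del_{x^a}\widetilde u$. At second order, commuting produces extra terms $x^a \widehat\del_b u$ whose modulus is bounded on the slice by $|x|\,|\widehat\del_b u|\le t|\widehat\del_b u|=|L_b u|$, so an induction on order establishes the pointwise equivalence $\sum_{|J|\le 2}|L^J u|^2 \simeq \sum_{m=0}^2 t^{2m}|\widehat\del^m \widetilde u|^2$ on $\MH_s$. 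Integrating proves the second $\simeq$ of the statement and reduces the sup-norm inequality to proving $t_0^{3/2}|\widetilde u(\bar x_0)|\lesssim \sum_{|I|\le 2}\|t^{|I|}\del_x^I \widetilde u\|_{L^2(dx)}$ for each $\bar x_0\in B(0,\rhoH(s))$.

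\textbf{Step 2: local rescaling and Sobolev embedding.} Fix $\bar x_0\in B(0,\rhoH(s))$ and set $t_0 := \sqrt{s^2+|\bar x_0|^2}$. Introduce the rescaled variable $y:=(x-\bar x_0)/t_0$ and the function $v(y):=\widetilde u(\bar x_0+t_0 y)$. A dichotomy on whether $|\bar x_0|\le s$ or $|\bar x_0|>s$ shows that on the ball $|y|\le \lambda_0$, for a small universal constant $\lambda_0$, one has $t(\bar x_0+t_0 y)\sim t_0$ uniformly. The classical Sobolev embedding in $\RR^3$ gives $|v(0)|\lesssim \sum_{|I|\le 2}\|\del_y^I v\|_{L^2(|y|\le \lambda_0)}$. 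Undoing the rescaling with Jacobian $dy = t_0^{-3}dx$, and using $\del_y^I v = t_0^{|I|}\,(\del_x^I\widetilde u)$, this becomes
$t_0^{3/2}|\widetilde u(\bar x_0)|\lesssim \sum_{|I|\le 2}t_0^{|I|}\,\|\del_x^I \widetilde u\|_{L^2(B(\bar x_0,\lambda_0 t_0))}$.
Since $t\sim t_0$ on this local ball, Step~1 bounds the right-hand side by $\sum_{|J|\le 2}\|L^J u\|_{L^2(\MH_s)}$, and taking the supremum over $\bar x_0$ concludes.

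\textbf{Main obstacle.} The principal technical point is the uniform comparability $t\sim t_0$ on the rescaled ball, together with ensuring the ball has nontrivial intersection with the parameterization domain when $\bar x_0$ sits near $|x|=\rhoH(s)$. The dichotomy on $|\bar x_0|$ versus $s$ handles the first; for the second one either shrinks $\lambda_0$ or extends $\widetilde u$ smoothly into a neighborhood inside the transition region $\MM_s$, which the foliation structure of Section~2 provides automatically. The remainder of the argument is routine bookkeeping of Jacobians and weights.
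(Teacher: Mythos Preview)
The paper does not give a proof of this proposition; it is recalled from \cite{PLF-YM-main} together with the other functional inequalities of Section~2. Your approach---pulling back to $x$-coordinates, recognizing $L_a$ as $t\,\del_{x^a}$ on the slice, and applying the standard $H^2(\RR^3)\hookrightarrow L^\infty$ embedding after a dilation by $t_0$---is exactly the classical argument (Klainerman 1985, H\"ormander), and Steps~1--2 are correct as written.

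The one point that needs repair is the boundary handling in your ``Main obstacle'' paragraph. Shrinking $\lambda_0$ does \emph{not} solve the problem uniformly: at a point $\bar x_0$ with $|\bar x_0|$ close to $\rhoH(s)=(s^2-1)/2$ one has $t_0\sim (s^2+1)/2$, so $B(\bar x_0,\lambda_0 t_0)$ spills out of $B(0,\rhoH(s))$ for \emph{every} fixed $\lambda_0>0$. Extending $\widetilde u$ into $\MM_s$ would bring the $L^2$ norm over $\MM_s$ into the right-hand side and thus changes the statement. The clean fix is to shift the ball inward: set $\bar x_0':=\bar x_0\bigl(1-\lambda_0 t_0/\max(|\bar x_0|,\lambda_0 t_0)\bigr)$ and work on $B(\bar x_0',\lambda_0 t_0)$. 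For $\lambda_0$ small (depending only on the lower bound $s\ge 2$) this ball is contained in $B(0,\rhoH(s))$, contains $\bar x_0$ (possibly on its closure), and one still has $t\sim t_0$ on it since $|x-\bar x_0|\le 2\lambda_0 t_0$ there. Since $H^2$ embeds into $C^0$ up to the closed ball, evaluating at $\bar x_0$ is legitimate, and the rest of your computation goes through unchanged.
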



\paragraph{Estimates based on the energy functional.}

We also have the following decay properties for the wave or Klein-Gordon equation; cf.~\cite[Propositions~\ref{eq3-15-05-2020}, \ref{lem 2 d-e-I}, and~\ref{lem 1 d-KG-e}]{PLF-YM-main}, respectively, together with~\cite[Proposition~\ref{17-08-2022-1}]{PLF-YM-main} (namely \eqref{eq1-18-08-2021}  therein). 

\begin{proposition}[Hardy-Poincar\'e inequality for high-order derivatives] 
\label{eq3-15-05-2020-new}
For any $\eta= 1/2 +\delta$ with $\delta>0$ and any sufficiently decaying function $u$ defined in $\Mscr_{[s_0,s_1]}$ and for all $s \in [s_0, s_1]$ one has 
\begin{equation}\label{17-08-2022-trois} 
\| \crochet^{-1 + \eta} |u|_{p,k}\|_{L^2(\MME_s)} 
\lesssim \big(1+\delta^{-1} \big) \, \Fenergy_\eta^{\ME,p,k}(s,u) + \Fenergy_{\eta}^{0}(s,u),
\end{equation}
\begin{equation}\label{eq1-18-08-2021-lambda1}
\| \crochet^{-1+\eta} \zeta|\LOmega u|_{k-1}\|_{L^2(\MME_s)}\lesssim  (1+\delta^{-1}) \, \Fenergy_{\eta}^{\ME,k}(s,u).
\end{equation}
\end{proposition}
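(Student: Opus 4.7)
The plan is to reduce both bounds to a scalar weighted Hardy inequality applied along radial rays in $\MME_s$, and then to bootstrap to high order by induction on the rank $k$ via the commutator hierarchy of Proposition~\ref{prop1-12-02-2020-new}.

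The baseline is the zero-order version: for any sufficiently decaying $v:\MME_s \to \RR$ and $\eta = \tfrac{1}{2} + \delta$ with $\delta > 0$,
\begin{equation*}
\|\crochet^{-1+\eta} v\|_{L^2(\MME_s)}
\lesssim (1+\delta^{-1}) \, \|\crochet^\eta\,\delsEH_r v\|_{L^2(\MME_s)} + \Fenergy_\eta^0(s,v).
\end{equation*}
I would prove this by fixing the angular variable and integrating by parts in $r$ on the exterior ray $r \geq \rhoH(s)$, using $\del_r \crochet^{2\delta} = 2\delta\,\aleph'(r-t)\,\crochet^{2\delta-1}$ (with $\aleph' \equiv 1$ for $r \geq t$) and then Cauchy--Schwarz followed by Young's inequality. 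The $\delta^{-1}$ factor originates from the $(2\delta)^{-1}$ prefactor in the integration-by-parts formula, while the hypothesis $\eta > 1/2$ is exactly what forces the boundary term at infinity to vanish once $v$ decays faster than $r^{-\delta}$. The inner boundary trace at $r = \rhoH(s)$ is absorbed in $\Fenergy_\eta^0$.

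For \eqref{17-08-2022-trois} I would apply the baseline to $v = Zu$ for every admissible $Z$ with $\ord(Z) \leq p$ and $\rank(Z) \leq k$, then expand $\delsEH_r(Zu) = Z(\delsEH_r u) + [\delsEH_r,Z]u$. The first piece is directly a contributor to the energy $\Fenergy_\eta^{\ME,p,k}(s,u)$, whereas the commutator, expanded via the standard commutation relations among $\{\del_\alpha, L_a, \Omega_{ab}\}$ and the frame identity $\delsEH_a = \del_a + x^a\xi(s,r)(s^2+r^2)^{-1/2}\del_t$, reduces to a finite sum of admissible operators of strictly smaller rank (still of order $\leq p$) applied to $u$. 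These terms are absorbed by the induction hypothesis at rank $k-1$, and the $(1+\delta^{-1})$ factor is introduced only once at the outermost induction step. For \eqref{eq1-18-08-2021-lambda1} I would apply the same scheme to $v = \LOmega Z'u$ with $Z'$ of rank $\leq k-1$, and exploit that in $\MH_s$ one has $\zeta L_a = \zeta(x^a\del_t + t\delsH_a) = (s/t)\,x^a\del_t + s\,\delsH_a$, so that the $\zeta$-weight exactly compensates the extra boost or rotation and the resulting quantity becomes an admissible contribution to $\Fenergy_\eta^{\ME,k}(s,u)$; on $\Mext_s$ the weight $\zeta$ is bounded by $1$ and the estimate is immediate from the baseline.

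The main obstacle will be the rank-hierarchy bookkeeping: each commutator $[\delsEH_r, Z]$ must strictly decrease the rank of the surviving admissible operator, and, crucially, the factor $\delta^{-1}$ must not compound through the induction. This is clear for pure translations, but in the transition region $\MM_s$ where $\delsEH_a$ mixes $\del_a$ and $\del_t$, one must track the coefficients $x^a\xi(s,r)(s^2+r^2)^{-1/2}$ carefully; the organization provided by the hierarchy estimate \eqref{eq5-12-02-2020-new} is essential here to prevent the constant from inflating into $\delta^{-p}$.
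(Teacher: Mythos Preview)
The paper does not prove this proposition here; it cites the companion work \cite{PLF-YM-main} (Propositions~\ref{eq3-15-05-2020} and~\ref{17-08-2022-1} therein). So there is no in-paper argument to compare against line by line, and your baseline idea---a one-dimensional radial Hardy inequality on $\MME_s$ producing the $(1+\delta^{-1})$ factor, with the inner trace at $r=\rhoH(s)$ supplying the $\Fenergy_\eta^0$ term---is exactly the expected mechanism.

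Where your sketch goes astray is in the passage to high order. You propose to expand $\delsEH_r(Zu)=Z(\delsEH_r u)+[\delsEH_r,Z]u$ and run an induction on the rank, worrying about $\delta^{-1}$ compounding to $\delta^{-p}$. This detour is unnecessary: the high-order functional $\Fenergy_\eta^{\ME,p,k}(s,u)$ is by definition a sum of the scalar energies $\Eenergy_\eta^{\ME}(s,Zu)$ over admissible $Z$ with $\ord(Z)\le p$, $\rank(Z)\le k$, and each such scalar energy \emph{directly} controls $\|\crochet^\eta\,\delsEH_a(Zu)\|_{L^2(\MME_s)}$. Hence one applies the baseline Hardy inequality to $v=Zu$ and reads off \eqref{17-08-2022-trois} with a single factor $(1+\delta^{-1})$, no commutation and no induction required. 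Your appeal to \eqref{eq5-12-02-2020-new} is also misplaced: that estimate concerns the quasi-linear commutator $[Z,H^{\alpha\beta}\del_\alpha\del_\beta]$ and has nothing to do with $[\delsEH_r,Z]$.

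For \eqref{eq1-18-08-2021-lambda1} your instinct about $\zeta L_a$ is in the right direction but underdeveloped. The point is not an identity for $\zeta L_a$ per se; rather, the extra $\zeta$ weight suppresses the inner boundary trace at $r=\rhoH(s)$ (where $\zeta\sim s/t$ is small), which is why $\Fenergy_\eta^0$ disappears from the right-hand side. One then uses that $Z'\LOmega$ has rank $\le k$ so that the tangential derivative $\delsEH_r(Z'\LOmega u)$ is controlled by $\Fenergy_\eta^{\ME,k}(s,u)$. Your sketch does not isolate this cancellation of the boundary term, which is the actual content distinguishing \eqref{eq1-18-08-2021-lambda1} from \eqref{17-08-2022-trois}.
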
 

\begin{proposition}[Sobolev decay for wave fields] 
\label{lem 2 d-e-I-new}
For all $\eta \in [0,1)$ and all functions $u$, one has (with $k \leq p$)
\begin{subequations}
\begin{equation}\label{eq 1 lem 2 d-e-I-new}
\big\| r  \, \crochet^\eta \, |\del u|_{p,k} \big\|_{L^\infty(\MME_s)}  
+ \big\|  r^{1+ \eta} \, | \delsN  u |_{p,k} \big\|_{L^\infty(\MME_s)} 
\lesssim (1-\eta)^{-1} \, \Fenergy_\eta^{\ME,p+3, k+3}(s,u)
\end{equation} 
and, for $1/2 < \eta = 1/2+\delta < 1$,
\begin{equation}\label{eq 1 lem 2 d-e-I-facile-new}
\|r \, \crochet^{-1+\eta}|u|_{N-2}\|_{L^\infty(\MME_s)} 
\lesssim \delta^{-1} \, \Fenergy_\eta^{\ME,N}(s,u) + \Fenergy_{\eta}^{0}(s,u). 
\end{equation}
\end{subequations}
\end{proposition}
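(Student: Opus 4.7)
The plan is to combine the Sobolev embedding on Euclidean-merging slices (Proposition~\ref{pro204-11-2-one-new}) with algebraic identities expressing the good derivative $\delsN_a$ in terms of admissible vector fields, and to finish the zero-order statement with the Hardy-Poincaré inequality of Proposition~\ref{eq3-15-05-2020-new}. In all cases, the strategy is to convert an $L^\infty(\MME_s)$-bound into a $\crochet^{\eta}$-weighted $L^2(\MME_s)$-bound for derivatives of $u$ of order at most $p+3$ and rank at most $k+3$, which is exactly the content of the energy functional $\Fenergy_\eta^{\ME,p+3,k+3}(s,u)$.

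For the bound on $\|r\crochet^\eta|\del u|_{p,k}\|_{L^\infty(\MME_s)}$, fix an ordered admissible operator $Z$ with $\ord(Z)=p$ and $\rank(Z)=k$ and apply \eqref{ineq 2 sobolev-one-new} to $\del_\alpha Z u$, which yields
\[
r\crochet^\eta|\del_\alpha Z u(t,x)| \lesssim C(\eta)\sum_{|I|+|J|\leq 2}\|\crochet^\eta\delsME^{I}\Omega^{J}\del_\alpha Z u\|_{L^2(\MME_s)}.
\]
Next, commute the factors $\delsME^{I}\Omega^{J}$ past $\del_\alpha Z$. The identities $\delsME_a=\del_a$ in $\Mext_s$ and $\delH_a=t^{-1}L_a$ in $\MH_s$ (with smooth interpolation in the merging region $\MM_s$) show that each resulting term is dominated pointwise by an admissible derivative of $\del u$ of order at most $p+3$ and rank at most $k+3$. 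Summing over $Z$ and invoking the definition of $\Fenergy_\eta^{\ME,p+3,k+3}$ delivers the announced bound for the $\del u$-term.

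For the good-derivative part $\|r^{1+\eta}|\delsN u|_{p,k}\|_{L^\infty(\MME_s)}$, the key algebraic identity is
\[
\delsN_a \, = \, \tfrac{1}{t}L_a \, + \, \tfrac{x^a(t-r)}{t r}\,\del_t,
\]
valid on all of $\MME_s$. In the near light-cone region $\MMEnear_s$, where $t\sim r$ and $\crochet\sim 1$, both summands carry the extra $r^{-1}$-factor that bridges the weights $r\crochet^\eta$ and $r^{1+\eta}$, and the estimate reduces to the previous case with order and rank incremented by one. In the far region $\MMEfar_s$, where $\crochet\sim r\sim r+t$, one has $r\crochet^\eta\sim r^{1+\eta}$ and the pointwise bound $|\delsN u|\lesssim|\del u|$ already yields the claim. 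The factor $(1-\eta)^{-1}$ enters through the radial comparison of the two weights $\crochet^\eta$ and $r^{1+\eta}$ across the transition zone, via integration of $\crochet^{\eta-1}$ from $\MMEnear_s$ outward; this is the only place where the behavior $\eta\to 1^-$ is critical.

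For the second estimate, apply \eqref{ineq 2 sobolev-one-new} once more to $Zu$ itself with $\ord(Z)\leq N-2$, but now with the weight $r\crochet^{-1+\eta}$ (note that $C(-1+\eta)$ remains bounded). After commutation of $\delsME^I\Omega^J$ with $Z$, the right-hand side is dominated by $\|\crochet^{-1+\eta}|u|_{N}\|_{L^2(\MME_s)}$, which is precisely the quantity appearing on the left of the Hardy-Poincaré inequality \eqref{17-08-2022-trois}. That inequality then produces the factor $\delta^{-1}=(\eta-1/2)^{-1}$ together with the background term $\Fenergy_\eta^0(s,u)$, completing the estimate.

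The main obstacle is the careful bookkeeping in the transition region $\MM_s$: there the vector fields $\delsME_a$ interpolate between $\delH_a$ and $\del_a$ and the weight $\crochet$ bridges a regime of bounded value near the light cone and one of polynomial growth in the far exterior. Uniformly extracting the extra $r^\eta$-decay for $\delsN u$ across these regimes, while simultaneously controlling the commutators so that no rank is lost and the clean factor $(1-\eta)^{-1}$ is produced, requires combining several algebraic identities without letting any weight escape the control of the energy functional.
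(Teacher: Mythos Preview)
The paper does not give a self-contained proof here; it simply cites \cite[Proposition~\ref{lem 2 d-e-I}]{PLF-YM-main}. Your overall strategy---apply the weighted Sobolev inequality \eqref{ineq 2 sobolev-one-new} on $\MME_s$, commute to reach admissible operators, and for \eqref{eq 1 lem 2 d-e-I-facile-new} feed the result into the Hardy--Poincar\'e inequality \eqref{17-08-2022-trois}---is indeed the route taken in the companion paper, and your identification of $\delsN_a = t^{-1}L_a + \tfrac{x^a(t-r)}{tr}\del_t$ as the source of the extra rank (hence $k+3$ rather than $k+2$) is correct.

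There is, however, a genuine gap in your sketch of \eqref{eq 1 lem 2 d-e-I-new}. The energy functional $\Fenergy_\eta^{\ME}$ controls $\|\crochet^\eta\zeta\,|\del u|\|_{L^2(\MME_s)}$ and $\|\crochet^\eta|\delts u|\|_{L^2(\MME_s)}$ (cf.\ \eqref{eq7a-03-05-2020-new}), with the weight $\zeta$ attached to the generic derivative. When you apply \eqref{ineq 2 sobolev-one-new} to $\del_\alpha Zu$, the terms with $|I|=0$ in the Sobolev sum produce $\|\crochet^\eta\Omega^J\del_\alpha Zu\|_{L^2(\MME_s)}$, which after commutation is $\|\crochet^\eta\del_\beta Z'u\|_{L^2(\MME_s)}$ \emph{without} the factor $\zeta$. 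In $\Mext_s$ one has $\zeta=1$ and there is no issue, but in $\MM_s$ the weight $\zeta$ can be as small as $s^{-1}$, so the energy does not directly bound this term. Your remark that $\delsME_a=\delH_a=t^{-1}L_a$ ``in $\MH_s$'' is not relevant here, since $\MME_s$ excludes the hyperboloidal region; in $\MM_s$ one has only $\delsME_a=\del_a+(x^a/r)(\del_rT)\del_t$, a combination of translations. Closing this gap requires either a sharper form of the Sobolev inequality adapted to the energy density (with $\zeta$ already built into the $|I|=0$ terms), or an argument specific to the thin merging shell; this is precisely the bookkeeping carried out in \cite{PLF-YM-main}, and it is where the constant $(1-\eta)^{-1}$ actually enters---not, as you suggest, from radially integrating $\crochet^{\eta-1}$ across a transition zone. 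Two smaller points: invoking \eqref{ineq 2 sobolev-one-new} with exponent $-1+\eta<0$ is outside its stated range $\eta\geq 0$ and needs a word (it does extend, since $\crochet^{-1+\eta}$ is bounded); and your identification of the merging region as ``the main obstacle'' is on target, but the sketch does not say how the obstacle is overcome.
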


\begin{proposition}[Pointwise decay of Klein-Gordon fields]
\label{lem 1 d-KG-e-new}
Given any exponents $\eta \in (0,1)$, for any  solution $v$ to $- \Box v + c^2 \, v = f$ defined in $\MME_{[s_0,s_1]}$ one has 
$$ 
c^2 \, |v|_{p,k} \lesssim 
\begin{cases}
r^{-2} \crochet^{1-\eta} \, \Fenergy_{\eta,c}^{\ME,p+4,k+4}(s,v) + |f|_{p,k}
&  \text{in }\Mnear_{[s_0,s_1]},
\\
r^{-1-\eta} \, \Fenergy_{\eta,c}^{\ME,p+2,k+2}(s,v)\quad 
& \text{in }\Mfar_{[s_0,s_1]}.
\end{cases}
$$
\end{proposition}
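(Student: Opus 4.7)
The plan is to combine the classical Klein-Gordon ``mass trick'' with the wave Sobolev decay and the exterior Sobolev inequality already at our disposal. Since the admissible Killing fields of Minkowski spacetime commute with the flat d'Alembertian, applying any admissible operator $Z$ of order $\le p$ and rank $\le k$ to $-\Box v+c^2v=f$ will yield the pointwise identity $c^2\,Zv=\Box Zv+Zf$, so that $c^2|v|_{p,k}\lesssim|\Box v|_{p,k}+|f|_{p,k}$. The entire problem then reduces to a pointwise bound on $|\Box v|_{p,k}$ with the claimed decay in each of the two regions.

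For the near region $\Mnear$, I would first decompose $\Box$ in the semi-null frame using $\partial_a=\delsN_a-(x^a/r)\partial_t$. A direct computation shows that the $\partial_t^2$ piece cancels upon summing over $a$, so that
\[
\Box u=\sum_a\delsN_a^2u-2\frac{x^a}{r}\delsN_a\partial_t u-\frac{2}{r}\partial_t u+O\Big(\frac{1}{r^2}\partial u\Big),
\]
that is, $\Box v$ is controlled by the ``good'' tangential derivatives $\delsN v$ and a $(1/r)\partial v$ correction. Applying the wave Sobolev decay of Proposition~\ref{lem 2 d-e-I-new} to each $\delsN v$ factor at one extra derivative order will yield the gain $r^{-1-\eta}$; combining two such factors and re-expressing the residual $\partial_t$ through the admissible Killing fields via the identity $(t^2-r^2)\partial_t=tS-x^aL_a$ will then produce an additional $r^{-1}$ gain. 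Collecting weights one obtains $|\Box v|_{p,k}\lesssim r^{-2}\crochet^{1-\eta}\,\Fenergy_\eta^{\ME,p+4,k+4}(s,v)$, where the $(p+4,k+4)$ count reflects the two Sobolev orders plus the two extra derivatives needed to extract $\delsN^2 v$, and the $\crochet^{1-\eta}$ weight comes from the fact that the scaling generator $S$ is not admissible and must itself be re-expressed, producing an extra $\crochet$ loss.

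For the far region $\Mfar$, where $r\ge 2t$ and hence $\crochet\sim r$, I would apply the exterior Sobolev inequality \eqref{ineq 2 sobolev-one-new} directly to $Zv$, giving
\[
r\,\crochet^\eta\,|v|_{p,k}(t,x)\lesssim\sum_{|I|+|J|\le 2}\|\crochet^\eta\,\delsME^I\Omega^J\,Zv\|_{L^2(\MME_s)}.
\]
Multiplying by $c^2$ and re-applying $c^2Z'v=\Box Z'v+Z'f$ on the right-hand side allows me to replace the $L^2$ norm of $c^2Z'v$ by a bound on $\Box Z'v$, which is controlled by the Klein-Gordon energy at order $(p+2,k+2)$ via the direct $L^2$ bound on $\partial^2$ combined with the mass-term control of $\|\crochet^\eta cv\|_{L^2}$. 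Using $\crochet\sim r$ in $\Mfar$ then converts the prefactor into $r^{-1-\eta}$, while the $f$ contribution is absorbed into the energy in this region.

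The hard part will be the near-region bookkeeping. The frame decomposition of $\Box$ unavoidably brings in the bad derivative $\partial_t$, which, in order to fit the admissible framework, must be exchanged for the Killing boosts $L_a$ and the scaling $S$. Since $S$ is not admissible in the present sense, each such substitution costs one unit of rank and introduces a correction term of weight $\crochet$; carefully balancing these substitutions against the two Sobolev orders and the two $\delsN$ derivatives is exactly what produces the precise $(p+4,k+4)$ derivative count and the weight $\crochet^{1-\eta}$ (rather than the ``cleaner'' $\crochet^{-\eta}$) in the stated bound.
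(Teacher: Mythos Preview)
The paper does not prove this proposition here; it is quoted from the companion work \cite{PLF-YM-main}. So I assess your proposal on its own merits.

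Your overall strategy for the near region is right: commute $Z$ through the flat equation to get $c^2|v|_{p,k}\lesssim|\Box v|_{p,k}+|f|_{p,k}$, decompose $\Box$ in the semi-null frame so that only $\delsN^2$, $\delsN\partial_t$ and $r^{-1}\partial_t$ survive, and then gain an extra $r^{-1}$ by trading a good derivative for a boost. But the identity you invoke, $(t^2-r^2)\partial_t=tS-x^aL_a$, is the wrong tool: it introduces the non-admissible scaling field $S$, and writing $S$ back in terms of translations costs you exactly the $r^{-1}$ you were trying to gain. The identity you actually need acts on $\delsN_a$, not on $\partial_t$: in $\Mnear$ one has $t\,\delsN_a=L_a-(x^a/r)(r-t)\partial_t$, which uses only admissible fields. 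Applied to the worst term it gives
\[
|\delsN\partial_t v|_{p,k}\lesssim r^{-1}|\partial v|_{p+1,k+1}+r^{-1}\crochet\,|\partial v|_{p+1,k},
\]
and feeding in the wave Sobolev bound $|\partial v|_{q,j}\lesssim r^{-1}\crochet^{-\eta}\Fenergy_{\eta,c}^{\ME,q+3,j+3}$ from Proposition~\ref{lem 2 d-e-I-new} yields precisely $r^{-2}\crochet^{1-\eta}\Fenergy_{\eta,c}^{\ME,p+4,k+4}$, with the $\crochet^{1-\eta}$ coming from the second term and the count $(p+4,k+4)$ from the first. The $\delsN^2$ and $r^{-1}\partial_t$ terms are handled the same way and give no worse contributions.

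Your far-region argument has a genuine gap. After Sobolev you propose to multiply by $c^2$ and use $c^2Z'v=\Box Z'v+Z'f$ on the $L^2$ side, then control $\|\crochet^\eta\Box Z'v\|_{L^2}$ by the energy. But $\|\crochet^\eta\Box Z'v\|_{L^2}$ involves two derivatives with a \emph{growing} weight and is not bounded by $\Fenergy_{\eta,c}^{\ME}$; moreover this route reintroduces $f$, which is absent from the stated far-region bound. The correct argument is simpler and bypasses the equation entirely: the Klein-Gordon energy already contains the mass term, so $c\,\|\crochet^\eta|v|_{q,j}\|_{L^2(\MME_s)}\lesssim\Fenergy_{\eta,c}^{\ME,q,j}(s,v)$, and Proposition~\ref{pro204-11-2-one-new} gives $c\,r\crochet^\eta|v|_{p,k}\lesssim\Fenergy_{\eta,c}^{\ME,p+2,k+2}$ directly. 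Since $\crochet\sim r$ in $\Mfar$, this is exactly $c^2|v|_{p,k}\lesssim r^{-1-\eta}\Fenergy_{\eta,c}^{\ME,p+2,k+2}$, with no appeal to $f$.
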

 

\section{Nonlinear stability of Einstein-Klein-Gordon spacetimes}

\subsection{The class of reference metrics}  
 
We introduce a class of reference spacetime metrics $(\RR^{3+1}_+, g^\star)$ which represent ``approximate solutions'' to Einstein's vacuum equations.
\begin{subequations} 
\label{equa-assume-reference} 
\begin{itemize}

\item The metric $g^\star = \gMink + h^\star$ is {\sl asymptotically Minkowski} in the sense that 
\begin{equation}\label{equa-31-12-20-new}    
|h^{\star}|_{N+2} + \la r+t\ra|\del h^{\star}|_{N+1} + \la r+t\ra^2|\del\del h^{\star}|_{N}
\leq 
 \epss \la r+t\ra^{-1},
\end{equation} 

\item This metric is {\sl almost Ricci flat} in the sense that  
\begin{equation}\label{eq4-09-05-2021-new}
|\wR^{\star}|_{N} + \la r-t \ra|\del\wR^{\star}|_{N-1}
\leq
\begin{cases}
\epss^2 \la r+t\ra^{-4} \quad &\text{ in } \MME_{[s_0, + \infty)},
\\
\epss \la r+t\ra^{-3} \quad &\text{ in } \MH_{[s_0, + \infty)}, 
\end{cases}
\end{equation}  
where $\wR^{\star}$ denotes the ``reduced'' Ricci curvature classically defined by removing suitable contractions of the Christoffel symbols. 

\item This metric satisfies the {\sl light-bending property,} that is, 
\begin{equation}\label{equa-bending-repeat-new} 
\inf_{\Mscr^\near_{\ell}} \big( r \, h^\star(\lbf,\lbf)\big)
\geq \epss, 
\end{equation}
in which $g^\star(\lbf,\lbf)$ is (linearly) equivalent to $- h_\star^{\N 00}$ and we recall that $\lbf = \del_t - (x^a/r)\del_a$ (as stated earlier). 

\end{itemize}

\end{subequations}

\noindent As a consequence of \eqref{eq4-09-05-2021-new} and \eqref{borne-sup-J}  and after introducing a small parameter $\delta>0$ and some decay exponent 
\begin{equation}\label{equa-paramet-repeat-new}
\kappa \in (1/2, 1), 
\end{equation} 
the following rough bound follows by integration: 
\begin{equation}\label{eq1-21-05-2021-new}
\|\crochet^\kappa J{\zeta}^{-1} \, |\wR^\star |_N \|_{L^2(\MME_s)} 
\leq 
R_{\err}^{\star}(s) 
= C_{R^{\star}}  \epss^2\delta^{-1/2} \,  s^{-1-\delta}, 
\end{equation}
in which and $C_{R^{\star}} >0$ is a constant determined by $N$ and 
\begin{equation}
\int_{s_0}^sR_{\err}^{\star}(s')
\, ds'
\leq C_{R^{\star}} \, \epss^2 \delta^{-3/2} s_0^{-\delta}. 
\end{equation}
For further motivations and results we also refer the reader to the more general theory in~\cite[Section~\ref{sec1-23-05-2021}]{PLF-YM-main}.  Furthermore, it is straightforward to check that the Minkowski-Schwarzschild reference metric $g_\glue^\star$ in the introduction satisfies all of the conditions above.


\subsection{The class of initial data sets} 

In the following $\delta\in(0,1)$ denotes a fixed exponent which should be small in comparison to a small multiple of $\kappa-1/2$, which arises as a critical value for the decay of the metric perturbation.  It will be interesting to keep track, in our estimates, of the most relevant parameters and constants. 
For the initial data we assume the following smallness conditions. 
\begin{itemize} 

\item In the hyperboloidal domain,  
we assume 
\begin{subequations}\label{eqs-int-new-init}
\begin{equation}\label{eqsa-int-new-init}
\Fenergy^{\Hcal, N-5}(s_0,u) + s_0^{-1/2} \, \Fenergy_c^{\Hcal,N-5}(s_0, \phi) \leq C_0 \eps s_0^{\delta}, 
\end{equation}
\begin{equation}\label{eqsb-int-new-init}
\Fenergy^{\Hcal,N-7}(s_0,u) + \Fenergy_c^{\Hcal,N-7}(s_0, \phi) \leq C_0 \eps s_0^{\delta}.
\end{equation}
\end{subequations}

\item In the Euclidean-merging domain, 
we assume 
\begin{subequations}\label{eqs1-14-01-2021-new-init} 
\begin{equation}\label{eq1-14-01-2021-new-init}
\Fenergy_{\kappa}^{\ME,N}(s_0,u) + s_0^{-1} \, \Fenergy_{\mu,c}^{\ME,N}(s_0, \phi) \leq C_0 \eps s_0^{\delta}, 
\end{equation}
\begin{equation}\label{eq2-14-01-2021-new-init}
\Fenergy_{\kappa}^{\ME,N-5}(s_0,u) +\Fenergy_{\mu,c}^{\ME,N-5}(s_0, \phi) \leq C_0 \eps s_0^{\delta}. 
\end{equation}
\end{subequations} 

\item 
Finally, we impose the linear light-bending condition on the {\sl total} initial data (defined as the sum of the reference data and the  perturbation)  (with $\ell \in (0, 1/2]$ being fixed): 
\begin{equation}\label{eq3'-27-05-2020-initial-new}
\inf_{\Mscr^\near_{\ell, [s_0, + \infty)}} \big( r \, g^\star(\lbf,\lbf) + r \, u_\init(\lbf,\lbf) \big) 
\geq \epss. 
\end{equation} 

\end{itemize} 

Concerning \eqref{eq3'-27-05-2020-initial-new}, we point out that the contributions due to the nonlinearities of the Einstein-matter equations on the key metric component $h^{\N 00}$ will turn out to be negligible in comparison with the background and initial linear contributions which are of amplitude $\epss/r$. As a consequence of our energy bounds, recalling \cite[\eqref{eq5-07-01-2022}]{PLF-YM-main} and applying the Klainerman-Sobolev inequality on the initial slice, we obtain
$$
|\del_t u(1,x)| + |\del_x u(1,x)|\lesssim C_0 \eps \la r\ra^{-3/2-\kappa}.  
$$
Integrating with the (spatial) radial derivative from spatial infinity, we obtain
$$
\la r\ra|\del_tu(1,x)| + |u(1,x)|\lesssim C_0 \eps \la r\ra^{-1/2-\kappa}.
$$
Recalling that $\kappa +1/2> 1$, we apply~\cite[Proposition \ref{prop1-21-12-2021}]{PLF-YM-main} (based on Kirchkoff formula) and obtain 
\begin{equation}\label{eq3-09-05-2021-new}
|u_{\init}|_{N-4} \lesssim C_0\eps (t+r+1)^{-1}
\quad \text{ in }  \Mscr_{[s_0, + \infty)}.
\end{equation} 
Here, the linear development $u_\init$ of the initial data $(u_{0 \alpha\beta},u_{1 \alpha\beta})$ is the solution of the (free, linear) wave equation with this initial data. (See also \cite[Proposition~\ref{prop1-12-01-2022}]{PLF-YM-main}.) The passage from geometric statements to statements in coordinates was already discussed in an appendix of~\cite{PLF-YM-main} to which the reader is referred. 


\subsection{Nonlinear stability statement} 

We distinguish between estimates at low- or high-order of differentiation, estimates within the hyperboloidal and Euclidean-merging domains, and a positivity condition near the light cone. The following energy bounds will be established at each time $s \geq s_0$. 

\begin{itemize} 

\item In the hyperboloidal domain,  
we have 
\begin{subequations}\label{eqs-int-new}
\begin{equation}\label{eqsa-int-new} 
\Fenergy^{\Hcal, N-5}(s,u) + s^{-1/2} \, \Fenergy_c^{\Hcal,N-5}(s, \phi) \leq (\epss + C_1\eps) \, s^{\delta},
\end{equation}
\begin{equation}\label{eqsb-int-new}
\Fenergy^{\Hcal,N-7}(s,u) + \Fenergy_c^{\Hcal,N-7}(s, \phi) \leq (\epss + C_1\eps) \, s^{\delta}. 
\end{equation}
\end{subequations}

\item In the Euclidean-merging domain, 
we have 
\begin{subequations}\label{eqs1-14-01-2021-new}  
\begin{equation}\label{eq1-14-01-2021-new}
\Fenergy_{\kappa}^{\ME,N}(s,u) + s^{-1} \, \Fenergy_{\mu,c}^{\ME,N}(s, \phi) \leq (\epss + C_1\eps) \, s^{\delta},
\end{equation}
\begin{equation}\label{eq2-14-01-2021-new}
\Fenergy_{\kappa}^{\ME,N-5}(s,u) +\Fenergy_{\mu,c}^{\ME,N-5}(s, \phi) \leq (\epss + C_1\eps) \, s^{\delta}. 
\end{equation}
\end{subequations} 
 \item  
Near the light cone, we have the light-bending condition
\begin{equation}\label{eq3-27-05-2020-new} 
\inf_{\Mscr^\near_{\ell, s}} 
(- r \, \hN{}^{00} )\geq 0.
\end{equation}

\end{itemize} 

   
\begin{theorem}[Nonlinear stability of self-gravitating Klein-Gordon fields. Perturbations of metrics with harmonic decay]
\label{theo-main-result-qualitative} 
A constant $C_\star>0$ being fixed, the following result holds for all sufficiently small $\eps, \epss$ satisfying $\eps\leq C_\star \epss$. Consider a reference metric $g^\star$, defined as a perturbation of the Minkowski metric satisfying  \eqref{equa-assume-reference}  for some decay exponent 
\begin{equation}
\kappa \in (1/2,1).
\end{equation}
Consider an initial data set $(u_0, u_1,\phi_0,\phi_1)$ satisfying the decay and regularity conditions \eqref{eqs-int-new-init}--\eqref{eq3'-27-05-2020-initial-new} at the initial time $s_0$  
for some parameters $(N,\mu,\eps)$ for a sufficiently large integer $N$ with  
\begin{equation}\label{eq2-10-04-2022-M-repeat}
\mu\in(3/4,1),
\qquad 
\kappa\leq \mu.  
\end{equation}
Then, the initial value problem for the Einstein-Klein-Gordon system in wave gauge \eqref{MainPDE-limit} admits a global in time solution $(g, \phi)$ defined for all $t \geq 1$ and $x \in \RR^3$. This solution satisfies the decay and regularity conditions \eqref{eqs-int-new}--\eqref{eq3-27-05-2020-new} for all times $s \geq s_0$ and remains close to the reference spacetime $(\RR_+^{3+1}, g^\star)$ (in the norms under consideration) and enjoys the following decay estimate
\begin{equation}\label{eq1-24-02-2022-M}
\inf_{\Mscr^\near_{\ell, s}} 
(- r \, \hN{}^{00} )\geq \epss/2,
\qquad 
\quad 
|h(\lbf,\lbf)| \lesssim {\epss \over t+ r+1}. 
\end{equation}  
\end{theorem}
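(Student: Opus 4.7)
The strategy is a bootstrap/continuity argument on the hyperboloidal parameter~$s$. By the classical Choquet-Bruhat local existence theorem for the gauge-reduced system~\eqref{MainPDE-limit}, the initial data set generates a unique maximal globally hyperbolic development on $[s_0,s_*)$ in wave gauge, and at $s=s_0$ the bounds~\eqref{eqs-int-new}--\eqref{eq3-27-05-2020-new} are satisfied with room to spare thanks to the initial conditions~\eqref{eqs-int-new-init}--\eqref{eq3'-27-05-2020-initial-new} and to the smallness $\eps \le C_\star \epss$. Denote by $s_\ast \le s_*$ the supremum of times up to which the bootstrap bounds all hold with the stated constant $C_1$. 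The plan is to show that on $[s_0,s_\ast]$ each of these inequalities can be strictly improved (the energy constant $C_1$ replaced by $C_1/2$, and the lower bound $0$ in~\eqref{eq3-27-05-2020-new} replaced by $\epss/2$); the standard continuation criterion then forces $s_\ast = s_* = +\infty$, yielding~\eqref{eq1-24-02-2022-M}.

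To convert the bootstrap energy control into $L^\infty$ control of the nonlinearities, I would apply the Sobolev inequalities of Propositions~\ref{pro204-11-2-one-new} (in $\MME_s$) and~\ref{prop:glol-Soin-one-new} (in $\MH_s$), together with the Hardy-Poincar\'e inequality of Proposition~\ref{eq3-15-05-2020-new} and the Sobolev-decay estimate of Proposition~\ref{lem 2 d-e-I-new} for the wave part. For the matter field, Proposition~\ref{lem 1 d-KG-e-new} converts the $\mu$-weighted Klein-Gordon energy into the necessary pointwise decay of $|\phi|_{p,k}$. These pointwise bounds feed back into the analysis of the quasilinear and semilinear source terms in~\eqref{MainPDE-limit}.

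The core step is to run the weighted energy identity~\eqref{prop energy-ici-exterior-new-equation} (and its hyperboloidal analogue) for $Zu$ and $Z\phi$, where $Z$ is an arbitrary ordered admissible operator of order $\leq N$, using the $\crochet^\kappa$-weight on the wave component and the $\crochet^\mu$-weight on the Klein-Gordon component. The commutator $[Z, H^{\alpha\beta}\del_\alpha\del_\beta]$ is controlled by Proposition~\ref{prop1-12-02-2020-new}; the delicate contribution is $T^{\textbf{hier}}$, which is only borderline integrable but is built from factors of \emph{strictly lower} rank than $Z$, so an induction on the rank $k$ from $k=0$ up to $k=\rank Z$ closes the estimate. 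The source $\Fbb$ in~\eqref{MainPDE-limit} splits into the null quadratic nonlinearities (handled by the pointwise bounds from the previous step), the Klein-Gordon coupling $\del_\alpha\phi \del_\beta \phi + U(\phi) g_{\alpha\beta}$ (absorbed using the $\mu$-weighted $\phi$-bounds and the coercive $c^2$-term in the Klein-Gordon energy), and the reduced reference Ricci $\wR^\star$ (controlled by~\eqref{eq1-21-05-2021-new}). Crucially, the boundary term $\Omega_{g,\eta,c}$ in~\eqref{eq7-08-05-2020-one-new} is coercive precisely when $-\HN^{00} \ge 0$ near the light cone, which is guaranteed by the bootstrap assumption~\eqref{eq3-27-05-2020-new} and is indispensable at the critical weight $\kappa > 1/2$.

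The main obstacle, and the essentially new analytical ingredient at the threshold $\lambda=1$, is to recover the sharp harmonic decay $|h(\lbf,\lbf)| \lesssim \epss/(t+r+1)$ and to strictly improve the light-bending condition to $-r\,\hN{}^{00} \ge \epss/2$. For this I would isolate the scalar wave equation satisfied by the null component $h^{\N 00}$ and apply the Kirchhoff representation formula directly at the $1/r$ level, with \emph{no} loss exponent $\theta$ (as was affordable for $\lambda < 1$ in~\cite{PLF-YM-main}). This demands sharp control of each source term: the reduced reference Ricci has the critical $\la r+t\ra^{-4}$ decay from~\eqref{eq4-09-05-2021-new} (and in fact vanishes outside a bounded region for the Schwarzschild-glued reference $g^\star_\glue$), the null quadratic nonlinearities decay like $r^{-2-2\kappa}$ thanks to the Lindblad-Rodnianski null structure preserved by $\Fbb$ in the $(\lbf,\lbf)$-contraction, and the remaining commutator contributions are controlled by the hierarchy bound of Proposition~\ref{prop1-12-02-2020-new}. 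Combining the linear free-data contribution (bounded by $C_0\eps/r$ via~\eqref{eq3-09-05-2021-new}) with the background contribution $\ge \epss/r$ from~\eqref{equa-bending-repeat-new} and a nonlinear remainder of quadratic size, we obtain $r\, g(\lbf,\lbf) \ge \epss - \Ocal(\eps) - \Ocal((\eps+\epss)^2)$ on $\Mscr^\near_{\ell}$, which closes strictly to $\ge \epss/2$ once $C_\star$ and $\epss$ are chosen small enough. This strict improvement, together with the improved energy bounds, closes the bootstrap.
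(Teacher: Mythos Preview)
Your bootstrap architecture is correct and matches the paper: continuity argument on $s$, Sobolev and Hardy--Poincar\'e to extract pointwise decay, weighted energy identity~\eqref{prop energy-ici-exterior-new-equation} with the rank-induction driven by $T^{\textbf{hier}}$, and a Kirchhoff-type argument for the null component. The broad outline is right.

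There is, however, a genuine gap in your description of the harmonic-decay step. You write that the source in the wave equation for $h^{\N00}$ decays like $r^{-2-2\kappa}$ ``thanks to the Lindblad--Rodnianski null structure preserved by $\Fbb$ in the $(\lbf,\lbf)$-contraction.'' The semilinear part $\Fbb$ is not the obstruction; the binding term is the \emph{quasilinear} contribution $h^{\N00}\del_t\del_t u$. Its decay is only $r^{-2+\delta}\crochet^{-1-\kappa}\cdot\bigl(r\,|h^{\N00}|_{N-4}\bigr)$ near the cone, so the Kirchhoff bound on $h^{\N00}$ depends on $h^{\N00}$ itself. The paper resolves this by (i) proving Hessian estimates $|\del\del u|_{N-4}\lesssim r^{-1}\crochet^{-1-\kappa}$ \emph{conditional} on the pointwise assumption $r\,|h^{\N00}|_{N-4}\le\eps_s$ (Proposition~\ref{prop1-04-06-2022}), and (ii) running a secondary continuity argument on that assumption (the $s^*$ in the proof of Proposition~\ref{prop1-10-06-2022}), together with a splitting into a ``good'' region $\{r\ge t-1+(\epss+C_1\eps)t^{1/2}\}$ (where Kirchhoff applies) and a thin ``bad'' strip (covered by radial integration of $\del h^{\N00}$). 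Without the Hessian step and the inner bootstrap, your Kirchhoff argument does not close at the $1/r$ level.

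A second omission is the route from the $h^{\N00}$ bound to the \emph{gradient} decay that the energy closure actually consumes. The paper does not get $|\del h|\lesssim r^{-1}\crochet^{-1/2-\delta/2}$ from Kirchhoff; it comes from a separate integration along $(\del_t-\del_r)$ characteristics for the ``good'' null components $\us^{\N}$ (Propositions~\ref{prop1-18-06-2022} and~\ref{prop1-23-07-2020-new}), which in turn \emph{requires} the sign $\HN^{00}\le 0$ just established. Only with this sharp $|\del\us|$ and $|\del\del\us|$ decay (Corollary~\ref{coro-5-aout-2022}) can the quasi-null terms $\Pbb^\star$ and the commutators be brought to the form $s^{-1}\Fenergy^{p,k}+s^{-1+C\eps}\Fenergy^{p,k-1}$ needed for the Gronwall-plus-rank-induction in Section~6. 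Your sketch jumps from the $h^{\N00}$ decay directly to the energy estimate without this intermediate mechanism.
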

 

\section{Consequences of the energy estimates} 
\label{sec1-23-05-2021-new}

\subsection{Bootstrap assumptions and basic estimates}
\label{subsec1-30-05-2020}

\paragraph{Bootstrap assumptions.}

This result will be established in the following sections by following the strategy already outlined in the introduction. From now on we assume that the local-in-time solution in fact extends over an interval $[s_0,s_1]$, and we assume that $[s_0,s_1]$ is the maximal interval of time within which \eqref{eqs-int-new}, \eqref{eqs1-14-01-2021-new} and \eqref{eq3-27-05-2020-new} hold for all $s \in [s_0, s_1]$ so that, by continuity, one of these conditions is an equality at the end time $s_1$. Our objective is to establish the following {\bf  improved estimates} for all $s \in [s_0, s_1]$:
\begin{subequations}\label{eqs'-int-new}
\begin{equation}\label{eqs'-int-new-a}
\Fenergy^{\Hcal, N-5}(s,u) + s^{-1/2} \, \Fenergy_c^{\Hcal,N-5}(s, \phi) \leq \frac{1}{2} (\epss + C_1\eps) \, s^{\delta},
\end{equation}
\begin{equation}
\Fenergy^{\Hcal,N-7}(s,u) + \Fenergy_c^{\Hcal,N-7}(s, \phi) \leq \frac{1}{2} (\epss + C_1\eps)\, s^{\delta},
\end{equation}
\end{subequations}
\begin{subequations}\label{eqs'-ext}
\begin{equation}\label{eq5'-03-05-2020-new}
\Fenergy_{\kappa}^{\ME,N}(s,u) + s^{-1} \, \Fenergy_{\mu,c}^{\ME,N}(s, \phi) \leq\frac{1}{2} (\epss + C_1\eps) \, s^{\delta},
\end{equation}
\begin{equation}\label{eq6'-03-05-2020-new}
\Fenergy_{\kappa}^{\ME,N-5}(s,u) +\Fenergy_{\mu,c}^{\ME,N-5}(s, \phi) \leq \frac{1}{2} (\epss + C_1\eps) \, s^{\delta},
\end{equation}
\end{subequations}
\begin{equation}\label{eq3'-27-05-2020-new}
\inf_{\Mscr^\near_{\ell, [s_0,s_1]}} (- r \, \hN{}^{00} ) 
\geq  {1\over 2} \epss.
\end{equation}
Observe that the factor $s^{-1/2}$ arises in \eqref{eqs'-int-new-a}, while $s^{-1}$ arises in \eqref{eq5'-03-05-2020-new}. In the following, our statements will be established under the assumptions on the reference metric and initial data and the bootstrap assumptions, as stated above. 


\paragraph{Energy-based estimates.}

The following $L^2$ estimates for the metric perturbation and the matter field are immediate from the bootstrap assumption \eqref{eq1-14-01-2021-new} and \eqref{eq2-14-01-2021-new} in the Euclidean-merging domain:
\begin{subequations}\label{eq7-03-05-2020-new}
\begin{equation}\label{eq7a-03-05-2020-new}
\| \crochet^\kappa \zeta \, | \del u|_N \|_{L^2(\MME_s)}
+
\| \crochet^\kappa |\delts u|_N\|_{L^2(\MME_s)} 
\lesssim (\epss + C_1\eps) \, s^{\delta}, 
\end{equation}
\begin{equation}\label{eq7b-03-05-2020-new}
\| \crochet^\mu \zeta \, |\del \phi|_p\|_{L^2(\MME_s)} + 
\| \crochet^\mu |\delts \phi|_p\|_{L^2(\MME_s)} +
\| \crochet^\mu |\phi|_p\|_{L^2(\MME_s)}
\lesssim
(\epss + C_1\eps) \, 
\begin{cases} 
s^{1+\delta}, \, & p=N,
\\
s^{\delta}, & p=N-5. 
\end{cases}
\end{equation}
\end{subequations}  
As a consequence of the weighted Poincar\'e inequality in Proposition~\ref{eq3-15-05-2020-new} we find 
\begin{equation}\label{eq1-12-05-2020-new}
\| \crochet^{-1+\kappa} |u|_{p,k} \|_{L^2(\MME_s)} 
\lesssim \delta^{-1} \, \Fenergy_{\kappa}^{\ME,p,k}(s,u) + \Fenergy_{\kappa}^{0}(s,u)
\lesssim \delta^{-1} (\epss + C_1\eps) \, s^{\delta},
\end{equation}
which provides us with a control of the metric components possibly without partial derivatives. 


\paragraph{Pointwise decay of the metric.} 

Basic sup-norm estimates are then derived by applying the generalized Sobolev estimate\footnote{We neglect a factor $1/(1-\kappa)$ since we have fixed $\kappa<1$.} in Proposition~\ref{lem 2 d-e-I-new}, and we thus control the metric perturbation $\del u$ at order $N-3$, as follows:  
\begin{equation}\label{eq10-02-05-2020-new}
\| r \, \crochet^\kappa \, |\del u|_{N-3} \|_{L^\infty(\MME_s)}
+\| r^{1+\kappa} |\delts u|_{N-3} \|_{L^\infty(\MME_s)}
\lesssim (\epss + C_1\eps)\, s^{\delta}. 
\end{equation}
In view of $h_{\alpha\beta} = h^{\star}_{\alpha\beta} + u_{\alpha\beta}$, by combining this result with the asymptotically Minkowski condition \eqref{equa-31-12-20-new} on the reference metric, for the metric unknown $\del h$ we obtain 
\begin{equation}\label{eq1-26-05-2021-new}
\| r \, \crochet^{\kappa}|\del h|_{N-3}  \|_{L^\infty(\MME_s)} + \| r^{1+\kappa} |\delsN h|_{N-3} \|_{L^\infty(\MME_s)}
\lesssim (\epss + C_1\eps) s^{\delta}.
\end{equation}
We apply the Sobolev decay for wave fields \eqref{eq 1 lem 2 d-e-I-facile-new}, together with \eqref{eqsa-int-new} and \eqref{eq1-14-01-2021-new}, and obtain 
\begin{equation}\label{eq7-15-05-2020-new}
\| r \, \crochet^{\kappa-1} \,  |u|_{N-2}  \|_{L^\infty(\MME_s)} 
\lesssim \delta^{-1} \, (\epss + C_1\eps) \, s^{\delta}
\end{equation}
and, using again \eqref{equa-31-12-20-new} enjoyed by the reference, 
\begin{equation}\label{eq1-09-05-2021-new}
\| r^{\kappa} |h|_{N-2}  \|_{L^\infty(\MME_s)} 
\lesssim 
\delta^{-1} (\epss + C_1\eps) s^{\delta}. 
\end{equation}


\paragraph{Pointwise decay of the matter field.}

The Sobolev decay inequality in~\cite[Proposition~\ref{lem 2 d-e-I}]{PLF-YM-main} 
and the bootstrap assumptions \eqref{eqs1-14-01-2021-new} provide us with sup-norm estimates 
\begin{equation}\label{eq11a-02-05-2020-new}      
\| r \, \crochet^\mu \, |\del \phi|_{p-3} \|_{L^\infty(\MME_s)}
+\|r^{1+\mu} |\delts \phi|_{p-3} \|_{L^\infty(\MME_s)}
\lesssim (\epss + C_1\eps)\, 
\begin{cases} 
s^{1+\delta},\, &p=N,
\\
s^{\delta}, \, & p=N-5,
\end{cases}
\end{equation}
and, thanks to the consequence  in~\cite[\eqref{eq decay-v-repeat000-two}]{PLF-YM-main} of our generalized Sobolev inequality,
\begin{equation}\label{eq1-18-05-2020-new}
\| r \, \crochet^\mu \, |\phi|_{p-2} \|_{L^\infty(\MME_s)}
\lesssim  
(\epss + C_1\eps) \, 
\begin{cases}
s^{1+\delta}, \quad
& p=N,
\\
s^{\delta}, & p=N-5.
\end{cases}
\end{equation}
However, within $\Mnear_s$ this is not sufficient for our purpose below and we establish a stronger decay, as follows.

\begin{lemma} 
\label{lemma-111-new} 
The matter field satisfies the pointwise bound 
$$
\| r^2 \, \crochet^{\mu-1} \, |\phi|_{p-4} \|_{L^\infty(\MME_s)}
\lesssim (\epss + C_1\eps)\,
\begin{cases}
 s^{1+2\delta},  \quad  
& p=N,
\\
s^{2\delta}, \quad 
&  p=N-5. 
\end{cases}
$$
\end{lemma}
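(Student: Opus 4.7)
The strategy is to apply the pointwise Klein-Gordon decay estimate of Proposition~\ref{lem 1 d-KG-e-new} directly to $\phi$. Starting from the matter equation $\BoxChapeau_g\phi = U'(\phi)$, the wave-gauge identity $\BoxChapeau_g = g^{\alpha\beta}\del_\alpha\del_\beta$, the splitting $H^{\alpha\beta} := g^{\alpha\beta} - g_\Mink^{\alpha\beta}$, and the expansion $U'(\phi) = c^2\phi + \Ocal(\phi^2)$ coming from \eqref{eq:Uofphi}, I would rewrite the Klein-Gordon equation as
\begin{equation*}
-\Box \phi + c^2 \phi = f, \qquad f := H^{\alpha\beta}\del_\alpha\del_\beta\phi - \bigl(U'(\phi) - c^2\phi\bigr),
\end{equation*}
where $\Box$ denotes the flat d'Alembertian. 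In the far region $\MME_s\cap\{r\geq 2t\}$, the second branch of the Proposition gives $|\phi|_{p-4}\lesssim r^{-1-\mu}\Fenergy_{\mu,c}^{\ME,p-2}(s,\phi)$, and since $\crochet\gtrsim r$ there, the weight $r^2\crochet^{\mu-1}$ produces a bound of at most $\Fenergy_{\mu,c}^{\ME,p-2}(s,\phi)$, which the bootstrap \eqref{eqs1-14-01-2021-new} controls by $(\epss+C_1\eps)s^{1+\delta}$ (respectively $s^\delta$). In the near region the first branch provides
\begin{equation*}
c^2|\phi|_{p-4} \lesssim r^{-2}\crochet^{1-\mu}\Fenergy_{\mu,c}^{\ME,p}(s,\phi) + |f|_{p-4},
\end{equation*}
so that after multiplication by $r^2\crochet^{\mu-1}$ the energy contribution is bounded via bootstrap by $(\epss+C_1\eps)s^{1+\delta}$ at top order and $(\epss+C_1\eps)s^\delta$ at low order --- leaving a margin of exactly one $s^\delta$ relative to the target.

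The remaining task is to establish $r^2\crochet^{\mu-1}|f|_{p-4}\lesssim (\epss+C_1\eps)s^{1+2\delta}$ (respectively $s^{2\delta}$). The semilinear piece $U'(\phi) - c^2\phi = \Ocal(\phi^2)$ is handled by Leibniz expansion combined with the $L^\infty$ bound \eqref{eq1-18-05-2020-new} applied to one factor, contributing $(\epss+C_1\eps)^2s^{\ldots}$, absorbed by smallness. The quasi-linear piece $H^{\alpha\beta}\del_\alpha\del_\beta\phi$ is the crux. A direct Leibniz expansion using the pointwise bound \eqref{eq1-09-05-2021-new} on $|H|_{N-2}$ together with \eqref{eq11a-02-05-2020-new} on $|\del\phi|_{N-3}$ (after a commutator identity converting $|\del\del\phi|_{p-4}$ into $|\del\phi|_{p-3}$) gives
\begin{equation*}
|H\del\del\phi|_{p-4}\lesssim r^{-1-\kappa}\crochet^{-\mu}\delta^{-1}(\epss+C_1\eps)^2 s^{1+2\delta},
\end{equation*}
which falls short of the target $r^{-2}\crochet^{1-\mu}(\epss+C_1\eps)s^{1+2\delta}$ precisely by a factor $r^{1-\kappa}\crochet^{-1}$.

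The main obstacle is this residual factor $r^{1-\kappa}\crochet^{-1}$, benign away from the light cone (where $\crochet\sim r$ supplies exactly the cancellation $r^{1-\kappa}\crochet^{-1}\lesssim r^{-\kappa}\lesssim 1$) but delicate in the near-light-cone subregion, where $\crochet\sim 1$ and $r\sim t\sim s^2$ so that $r^{1-\kappa}$ genuinely grows in $s$. The resolution is the null decomposition of the quasi-linear contraction: for all components of $H^{\alpha\beta}\del_\alpha\del_\beta\phi$ except $\HN^{00}\del_t^2\phi$, at least one derivative of $\phi$ is tangential, so that the improved pointwise decay $|\delsN\phi|_{N-3}\lesssim r^{-1-\mu}$ from \eqref{eq11a-02-05-2020-new} supplies an extra factor $r^{-\mu}$ which, since $\mu+\kappa>1$, more than absorbs $r^{1-\kappa}$. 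For the dangerous $\HN^{00}\del_t^2\phi$ term one either uses the Klein-Gordon equation itself to trade $\del_t^2\phi$ for $c^2\phi$, tangential derivatives, and $\Ocal(\phi^2)$ (all already controlled), or exploits the light-bending bootstrap \eqref{eq3'-27-05-2020-new} combined with \eqref{eq1-09-05-2021-new} to upgrade $|\HN^{00}|$ from $r^{-\kappa}$ to harmonic $r^{-1}$ decay, supplying the missing $r^{\kappa-1}$. Combining these structural gains with the application of Proposition~\ref{lem 1 d-KG-e-new} and the bootstrap concludes the argument.
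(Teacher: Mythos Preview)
Your overall strategy --- apply Proposition~\ref{lem 1 d-KG-e-new} and reduce to bounding $r^2\crochet^{\mu-1}|f|_{p-4}$ --- is exactly right, and your treatment of the far region and of the energy contribution is fine. The gap is in your handling of the quasi-linear piece $H^{\alpha\beta}\del_\alpha\del_\beta\phi$ near the light cone, specifically the $\HN^{00}\del_t^2\phi$ component. Neither of your two proposed fixes closes. Option~(a), trading $\del_t^2\phi$ via the Klein--Gordon equation, produces $\Delta\phi - c^2\phi + H\del\del\phi + \Ocal(\phi^2)$; but $\Delta\phi$ is \emph{not} tangential (in the null frame $\del_a = \delsN_a - (x^a/r)\del_t$, so $\Delta\phi$ still contains $\del_t^2\phi$), and the $c^2\phi$ piece, after multiplying by $|\HN^{00}|\lesssim r^{-\kappa}$, lands you back at the same deficit $r^{1-\kappa}\crochet^{-1}$. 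Option~(b) requires an \emph{upper} bound $|\HN^{00}|\lesssim r^{-1}$, but the light-bending bootstrap \eqref{eq3-27-05-2020-new} supplies only a \emph{sign} $-r\hN^{00}\geq 0$; the harmonic upper bound \eqref{eq5-10-06-2022} is established only later in the paper and is not available here.

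The paper's remedy is much simpler and avoids any null decomposition: split $h^{\mu\nu} = h^{\star\mu\nu} + u^{\mu\nu}$ and bound each piece separately. The reference part enjoys genuine harmonic decay $|h^{\star}|_{N+2}\lesssim \epss\la r+t\ra^{-1}$ from \eqref{equa-31-12-20-new}, yielding $|h^{\star\mu\nu}\del_\mu\del_\nu\phi|_{p-4}\lesssim \epss(\epss+C_1\eps)r^{-2}\crochet^{-\mu}s^{\ldots}$. For the perturbation part, instead of the crude bound \eqref{eq1-09-05-2021-new} (which gives only $|h|\lesssim r^{-\kappa}$), use the sharper Sobolev estimate \eqref{eq7-15-05-2020-new}, namely $|u|_{N-2}\lesssim \delta^{-1}(\epss+C_1\eps)\,r^{-1}\crochet^{1-\kappa}s^\delta$. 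Combined with $|\del\phi|_{p-3}\lesssim (\epss+C_1\eps)r^{-1}\crochet^{-\mu}s^{\ldots}$ from \eqref{eq11a-02-05-2020-new}, this gives $|u^{\mu\nu}\del_\mu\del_\nu\phi|_{p-4}\lesssim \delta^{-1}(\epss+C_1\eps)^2 r^{-2}\crochet^{1-\kappa-\mu}s^{\ldots}$, and after multiplying by $r^2\crochet^{\mu-1}$ you are left with $\crochet^{-\kappa}\lesssim 1$. The point you missed is that \eqref{eq7-15-05-2020-new} already trades $r^{-\kappa}$ for $r^{-1}\crochet^{1-\kappa}$ near the light cone, and this is precisely the gain $r^{1-\kappa}\crochet^{-1}\mapsto \crochet^{-\kappa}$ you were looking for.
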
 

\begin{proof} In view of the inequality \eqref{eq1-18-05-2020-new}, we want to ``trade'' a factor $\crochet$ for a factor $r$, and we only need to deal with the domain $\MMEnear_s$. We need the decay property near the light cone stated in Proposition~\ref{lem 1 d-KG-e-new}. We consider the Klein-Gordon equation $g^{\alpha\beta} \del_{\alpha} \del_{\beta} \phi - c^2 \phi = 0$ and with the notation therein, we set 
$$
f = h^{\mu\nu} \del_{\mu} \del_{\nu} \phi 
= h^{\star\mu\nu} \del_{\mu} \del_{\nu} \phi + u^{\mu\nu} \del_{\mu}\del_{\nu}\phi.
$$ 
For the first term $h^{\star\mu\nu} \del_{\mu} \del_{\nu} \phi$ above, the decay condition on $h^{\star}$ in \eqref{equa-31-12-20-new} yields us 
$$
| h^{\star\mu\nu} \del_{\mu} \del_{\nu} \phi|_{p-4} \lesssim 
\epss (\epss + C_1\eps) \, r^{-2} \crochet^{-\mu}
\begin{cases} 
s^{1+\delta}, \quad & p = N,
\\
s^{\delta}, \quad & p=N-5.
\end{cases}
$$
For the second term $u^{\mu\nu} \del_{\mu}\del_{\nu}\phi$, by~\cite[Lemma~\ref{lem-small}]{PLF-YM-main} we have $|u^{\mu\nu} |_p\lesssim |u|_p$ and, by recalling the Sobolev decay \eqref{eq7-15-05-2020-new} and \eqref{eq11a-02-05-2020-new}, we find  
$$
|u^{\mu\nu} \del_{\mu} \del_{\nu} \phi|_{p-4} \lesssim |u|_{p-4} |\del\phi|_{p-3} \lesssim 
\delta^{-1}(\epss + C_1\eps)^2 r^{-2} \crochet^{1-\kappa - \mu}
\begin{cases} 
s^{1+2\delta},  & \quad p=N,
\\
s^{2\delta},   & \quad p=N-5. 
\end{cases}
$$ 
On the other hand, recalling \eqref{eqs1-14-01-2021-new} we have 
$$ 
r^{-2} \crochet^{1-\mu} \, \Fenergy_{\mu,c}^{\ME,p,k}(s,\phi)\lesssim 
(\epss + C_1\eps) \, r^{-2}\crochet^{1-\mu}
\begin{cases} 
s^{1+\delta},\quad &p=N,
\\
s^{\delta},\quad &p=N-5.
\end{cases}
$$ 
We are thus in a position to use the general pointwise decay enjoyed by Klein-Gordon fields, as  stated \cite[Proposition~\ref{lem 1 d-KG-e}]{PLF-YM-main} and we arrive at the desired conclusion.
\end{proof}  


\subsection{Basic estimates for nonlinearities: energy norm} 
\label{sectionn33-new}

\paragraph{Improving the energy estimates.}

We differentiate the metric and matter evolution equations \eqref{MainPDE-limit} with respect to $Z=\del^IL^J\Omega^{K}$ 
(with $\ord(Z)=|I|+|J|+|K|\leq N$ or $\leq N-5$) and obtain 
\begin{subequations}
\label{eq11-15-05-2020-ab-new}
\begin{equation}\label{eq11-15-05-2020-new}
\aligned
\Boxt_g Z u 
& =   -[Z,h^{\mu\nu} \del_{\mu} \del_{\nu}]u_{\alpha\beta} + Z \big( \Pbb_{\alpha\beta}^{\star}[u]\big) + Z \big( \Qbb_{\alpha\beta}^{\star}[u] \big) 
\\
& \quad + Z \Big( \Ibb^{\star}_{\alpha\beta}[u] + 2 \, \Rwave_{\alpha\beta}
- u^{\mu\nu} \del_{\mu} \del_{\nu} g^\star_{\alpha\beta}  \Big)
- 8\pi \, Z \big(2 \, T_{\alpha\beta} - Tg_{\alpha\beta} \big), 
\endaligned
\end{equation}
and
\begin{equation}\label{eq12-15-05-2020-new}
\Boxt_g Z \phi - c^2Z\phi = -[Z, h^{\mu\nu} \del_{\mu} \del_{\nu}]\phi,
\end{equation}
\end{subequations}
in which $\Ibb^{\star}[u]$ represent the reference-perturbation interaction terms presented in \cite[\eqref{eq1-06-02-2022}]{PLF-YM-main}. Our main task is to control 
$
\|J \, \zeta^{-1}  \crochet^{\kappa}|T|_N\|_{L^2(\MME_s)},
$ 
where $T$ represents any of the terms in the right-hand sides of \eqref{eq11-15-05-2020-ab-new}. 
Thanks to \eqref{lem1-22-05-2020-new} we have 
\begin{equation}\label{eq1-17-08-2021-new}
\|J \, \zeta^{-1}  \crochet^{\kappa}  |T|_N\|_{L^2(\MME_s)} \lesssim \| s \, \crochet^{\kappa} \zeta |T|_N\|_{L^2(\MME_s)}.
\end{equation}  


\paragraph{Linear-critical and super-critical nonlinearities.} 

We treat first the comparatively easier terms, that is, 
the reference-perturbation interaction terms $\Ibb^{\star}[u]$, 
the term $u^{\mu\nu} \del_{\mu} \del_{\nu} h^\star_{\alpha\beta}$,  
and the source terms associated with the scalar field. 
On the other hand, the null terms, the quasi-null term and the commutators require different arguments and will be the subject of later sections. We set 
\begin{equation}
\aligned
W^{\textbf{linear}} 
& :=  \Fbb_{\alpha\beta}(g^\star, g^\star;\del g^\star, \del u) + \Fbb_{\alpha\beta}(g^\star, g^\star;\del u, \del g^\star),
\\
W^\super & :=  \Fbb_{\alpha\beta}(u,g^\star;\del g^\star, \del g^\star) + \Fbb_{\alpha\beta}(g^\star,u;\del g^\star, \del g^\star)
\\
& \quad + \Bbb^\star_{\alpha\beta} [u] + \Cbb^{\star}_{\alpha\beta}[u]
- 8\pi \, (2 \, T_{\alpha\beta} - Tg_{\alpha\beta}) + 2 \, \Rwave_{\alpha\beta}.
\endaligned 
\end{equation}
It is a simple matter to check that  
\begin{equation}\label{eq2-02-06-2022-second}
\|J\,\zeta^{-1}\crochet^{\kappa}|W^{\textbf{linear}}|_{p,k}\|_{L^2(\MME_s)}
\lesssim \delta^{-1}(\epss+C_1\eps)^2s^{-1-\delta}, 
\end{equation}
while 
\begin{equation}\label{eq1-22-03-2021-second}
\|J \, \zeta^{-1}  \crochet^{\kappa}| W^\super |_{p,k}\|_{L^2(\MME_s)}
\lesssim  
\delta^{-1}(\epss + C_1\eps)^2s^{-1-\delta}.
\end{equation}
On the other hand, for super-critical terms involving the reference metric we find 
\begin{equation}
\aligned
& \|s\crochet^{\kappa}\zeta|\Fbb_{\alpha\beta}(u,g^\star;\del g^\star, \del g^\star)|_N\|_{L^2(\MME_s)} + \|s\crochet^{\kappa}\zeta|\Bbb^{\star}_{\alpha\beta}[u]|_N \|_{L^2(\MME_s)}
+\|s\crochet^{\kappa}\zeta|\Cbb^{\star}_{\alpha\beta}[u]|_N \|_{L^2(\MME_s)} 
\\
& \lesssim  
(\epss + C_1\eps)^2s^{-1-\delta},
\endaligned
\end{equation}
as well as for super-critical terms involving the matter field
$$
\| s \, \crochet^{\kappa} \zeta |2 \, T_{\alpha\beta} - Tg_{\alpha\beta} |_N\|_{L^2(\MME_s)} 
\lesssim  (\epss + C_1\eps)^2 s^{-1-\delta}. 
$$
We can now apply the Hardy-Poincar\'e inequality in Proposition~\ref{eq3-15-05-2020-new} and obtain 
\begin{equation}\label{eq5-17-06-2020-second}
\|s\crochet^{\kappa} \zeta \, | u \, \del \del h^{\star} |_N\|_{L^2(\MME_s)} 
\lesssim \delta^{-1}\epss (\epss + C_1\eps) s^{-3 + \delta} \lesssim \delta^{-1}(\epss + C_1\eps)^2 s^{-2}. 
\end{equation}


\subsection{Basic estimates for nonlinearities: pointwise norm}

\paragraph{Estimates for the metric.}

From \cite[\eqref{eq1-04-12-2020} and \eqref{eq11-04-06-2020}]{PLF-YM-main} we also recall  
\begin{equation}
\label{eq1-04-12-2020-second}
|\Ibb^{\star}_{\alpha\beta}[u]|_{N-4} + |u^{\mu\nu}\del_{\mu}\del_{\nu}g^{\star}|_{N-4} 
\lesssim { \delta^{-1}}(\epss + C_1\eps)^2 r^{-3}\crochet^{-\kappa}s^{2\delta}
\end{equation}
and  
\begin{equation}\label{eq11-04-06-2020-second}
|u^{\mu\nu} \del_{\mu} \del_{\nu}g^{\star}_{\alpha\beta}|_{N-3} \lesssim {\delta^{-1}} 
\epss (\epss + C_1\eps) \, r^{-4} \crochet^{1-\kappa}s^{\delta}. 
\end{equation}  


\paragraph{Estimates for the matter field.}

For the matter interaction terms, thanks to \eqref{eq11a-02-05-2020-new}, \eqref{eq1-18-05-2020-new}, and Lemma~\ref{lemma-111-new} we have  
\begin{equation}\label{eq1-28-11-2020-second}
\sum_{\alpha, \beta} |2 \, T_{\alpha\beta} - ( T_{\gamma\gamma} g^{\gamma\gamma} ) \, g_{\alpha\beta}|_{N-3} 
=: | \Tbb(\phi)|_{N-3}
\lesssim (\epss + C_1\eps)^2
r^{-3}\crochet^{1-2\mu} s^{1+3\delta}. 
\end{equation}
Here and from now on, we use the short-hand notation $\Tbb(\phi)$ for the matter term contributions $2 \, T_{\alpha\beta} - (T_{\gamma\gamma} g^{\gamma\gamma} ) \, g_{\alpha\beta}$. For the proof, we recall the expression of $T_{\alpha\beta}$ and, using the pointwise decay of Klein-Gordon fields in Proposition~\ref{lem 1 d-KG-e-new} obtain (see \cite[\eqref{eq11a-02-05-2020} and Lemma~\ref{lemma-111}]{PLF-YM-main}) 
\begin{equation}
|\del\phi\del\phi|_{N-3}
\lesssim |\del\phi|_{N-3}|\del\phi|_{[(N-3)/2]}
\lesssim 
(\epss + C_1\eps)^2 \, r^{-3}\crochet^{1-2\mu}s^{1+3\delta}.
\end{equation} 
The bound for $|\phi^2|_{N-3}$ is similar and we omit the details. Using the fact that $|h^{\alpha\beta}|_{N-3} \lesssim 1$,  we obtain \eqref{eq1-28-11-2020-second}. 


\paragraph{Bounds on null metric component.} 

The null component $g^{\N00}$ of the metric plays a special role in our analysis, and the decay of its gradient considered now. Thanks to the wave gauge condition as derived in \cite[ Lemma~\ref{lemma-12-04-2020}]{PLF-YM-main}, we have the decay property 
\begin{equation}\label{eq1-17-07-2020-second}
|\del g^{\N00}|_{N-3} \lesssim 
{ \delta^{-1}}\big(\epss + C_1\eps\big) r^{-1-\kappa}s^{\delta}
\qquad \text{ in } \MME_{[s_0,s_1]}.  
\end{equation}   
Indeed, in the right-hand side the inequality in \cite[Lemma~\ref{lemma-12-04-2020}]{PLF-YM-main} with $p=N-3$, namely 
$$ 
\aligned
|\del g^{\N00} |_{N-3} 
\lesssim  
& |\delsN h|_{N-3}  + r^{-1} |h|_{N-3} + \sum_{p_1+p_2 = N-3} |h|_{p_1} |\del h|_{p_2}.
\endaligned
$$ 
We need to substitute the Sobolev bounds \eqref{eq1-26-05-2021-new} on $|\del h|_{N-3}$ and $|\delts h|_{N-3}$, together with the bound for $|h|_{N-2}$ from \eqref{eq1-09-05-2021-new}. 


\section{Estimates based on the structure of the Einstein equations}
\subsection{Commutator and Hessian estimates for the metric}

\paragraph{Basic pointwise estimates.} 

Throughout, the bootstrap assumptions \eqref{eqs1-14-01-2021-new} and \eqref{eq3-27-05-2020-new} are assumed, and $(\epss + C_1\eps)$ is taken to be sufficiently small. So far all the estimates are similar to the ones we had derived in the general regime (but by taking $\lambda = 1$) in \cite{PLF-YM-main}. Now, we focus on the {\sl harmonic decay} estimates. From \cite[Sections~\ref{sectionN-7} and \ref{section-8-added}]{{PLF-YM-main}}, we recall the following result (extracted from Propositions~\ref{prop1-12-02-2020} and \ref{prop1-22-05-2020} therein). Observe that the inequalities now are simpler in comparison to the general $\lambda$ regime, due to the fact that now we have a ``good'' control of the component $H^{\N00}$. We first write  \cite[Proposition \ref{prop1-12-02-2020}]{PLF-YM-main} in a different form that will be more appropriate in the present paper, and next we state our improved estimate for the Hessian.

\begin{proposition}[Hessian for the wave equation near the light cone] 
\label{prop1-22-05-2020-second}
Consider the wave operator $\Boxt_g u=  g^{\alpha\beta} \del_\alpha \del_\beta u$ in which $g^{\alpha\beta} = g_\Mink^{\alpha\beta} + H^{\alpha\beta}$, and assume that, for some $\eps_1 \ll 1$, 
\begin{equation}\label{eq2-24-02-2022-M}
|H|\leq \eps_1 \quad \text{ in } \Mscr^{\near}_{\ell, [s_0,s_1]}, 
\end{equation}
\begin{equation}\label{eq1-15-02-2020-second}
\big|\HN^{00} \, \big|\leq \eps_1\frac{1+|r-t|}{r} \quad \text{ in } \Mscr^{\near}_{\ell, [s_0,s_1]}. 
\end{equation}
Then, for any function $u$  
one has 
\begin{equation}\label{eq3-15-02-2020-second}
\frac{1+|r-t|}{r} |\del\del u|
\lesssim |\Boxt_g u| + t^{-1} |\del u|_{1,1} 
\quad \text{ in } \Mscr^{\near}_{\ell, [s_0,s_1]}
\end{equation}
and, more generally at arbitrary order $(p,k)$, 
\begin{equation}\label{eq3-28-12-2020-second}
\frac{1+|r-t|}{r} |\del\del u|_{p,k} 
\lesssim |\Boxt_g u|_{p,k} + t^{-1} |\del u|_{p+1,k+1} 
+ T_{p,k}^\textbf{hier}[H,u] + T_{p,k}^\textbf{easy}[H,u] + T_{p,k}^\textbf{super}[H,u]
\quad \text{ in } \Mscr^{\near}_{\ell, [s_0,s_1]}, 
\end{equation} 
where the notation in \eqref{eq4-12-02-2020-second} is used. 
\end{proposition}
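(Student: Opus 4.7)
The plan is to decompose the Hessian in the semi-null frame and isolate the bad direction $\del_t\del_t u$ from the good tangential ones. The key algebraic step is to substitute $\del_a = \delsN_a - (x^a/r)\del_t$ throughout $\Boxt_g u = g^{\alpha\beta}\del_\alpha\del_\beta u$ and collect the coefficient of $\del_t^2 u$, producing
\[
g^{\N00}\,\del_t^2 u = \Boxt_g u - G,
\]
where $G$ gathers terms of the form $g^{\N0a}\delsN_a\del_t u$ and $g^{\N ab}\delsN_a\delsN_b u$, together with first-order contributions arising from $\delsN_a(x^b/r) = O(1/r)$. A direct computation shows that the Minkowski part of $g^{\N00}$ vanishes identically, so the coefficient of the bad derivative is precisely $\HN^{00}$, and the smallness hypothesis $|\HN^{00}| \leq \eps_1(1+|r-t|)/r$ applies to this coefficient.

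For the base case $(p,k)=(0,0)$, I will bound $G$ using the identity $t\,\delsN_a = L_a + x^a(t-r)/r\,\del_t$, which trades each $\delsN$-derivative for an admissible vector field up to an $O(|t-r|/t)$ correction that is controlled in $\Mscr^{\near}_{\ell,[s_0,s_1]}$. Combined with $|H|\leq\eps_1$, this yields $|G| \lesssim t^{-1}|\del u|_{1,1}$, and the displayed identity then gives $|\HN^{00}|\,|\del_t^2 u|\lesssim |\Boxt_g u| + t^{-1}|\del u|_{1,1}$. The weight $(1+|r-t|)/r$ on the left-hand side of the claim is calibrated precisely to match the scaling of the hypothesis on $|\HN^{00}|$, so that the rearrangement closes and produces the desired bound on $(1+|r-t|)/r\,|\del_t^2 u|$. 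The remaining Hessian components are treated by the same substitution: $\del_t\del_a u = \delsN_a\del_t u - (x^a/r)\del_t^2 u$ and $\del_a\del_b u = (x^ax^b/r^2)\del_t^2 u + (\text{terms involving }\delsN)$, so that each is dominated by $|\del_t^2 u|$ plus quantities already controlled by $t^{-1}|\del u|_{1,1}$.

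For the general order $(p,k)$, I will apply the ordered admissible operator $Z$ to the wave equation and write
\[
\Boxt_g(Zu) = Z\,\Boxt_g u - [Z,\,g^{\alpha\beta}\del_\alpha\del_\beta]u,
\]
then control the commutator using Proposition~\ref{prop1-12-02-2020-new}, which delivers the three contributions $T^{\textbf{hier}}_{p,k}[H,u] + T^{\easy}_{p,k}[H,u] + T^{\textbf{super}}_{p,k}[H,u]$ on the right-hand side. Applying the base-case estimate to $Zu$ then produces the higher-order bound at order $(p,k)$. The principal obstacle is closing the induction on the commutator hierarchy: the term $T^{\textbf{hier}}_{p,k}$ couples derivatives at strictly lower rank but possibly the same order, so the argument must proceed by an induction on the rank $k$ (with $p$ fixed), with each step absorbing the hierarchy contribution through the rank-$(k-1)$ estimate. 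This rank-hierarchy mechanism, which the introduction highlights as new relative to the $\lambda<1$ analysis of \cite{PLF-YM-main}, is precisely what renders the critical harmonic-decay regime $\lambda=1$ treated here tractable.
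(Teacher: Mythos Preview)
Your base-case argument has a genuine gap. You correctly note that the Minkowski part of $g^{\N00}$ vanishes, so the semi-null decomposition reads $\HN^{00}\,\del_t^2 u = \Boxt_g u - G$. But from this and an \emph{upper} bound $|\HN^{00}|\le\eps_1(1+|r-t|)/r$ one cannot conclude any bound on $(1+|r-t|)/r\,|\del_t^2 u|$; that would require a \emph{lower} bound on $|\HN^{00}|$. (Set $H=0$: your identity becomes $0=\Box u - G$, which gives no control whatsoever on $\del_t^2 u$, yet \eqref{eq3-15-02-2020-second} must still hold for the flat wave operator.) Relatedly, your claim $|G|\lesssim t^{-1}|\del u|_{1,1}$ is false: when you substitute $t\,\delsN_a = L_a + x^a(t-r)r^{-1}\del_t$ into $G = 2g^{\N0a}\delsN_a\del_t u + g^{\N ab}\delsN_a\delsN_b u + \cdots$, the $\del_t$-part of the substitution regenerates $\del_t^2 u$ with a Minkowski coefficient of size $\sim(t-r)/t$, which is not dominated by $t^{-1}|\del u|_{1,1}$ once $|t-r|$ is large.

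The weight $(1+|r-t|)/r$ on the left of \eqref{eq3-15-02-2020-second} in fact originates from this Minkowski contribution hidden inside your $G$, not from $\HN^{00}$. After moving the regenerated $\del_t^2 u$ terms to the left, the total coefficient becomes $\HN^{00} - (t-r)(t+r)t^{-2}$, whose flat part supplies the weight; the hypothesis on $|\HN^{00}|$ then serves only to guarantee that this flat coefficient is not spoiled by the perturbation. This is exactly how the paper proceeds: it imports the flat-background Hessian bound with commutator remainder from \cite{PLF-YM-main} (namely \eqref{eq3-28-12-2020-new}), applies Proposition~\ref{prop1-12-02-2020-new}, and then uses the smallness of $|\HN^{00}|+t^{-1}|r-t|\,|H|$ to \emph{absorb} the leading commutator term $\big(|\HN^{00}|+t^{-1}|r-t|\,|H|\big)|\del\del u|_{p-1,k-1}$ into the left-hand side --- an absorption, for which an upper bound is precisely what is needed. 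Note finally that no rank induction is required in this proposition: the absorbed term carries order $(p-1,k-1)$ and is directly dominated by the $(p,k)$ left-hand side, while $T^{\textbf{hier}}_{p,k}$, $T^{\textbf{easy}}_{p,k}$, $T^{\textbf{super}}_{p,k}$ simply remain on the right as stated in \eqref{eq3-28-12-2020-second}.
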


\begin{proof}[Sketch of the proof of Proposition~\ref{prop1-22-05-2020-second}] From \cite[Proposition~\ref{prop1-22-05-2020}]{PLF-YM-main} (and by 
recalling \eqref{eq3-28-12-2020} therein) we find
\begin{equation}\label{eq3-28-12-2020-new}
\frac{1+|r-t|}{r} |\del\del u|_{p,k} \lesssim |\Boxt_g u|_{p,k} + t^{-1} |\del u|_{p+1,k+1} 
+ \sum_{\ord(Z) \leq p \atop \rank(Z) \leq k}  \big| [Z,H^{\alpha\beta} \del_{\alpha} \del_{\beta}]u \big|,
\end{equation}
in which we control the last term thanks to the hierarchy property for commutators \eqref{eq4-12-02-2020-second}, namely
$$
\aligned
\frac{1+|r-t|}{r} |\del\del u|_{p,k} 
& \lesssim  |\Boxt_g u|_{p,k} + t^{-1} |\del u|_{p+1,k+1} + \big(|\HN^{00}| + t^{-1}|r-t| |H|\big) \, |\del\del u|_{p-1,k-1} 
\\
& \quad 
 + T_{p,k}^\textbf{hier}[H,u] + T_{p,k}^\textbf{easy}[H,u] + T_{p,k}^\textbf{super}[H,u].
\endaligned
$$
Thanks to our smallness assumptions \eqref{eq2-24-02-2022-M} and \eqref{eq1-15-02-2020-second} (for a sufficiently small $\eps_1$), the third term in the right-hand side above is absorbed in the left-hand side, and \eqref{eq3-28-12-2020-second} is established.
\end{proof}


We will also use the second part of \cite[Proposition~\ref{prop1-22-05-2020}]{PLF-YM-main}, restated in the following form. 

\begin{proposition}[Hessian for the wave equation away from the light cone]
\label{propo2-22-05-2020-new}
Suppose that, for some $\eps_1\ll \ell$, 
\begin{equation}\label{eq2-10-07-2022}
|H|\leq \eps_1 \quad \text{ in } \Mscr^{\far}_{\ell, [s_0,s_1]}. 
\end{equation}
Then, for any function $u$ the following pointwise inequality holds (where the commutator is bounded by \eqref{eq5-12-02-2020-new}): 
$$
\aligned
|\del\del u|_{p,k} & \lesssim   (1 + t \, \crochet^{-1}) \big(|\Boxt_g u|_{p,k} + t^{-1} |\del u|_{p+1,k+1}\big) 
+ \sum_{\ord(Z) \leq p\atop \rank(Z) \leq k} |[Z,H^{\alpha\beta} \del_\alpha \del_{\beta}]u|
\qquad \text{ in } \Mfar_{\ell,[s_0,s_1]}.
\endaligned
$$
\end{proposition}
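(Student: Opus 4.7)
The plan is to adapt the proof of Proposition~\ref{prop1-22-05-2020-second} to the region $\Mfar_{\ell,[s_0,s_1]}$, where the small factor $(1+|r-t|)/r$ on the left of that estimate is replaced by a uniform lower bound $|s^2/t^2|\gtrsim\ell$ (with $s^2:=t^2-r^2$) holding throughout $\Mfar_\ell\cap\MME$. The absorption of $H^{00}\del_t^2 u$ is then made possible by the hypothesis $\eps_1\ll\ell$, which plays the role that the light-bending condition played in the near-cone regime.

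For the base order $p=k=0$, I combine the decomposition $\Boxt_g u = \Box u + H^{\alpha\beta}\del_\alpha\del_\beta u$ with the clean identity $t\,\Box u = \sum_a L_a\del_a u - S\del_t u$ (where $S:=t\del_t + r\del_r$), and then use $t\,\del_t\del_a u = L_a\del_t u - x^a\del_t^2 u$ to rewrite the scaling term. This produces $\Box u = -(s^2/t^2)\,\del_t^2 u + R[u]$, with a remainder $R[u]$ consisting of admissible first-order derivatives and satisfying $|R[u]|\lesssim t^{-1}(1+r/t)|\del u|_{1,1}$. Combining yields
$$\bigl(s^2/t^2 - H^{00}\bigr)\del_t^2 u = -\Boxt_g u + 2H^{0a}\del_t\del_a u + H^{ab}\del_a\del_b u - R[u].$$
On $\Mfar_\ell\cap\MME$ the left coefficient has absolute value $\gtrsim\ell$ by the hypothesis on $H$, so I solve for $\del_t^2 u$ and re-express the mixed and spatial Hessians via the boost identities; the effective coefficient of $|\del\del u|$ on the right becomes $\lesssim\eps_1\,r^2/|s^2|\lesssim\eps_1/\ell\ll 1$ and is absorbed back. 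Since $t^2/|s^2|\sim 1+t\crochet^{-1}$ in this region, the zero-order version of the claimed inequality follows.

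For general $(p,k)$, I apply an ordered admissible operator $Z$ with $\ord(Z)\leq p$ and $\rank(Z)\leq k$; since admissible fields commute with the Minkowski wave operator, one has $\Boxt_g(Zu) = Z\Boxt_g u - [Z, H^{\alpha\beta}\del_\alpha\del_\beta]u$, and applying the zero-order estimate to $Zu$ yields the stated bound with the commutator term controlled exactly by~\eqref{eq5-12-02-2020-new}. The main technical point, relative to the near-cone setting, is the cancellation $(r/t)^2\cdot t^2/|s^2| = r^2/|s^2|\lesssim 1/\ell$ valid on $\Mfar_\ell\cap\MME$ precisely because $r-t\gtrsim\ell\,r$ there; without it, the unbounded boost factors $r/t$ on $\Mext\cap\Mfar_\ell$ would preclude closing the absorption step, which is why the strengthened hypothesis $\eps_1\ll\ell$ (rather than merely $\eps_1\ll 1$) is needed.
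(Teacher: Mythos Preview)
Your identification of the key mechanism---the hyperboloidal identity $\Box u = -(s^2/t^2)\,\del_t^2 u + R[u]$ with $|R[u]|\lesssim t^{-1}(1+r/t)|\del u|_{1,1}$, together with the absorption made possible by $\eps_1\,r^2/|s^2|\lesssim\eps_1/\ell\ll 1$---is correct, and it does yield the desired bound for $|\del_t^2 u|$. The passage from the bound on $|\del_t^2 u|$ to the full Hessian $|\del\del u|$, however, is where your argument has a genuine gap.

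You invoke only the boost identities $t\,\del_a=L_a-x^a\del_t$ to recover the remaining components; iterating these gives
\[
|\del_a\del_b u|\ \lesssim\ (1+r/t)^2\,|\del_t^2 u|\ +\ (1+r/t)\,t^{-1}|\del u|_{1,1},
\]
and after inserting your bound on $|\del_t^2 u|$ the coefficient of $|\del u|_{1,1}$ becomes of order $(r+t)^2/(t^2(r-t))$. In the part of $\Mfar_{\ell}$ where $r\gg t$ this is $\sim r/t^2$, which exceeds the target $(1+t\crochet^{-1})\,t^{-1}\sim t^{-1}$ by a factor $r/t$. (Relatedly, your claimed equivalence $t^2/|s^2|\sim 1+t\crochet^{-1}$ is only an upper bound; the reverse inequality fails precisely when $r\gg t$.) Thus the boost-only route does not close in $\Mfar_{\ell}$.

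The repair is to bring in the rotations $\Omega_{ab}$ as well, which are admissible of rank~$1$ and already contribute to $|\del u|_{1,1}$. Working at a point with $x=(r,0,0)$ (without loss of generality), the relations $\Omega_{1b}\del_c u = r\,\del_b\del_c u$ for $b\in\{2,3\}$ give all ``angular'' Hessian components with the good factor $r^{-1}$. The three remaining components $\del_t^2 u,\ \del_t\del_r u,\ \del_r^2 u$ are then controlled by combining the two boost relations $|r\,\del_t\del_\gamma u + t\,\del_r\del_\gamma u|\le|\del u|_{1,1}$ (for $\gamma=t,r$) with the wave equation $-\del_t^2 u+\del_r^2 u = \Box u + O(r^{-1}|\del u|_{1,1})$; solving this $3\times 3$ system yields exactly the factor $(1+t\crochet^{-1})$. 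The curved case and the higher-order step then proceed as you indicate.
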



\paragraph{Estimates on the Hessian of the metric perturbation.}

The arguments in \cite[Section~\ref{section---12}]{PLF-YM-main}  somewhat simplify since the coupling between the commutator and the Hessian contributions is no longer required at the leading order. For clarity in the presentation, we treat first the remaining terms in \eqref{eq3-28-12-2020-second}. 

\begin{lemma} Using the notation \eqref{eq4b-12-02-2020-second}, for all $p\leq N-4$ the metric perturbation satisfies 
\begin{equation}\label{eq1-29-05-2022}
\aligned
& |T^\textbf{hier}[H,u]|_{p,k} + |T^\easy[H,u]|_{p,k} +|T^{\super}[H,u]|_{p,k}
\\
& \lesssim  \big((\epss + C_1\eps) + r\crochet^{-1}|Lh^{\N00}|_{k-1}\big)\frac{\crochet}{r}|\del\del u|_{p-1,k-1}
  + \delta^{-1}(\epss + C_1\eps)^2\la r\ra^{-2}\crochet^{-\kappa}s^{2\delta} 
\quad \text{ in } \Mnear_{\ell,[s_0,s_1]}, 
\endaligned
\end{equation}  
while 
\begin{equation}\label{eq3-10-07-2022}
|[Z,H^{\alpha\beta} \del_\alpha \del_{\beta}]u|\lesssim (\epss + C_1\eps)|\del\del u|_{p,k},
\qquad 
 \ord(Z)\leq p,\ \rank(Z)\leq k
\quad \text{ in } \Mfar_{\ell,[s_0,s_1]}. 
\end{equation}
\end{lemma}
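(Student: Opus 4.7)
The plan is to handle the three building blocks $T^{\textbf{hier}}$, $T^{\easy}$, $T^{\super}$ separately in the near-light cone region, and to deduce the far-region estimate directly from the commutator bound \eqref{eq5-12-02-2020-new} via pointwise smallness. The key starting observation in $\Mnear_{\ell,[s_0,s_1]}$ is that $r\sim t$ together with $|r-t|\leq \crochet$ gives $t^{-1}|r-t|\lesssim \crochet/r$, so every coefficient of the form $t^{-1}|r-t|(\cdots)$ is automatically converted into the weight $\crochet/r$ appearing on the right-hand side of \eqref{eq1-29-05-2022}.

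I would first treat $T^{\super}$: the prefactor $t^{-1}\sim r^{-1}$ combined with the pointwise bound on $|H|_{p}$ from \eqref{eq1-09-05-2021-new} and the Sobolev decay \eqref{eq10-02-05-2020-new} for $|\del u|$ places the entire contribution inside the error term $\delta^{-1}(\epss+C_1\eps)^2 \la r\ra^{-2}\crochet^{-\kappa}s^{2\delta}$. For $T^{\easy}$, I would use the improved decay \eqref{eq1-17-07-2020-second} of $\del \HN^{00}$ and the bound \eqref{eq1-26-05-2021-new} on $\del H$; splitting on whether the index $p_1$ is small or large determines whether the summand contributes to the $(\epss+C_1\eps)\crochet/r$ prefactor of $|\del\del u|_{p-1,k-1}$ or, after transferring derivatives to $\del\del u$ via Sobolev decay, to the error term.

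The central step is $T^{\textbf{hier}}$. The index constraint $p_1+p_2=p$, $p_1+k_2=k$ forces $p_1\leq k$ and, for $p_1\geq 1$, one has both $p_2\leq p-1$ and $k_2\leq k-1$; this yields simultaneously $|\del\del u|_{p_2,k_2}\leq |\del\del u|_{p-1,k-1}$ and $|L\HN^{00}|_{p_1-1}\leq |Lh^{\N00}|_{k-1}$. This directly produces the extracted critical contribution $|Lh^{\N00}|_{k-1}|\del\del u|_{p-1,k-1}=(r\crochet^{-1}|Lh^{\N00}|_{k-1})(\crochet/r)|\del\del u|_{p-1,k-1}$, which must be left unprocessed, to be controlled by the ODE argument along outgoing cones in a later section. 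The companion summands carrying $t^{-1}|r-t||LH|_{p_1-1}$ are converted by the key observation into $(\crochet/r)|LH|_{p_1-1}|\del\del u|_{p_2,k_2}$; for small $p_1$ the pointwise control of $|LH|$ coming from \eqref{eq1-09-05-2021-new} and \eqref{eq1-26-05-2021-new} produces the $(\epss+C_1\eps)\crochet/r$ prefactor, while for large $p_1$ one transfers derivatives onto the low-order Sobolev-controlled $\del\del u$ factor and lands in the error piece.

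The far-region estimate \eqref{eq3-10-07-2022} reduces to \eqref{eq5-12-02-2020-new} in a much more direct way: every $H$-type factor (undifferentiated, $L$-differentiated, or $\del$-differentiated) is pointwise bounded by $(\epss+C_1\eps)$ using \eqref{eq1-26-05-2021-new} and \eqref{eq1-09-05-2021-new}, and the remaining $|\del\del u|$-type factors are collectively dominated by $|\del\del u|_{p,k}$. The main obstacle throughout is the careful rank/order bookkeeping in $T^{\textbf{hier}}$: the coupling $p_1+k_2=k$ must be exploited to simultaneously control order and rank in each product factor, and this is precisely where the hierarchy structure becomes essential at the harmonic level $\lambda=1$, explaining the novelty relative to the range $\lambda<1$ treated in \cite{PLF-YM-main}.
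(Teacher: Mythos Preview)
Your proposal is correct and follows essentially the same route as the paper: the paper's proof simply collects the pointwise bounds \eqref{eq1-30-05-2020-new} together with $|\del\del u|_{N-4}\lesssim |\del u|_{N-3}\lesssim (\epss+C_1\eps)r^{-1}\crochet^{-\kappa}s^{\delta}$ and substitutes them into the definitions \eqref{eq4b-12-02-2020-second}, which is exactly the strategy you outline term by term. Your index bookkeeping for $T^{\textbf{hier}}$ (using $p_1+k_2=k$ to force $p_1-1\le k-1$ and $k_2\le k-1$) and the near/far dichotomy are the same as in the paper, only spelled out in more detail; note that since $p\le N-4$ throughout, your ``small vs.\ large $p_1$'' split is in fact unnecessary here, as all indices fall in the Sobolev range.
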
 

\begin{proof} The result is analogous to the one in~\cite[Proposition \ref{Proposition12.1}]{PLF-YM-main}. In view of \eqref{equa-31-12-20-new}, \eqref{eq1-26-05-2021-new}, \eqref{eq7-15-05-2020-new} and \eqref{eq1-17-07-2020-second}, we have 
\begin{equation}\label{eq1-30-05-2020-new}
\aligned
|\HN^{00}|_k 
& \lesssim |h^{\N00}|_k,
&&
|\del \HN^{00}|_{N-3} 
\lesssim \delta^{-1}
\big(\epss + C_1\eps\big) r^{-1-\kappa}s^{\delta},
\\ 
|H|_{N-3}
&  
\lesssim { \delta^{-1}}(\epss + C_1 \eps) r^{-\kappa}s^{\delta},
\qquad \qquad 
&& 
|\del H|_{N-3}
\lesssim (\epss + C_1\eps)r^{-1}\crochet^{-\kappa}s^{\delta}.
\endaligned
\end{equation}
Observing that 
$$
|\del\del u|_{N-4} \lesssim |\del u|_{N-3} \lesssim (\epss + C_1\eps) r^{-1}\crochet^{-\kappa}s^{\delta}
$$
and substituting these bounds in \eqref{eq4b-12-02-2020-second}, we obtain the desired result \eqref{eq1-29-05-2022}. The estimate \eqref{eq3-10-07-2022} is established in the same manner and we omit the details.
\end{proof}

\begin{proposition}\label{prop1-04-06-2022} 
There exists a constant $\eps_s>0$, determined by $N$, such that provided 
\begin{equation}\label{eq1-30-05-2022}
|h^{\N00}|_{N-4}\leq \eps_sr^{-1}
\quad \text{ in } \MME_{[s_0,s_1]}, 
\end{equation}
then the metric perturbation satisfies 
\begin{equation}\label{eq2-04-06-2022}
\frac{\crochet}{r}|\del\del u|_{N-4} + |\del\delsN u|_{N-4} \lesssim (\epss + C_1\eps) r^{-2}\crochet^{-\kappa}s^{2\delta}
\quad \text{ in }\Mnear_{\ell,[s_0,s_1]},
\end{equation}
together with  
\begin{equation}\label{eq1-15-08-2021-new}
|\del\del u|_{N-4} \lesssim \ell^{-1}(\epss + C_1\eps) t^{-1}r^{-1} \crochet^{-\kappa}s^{2\delta}
\quad \text{ in }\Mfar_{\ell,[s_0,s_1]}. 
\end{equation}
\end{proposition}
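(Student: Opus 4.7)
The plan is to apply Proposition~\ref{prop1-22-05-2020-second} in the near-light cone domain $\Mnear_\ell$ and Proposition~\ref{propo2-22-05-2020-new} in the far domain $\Mfar_\ell$, taking $u = u_{\alpha\beta}$ and $H = h^{\alpha\beta}$. The hypothesis $|H|\leq\eps_1$ of each proposition follows from \eqref{eq1-09-05-2021-new} provided $\delta^{-1}(\epss+C_1\eps)$ is small. The sharper hypothesis \eqref{eq1-15-02-2020-second} on $\HN^{00}$ is precisely where the pointwise assumption \eqref{eq1-30-05-2022} is decisive: in $\Mnear_\ell$ one has $(1+|r-t|)/r \geq 1/r$, so $|\HN^{00}|_{N-4} \lesssim |h^{\N00}|_{N-4} \leq \eps_s/r \leq \eps_s(1+|r-t|)/r$, which suffices once $\eps_s$ is chosen below the constant $\eps_1$ of Proposition~\ref{prop1-22-05-2020-second}.

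In $\Mnear_\ell$, Proposition~\ref{prop1-22-05-2020-second} applied at each order $(N-4,k)$ yields
\[
\frac{\crochet}{r}|\del\del u|_{N-4,k} \lesssim |\Boxt_g u|_{N-4,k} + t^{-1}|\del u|_{N-3,k+1} + T^{\text{hier}}_{N-4,k}[H,u] + T^{\text{easy}}_{N-4,k}[H,u] + T^{\text{super}}_{N-4,k}[H,u].
\]
The source $|\Boxt_g u|_{N-4}$ is bounded using the evolution equation \eqref{eq11-15-05-2020-new}, summing: the interaction and $u^{\mu\nu}\del\del g^\star$ contributions via \eqref{eq1-04-12-2020-second}, the matter source via \eqref{eq1-28-11-2020-second}, the reference Ricci via \eqref{eq4-09-05-2021-new}, and the null and quasi-null nonlinearities $\Pbb+\Qbb$ via the sup-norm decay \eqref{eq1-26-05-2021-new}. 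The aggregate is of order $\delta^{-1}(\epss+C_1\eps)^2 r^{-2}\crochet^{-\kappa}s^{2\delta}$. The contribution $t^{-1}|\del u|_{N-3}$ is of the same order thanks to \eqref{eq1-26-05-2021-new} and $t\sim r$. The remaining commutator-type contributions are controlled by \eqref{eq1-29-05-2022}.

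The absorption of the hierarchy term is the heart of the argument. The prefactor in \eqref{eq1-29-05-2022} multiplying $\frac{\crochet}{r}|\del\del u|_{N-5,k-1}$ is $(\epss+C_1\eps) + r\crochet^{-1}|Lh^{\N00}|_{k-1}$. The first summand is small by the bootstrap; for the second one uses $|Lh^{\N00}|_{N-5}\leq |h^{\N00}|_{N-4}\leq \eps_s r^{-1}$ from \eqref{eq1-30-05-2022}, so this piece is bounded by $\eps_s\crochet^{-1}\leq \eps_s$. Using the monotonicity $|\del\del u|_{N-5,k-1}\leq |\del\del u|_{N-4,k}$, this contribution absorbs into the left-hand side provided $\epss+C_1\eps+\eps_s$ is small enough, yielding \eqref{eq2-04-06-2022} for the Hessian part. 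The companion bound $|\del\delsN u|_{N-4}$ follows from the null-frame identity: writing $\del_\mu(\delsN_a u) = \delsN_a(\del_\mu u) + \bigl(\del_\mu(x^a/r)\bigr)\del_t u$, we obtain $|\del\delsN u|_{N-4} \lesssim |\delsN\del u|_{N-4} + r^{-1}|\del u|_{N-4}$, and the first piece inherits precisely the $\crochet/r$ improvement built into the Hessian bound.

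In $\Mfar_\ell$, Proposition~\ref{propo2-22-05-2020-new} applies with prefactor $(1+t\crochet^{-1})$. Since $r-t \geq \ell r/(1-\ell)$ in $\Mfar_\ell$, we have $\crochet\gtrsim \ell r$, hence $(1+t\crochet^{-1})\lesssim \ell^{-1}$. Combined with the source estimate above and the commutator bound \eqref{eq3-10-07-2022} (whose small prefactor $\epss+C_1\eps$ absorbs exactly as before), this yields \eqref{eq1-15-08-2021-new}. The principal technical obstacle throughout is the smallness bookkeeping of the absorption step: without the pointwise assumption \eqref{eq1-30-05-2022} on $|h^{\N00}|$, the hierarchy prefactor $r\crochet^{-1}|Lh^{\N00}|$ cannot be shown small from the bootstrap alone, since no sharp bound at the $1/r$ level for $|Lh^{\N00}|$ is available at this stage of the proof.
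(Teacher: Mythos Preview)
Your proposal is correct and follows essentially the same route as the paper: apply Proposition~\ref{prop1-22-05-2020-second} in $\Mnear_\ell$ and Proposition~\ref{propo2-22-05-2020-new} in $\Mfar_\ell$, feed in the pointwise bound on $|\Boxt_g u|_{N-4}$ together with \eqref{eq10-02-05-2020-new} for $t^{-1}|\del u|_{N-3}$, use \eqref{eq1-29-05-2022} and \eqref{eq3-10-07-2022} for the commutator contributions, and absorb via the smallness of $(\epss+C_1\eps)+r\crochet^{-1}|Lh^{\N00}|_{k-1}$, which is exactly where \eqref{eq1-30-05-2022} enters. One small point: your justification for $|\del\delsN u|_{N-4}$ is slightly incomplete as written, since after reducing to $|\delsN\del u|_{N-4}$ you still need the identity $\delsN_a = r^{-1}L_a + r^{-1}(r-t)\del_a$ (equivalently, the content of the frame estimate the paper cites from the companion work) to convert $|\delsN\del u|_{N-4}$ into $r^{-1}|\del u|_{N-3} + \crochet r^{-1}|\del\del u|_{N-4}$ and thereby recover the $\crochet/r$ gain; your phrase ``inherits precisely the $\crochet/r$ improvement'' is correct but hides this step.
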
 

\begin{proof} We recall \cite[Lemma \ref{lemma-123}]{PLF-YM-main}, which we apply with the choice $\lambda$ replaced by $\kappa$ which is possible since $\kappa < 1$. We obtain
\begin{equation}
|\Boxt_g u|_{p,k} \lesssim (\epss + C_1\eps) \, r^{-2}\crochet^{-\kappa}s^{2\delta}
+ \epss r^{-3} 
\qquad \text{ in } \MME_{[s_0,s_1]},
\quad p\leq N-4. 
\end{equation}  
Then we substitute \eqref{eq1-29-05-2022} and \eqref{eq10-02-05-2020-new} (applied in $t^{-1}|\del u|_{p+1,k+1}$) into \eqref{eq3-28-12-2020-second}, and we obtain the bound for $|\del\del u|_{p,k}$, provided $(\epss + C_1\eps) + r\crochet^{-1}|Lh^{\N00}|_{k-1}$ is sufficiently small. The bound for $|\del\delsN u|_{p,k}$ is obtained thanks to~\cite[\eqref{eq1-06-02-2020}]{PLF-YM-main}. For \eqref{eq1-15-08-2021-new},  we apply \eqref{eq3-10-07-2022} and Proposition~\ref{propo2-22-05-2020-new} with $(\epss + C_1\eps)$ sufficiently small.
\end{proof}


\paragraph{Hessian of the null component.}

We also need the following pointwise estimate which is a direct application of the wave gauge condition \eqref{eq:wcooE}. 

\begin{lemma} 
\label{eq7-08-12-2020-new}
As a consequence of \eqref{eq2-04-06-2022} hold and the wave gauge condition one has   
$$
|\del\del h^{\N00}|_{N-4} \lesssim (\epss + C_1\eps)r^{-1-\kappa}\crochet^{-\kappa}s^{2\delta}
\qquad \text{ in }\Mnear_{[s_0,s_1]}.
$$
\end{lemma}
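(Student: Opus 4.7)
The strategy is to leverage the fact that the wave gauge condition singles out the null component $h^{\N 00}$, improving its derivative decay beyond what a generic metric component enjoys. Concretely, I would extend the first-derivative identity of \cite[Lemma~\ref{lemma-12-04-2020}]{PLF-YM-main} (invoked in~\eqref{eq1-17-07-2020-second}) to one higher order, obtaining a schematic pointwise bound of the form
\begin{equation*}
|\del\del h^{\N 00}|_{N-4} \;\lesssim\; |\del\delsN h|_{N-4} \;+\; r^{-1}|\del h|_{N-4} \;+\; r^{-2}|h|_{N-4} \;+\; |\del(h\cdot \del h)|_{N-4},
\end{equation*}
and then controlling each term by the target quantity $(\epss+C_1\eps)r^{-1-\kappa}\crochet^{-\kappa}s^{2\delta}$ using estimates already available from the bootstrap.

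\textbf{Deriving the identity.} The wave gauge $\Gamma^\lambda = 0$, after expanding the Christoffel symbols and contracting against the null direction $\lbf$, produces (as in~\cite[Lemma~\ref{lemma-12-04-2020}]{PLF-YM-main}) an algebraic relation expressing $\del h^{\N 00}$ as a sum of a tangential derivative $\delsN h$, a zero-order remainder $r^{-1}h$, and a quadratic term $h\cdot\del h$. Differentiating this relation once more, applying an admissible operator $Z$ with $\ord(Z)\leq N-4$, and commuting with $\delsN$ and the coefficient $r^{-1}$ (producing only standard lower-order errors already present on the right-hand side), one arrives at the schematic bound above.

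\textbf{Bounding each piece.} For $|\del\delsN h|_{N-4}$: splitting $h = h^\star + u$, the reference contribution is controlled by~\eqref{equa-31-12-20-new} with decay $\epss r^{-3}$, while the perturbation contribution uses the second term in~\eqref{eq2-04-06-2022}, yielding $\lesssim (\epss+C_1\eps)r^{-2}\crochet^{-\kappa}s^{2\delta}$; since $\kappa<1$ and $r\geq 1$ inside $\Mnear$, $r^{-2}\leq r^{-1-\kappa}$, so this piece is dominated by the target. For $r^{-1}|\del h|_{N-4}$: \eqref{eq1-26-05-2021-new} gives $\lesssim (\epss+C_1\eps)r^{-2}\crochet^{-\kappa}s^{\delta}$, again dominated by the target. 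For $r^{-2}|h|_{N-4}$: \eqref{eq1-09-05-2021-new} gives $\lesssim \delta^{-1}(\epss+C_1\eps)r^{-2-\kappa}s^{\delta}$, which is even better. For the quadratic $|\del(h\cdot\del h)|_{N-4}$: expanding via Leibniz and applying~\eqref{eq1-26-05-2021-new},~\eqref{eq1-09-05-2021-new}, and~\eqref{eq2-04-06-2022}, one acquires an $(\epss+C_1\eps)^2$ prefactor with decay strictly faster than the target.

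\textbf{Main obstacle.} The only delicate point is to check that the identity in the first step is genuinely \emph{algebraic} for $\del\del h^{\N 00}$, i.e.\ that differentiating the quadratic term $h\cdot\del h$ does not produce an uncontrolled $\del\del h^{\N 00}$ on the right-hand side. Since $|h|_{N-3}\ll 1$ by~\eqref{eq1-09-05-2021-new}, any such reappearance is multiplied by a small factor and can be absorbed to the left; this is the same absorption mechanism used in the proof of Proposition~\ref{prop1-22-05-2020-second}. All remaining work is careful bookkeeping of the exponents, making repeated use of the comparisons $r^{-2}\leq r^{-1-\kappa}$ and $\crochet\lesssim r$ valid throughout $\Mnear$.
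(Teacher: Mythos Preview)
Your proposal is correct and follows essentially the same route as the paper: the paper invokes the second-order wave-gauge identity from \cite[\eqref{lem2-08-12-2020}]{PLF-YM-main}, which is precisely the result of differentiating the first-order identity as you describe, and then substitutes \eqref{eq10-02-05-2020-new}, \eqref{eq1-09-05-2021-new}, and \eqref{eq2-04-06-2022} to control each term. One small simplification: your absorption step in the ``main obstacle'' paragraph is unnecessary, since the quadratic contribution $|h|\,|\del\del h|$ can be bounded directly by combining \eqref{eq1-09-05-2021-new} with the Hessian estimate \eqref{eq2-04-06-2022} (for the $u$-part) and \eqref{equa-31-12-20-new} (for the $h^\star$-part), giving an extra small factor $(\epss+C_1\eps)$ without any need to move terms to the left.
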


\begin{proof} We give only an outline the argument. Recalling~\cite[\eqref{lem2-08-12-2020}]{PLF-YM-main}, we have 
$$
\aligned
|\del_t\del_t h^{\N 00}|_{p,k}
& \lesssim  |\del\delsN h|_{p,k} + r^{-1}| \del h|_{p,k} 
+ \hskip-.3cm  \sum_{p_1+p_2=p\atop k_1+k_2=k}   \hskip-.1cm   
\big( |h|_{p_1,k_1}    |\del\del h|_{p_2,k_2} 
+ |\del h|_{p_1,k_1}|\del h|_{p_2,k_2} \big)
+  r^{-1} \hskip-.3cm \sum_{p_1+p_2=p\atop k_1+k_2=k}|\del h|_{p_1,k_1}|h|_{p_2,k_2}.
\endaligned
$$
By substituting \eqref{eq10-02-05-2020-new}, \eqref{eq1-09-05-2021-new}, and \eqref{eq2-04-06-2022}, the desired result holds under the condition $\delta^{-1}(\epss+C_1\eps)\lesssim 1$.
\end{proof}


\subsection{Asymptotically harmonic decay of the null metric component} 

\paragraph{Objective.}

We now establish the following result which is similar to~\cite[Proposition \ref{prop1-14-08-2021}]{PLF-YM-main}. Importantly, our bound here is {\sl sharper} than the one derived in  \cite{PLF-YM-main}, which was only concerned with a {\sl sub-harmonic decay.} 

\begin{proposition}\label{prop1-10-06-2022}
Under the bootstrap assumptions within $[s_0, s_1]$, 
 one has 
\begin{equation}\label{eq5-10-06-2022}
|h^{\N00}|_{N-4} \lesssim (\epss +C_1\eps)\la r\ra^{-1}
\quad \text{ in } \MME_{[s_0,s_1]}, 
\end{equation}
\begin{equation}\label{eq5-11-06-2022}
\inf_{\Mscr^\near_{\ell, [s_0,s_1]}} (- r \, \hN{}^{00} ) 
\geq  {1\over 2} \epss. 
\end{equation}
\end{proposition}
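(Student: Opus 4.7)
The plan is to apply a Kirchhoff-type integral representation to the null component $u^{\N00}$ of the metric perturbation, verifying that every source term in the associated wave equation decays sufficiently rapidly to produce precisely the $\epss/r$ behavior. Since $h = h^\star + u$ and the reference satisfies $|h^{\star}|_{N-4} \lesssim \epss/\la r+t\ra$ by \eqref{equa-31-12-20-new}, it suffices to handle $u^{\N00}$, and to split $u = u_\init + u_{\textrm{nl}}$, where $u_\init$ is the free linear development with bound $|u_\init|_{N-4} \lesssim C_0\eps/(t+r+1)$ from \eqref{eq3-09-05-2021-new}, and $u_{\textrm{nl}}$ has vanishing initial data and is driven by the nonlinear sources in \eqref{eq11-15-05-2020-new} after moving the quasilinear piece $H^{\mu\nu}\del_\mu\del_\nu u$ to the right-hand side. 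The whole argument closes only if the resulting Kirchhoff integrals deliver an $r^{-1}$ bound, and it is this sharpness (as opposed to the sub-harmonic rate of \cite{PLF-YM-main}) that is the novelty here.

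The core task is to verify, for each source $\Sigma_{\alpha\beta}$ contributing to the $\N00$ equation, a pointwise bound of the form $|\Sigma|_{N-4}\lesssim (\epss+C_1\eps)^2 \, r^{-3}\crochet^{-\kappa} s^{\alpha\delta}$, possibly with mild transitional losses between near- and far-light-cone regimes, so that the Kirchhoff formula applied at the $1/r$ level of decay reproduces $r^{-1}$. The matter source $\Tbb(\phi)$ is already bounded in this form by \eqref{eq1-28-11-2020-second}; the reference--perturbation cross terms by \eqref{eq1-04-12-2020-second}--\eqref{eq11-04-06-2020-second}; the reduced Ricci by \eqref{eq4-09-05-2021-new}. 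For the quasilinear piece $H^{\mu\nu}\del_\mu\del_\nu u$, the Hessian bounds of Proposition~\ref{prop1-04-06-2022} combined with the decay \eqref{eq1-30-05-2020-new} of $H$ provide the required rate. For the quasi-null contraction $\Pbb^\star[u]_{\N00}$ (the hardest term), the wave-gauge identity, together with \eqref{eq1-17-07-2020-second} and Lemma~\ref{eq7-08-12-2020-new}, is used to trade a bad $\del$-derivative of $h$ for a $\delsN$-derivative carrying the extra decay \eqref{eq1-26-05-2021-new}; the semilinear null form $\Qbb^\star$ contracts favourably in the $\N00$ direction by the classical null condition.

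For the lower bound \eqref{eq5-11-06-2022}, the initial light-bending hypothesis \eqref{eq3'-27-05-2020-initial-new} already gives $r\,g^\star(\lbf,\lbf) + r\, u_\init(\lbf,\lbf) \geq \epss$ throughout $\Mscr^\near_{\ell,[s_0,s_1]}$, while the nonlinear correction $r\,|u_{\textrm{nl}}^{\N00}|$ is bounded by $C(\epss+C_1\eps)^2$, which is at most $\epss/2$ once the smallness assumption $\eps\leq C_\star\epss$ is invoked with $\epss$ sufficiently small. Subtracting the reference contribution $-r\,h^{\star,\N00}$ (of definite sign by construction of $g^\star_\glue$, or equivalent for a general reference) yields \eqref{eq5-11-06-2022}.

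The main obstacle is the quasi-null contribution in $\Pbb^\star[u]_{\N00}$. The improvement from sub-harmonic to harmonic decay is only available once the wave gauge is exploited to replace one of its Hessian factors by a $\delsN h$ factor, and the latter relies on Lemma~\ref{eq7-08-12-2020-new} whose hypothesis is the Hessian estimate \eqref{eq2-04-06-2022} of Proposition~\ref{prop1-04-06-2022}, which in turn depends on the smallness of $|h^{\N00}|_{N-4}$ \emph{itself}. This creates a self-consistent loop that closes only because $(\epss+C_1\eps)$ is small, and because the hierarchy structure \eqref{eq1-29-05-2022} separates commutator contributions by rank, permitting an induction on $k$ that does not re-enter the harmonic-decay bootstrap at each step.
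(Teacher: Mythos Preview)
Your overall strategy---Kirchhoff representation for $u_{\source}$, pointwise source bounds, and closing a self-consistent loop for the quasilinear feedback---matches the paper's architecture. However, the proposal has two genuine gaps.

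\textbf{The missing $\Mgood/\Mbad$ split and radial integration.} The Kirchhoff formula of Proposition~\ref{Linfini wave} requires the past cone $\Lambda_{t,x}$ to lie entirely inside a region where the source bound holds. The paper therefore first treats $\Mfar_{\ell,[s_0,s_1]}$ (past-complete), then within $\Mnear_{\ell}$ introduces the decomposition \eqref{equa-def-good-bad-new}: only $\Mgood$ is past-complete and admits a direct Kirchhoff argument, because there one can trade $\crochet^{-8\delta}\lesssim(\epss+C_1\eps)^{-8\delta}r^{-4\delta}$ to force the source into Case~2 form. For the strip $\Mbad_{[s_0,s_1]}=\{t-1\le r\le t-1+(\epss+C_1\eps)t^{1/2}\}$, Kirchhoff cannot be applied (the past cone exits the Euclidean--merging domain), and the paper instead integrates $\delEH_r Zh^{\N00}$ radially from the boundary of $\Mgood$ using the gradient bound \eqref{eq1-17-07-2020-second}; the thinness $|\bar x|-|x|\lesssim(\epss+C_1\eps)t^{1/2}$ makes this integral harmless. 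Your proposal does not mention this mechanism; without it, the argument cannot reach points arbitrarily close to the light cone.

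\textbf{Misidentification of the critical term and the closing mechanism.} You single out $\Pbb^\star[u]_{\N00}$ as ``the hardest term'' requiring wave-gauge manipulation. In the paper's proof (Lemma~\ref{eq3-10-06-2022-lem}) the semilinear quantities $\Pbb^\star,\Qbb^\star,\Ibb^\star,\Tbb$ are already controlled by the earlier bounds \eqref{eq3-04-06-2022} at the rate $r^{-2-3\delta}\crochet^{-1+\delta}$, which is sub-critical for Kirchhoff. The genuinely dangerous term is the quasilinear feedback $h^{\N00}\del_t\del_tu$: it contributes $(\epss+C_1\eps)r^{-2+\delta}\crochet^{-1-\kappa}\bigl(r|h^{\N00}|_{N-4}\bigr)$, i.e.\ the unknown $r|h^{\N00}|_{N-4}$ reappears in its own source. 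The paper closes this not by an induction on rank, but by a nested bootstrap: define $s^*$ as the supremum of times where $|h^{\N00}|_{N-4}\le\eps_s r^{-1}$ holds (so that Proposition~\ref{prop1-04-06-2022} applies), run Kirchhoff on $\Mgood_{[s_0,s^*]}$ to obtain $r|h^{\N00}|_{N-4}\lesssim(\epss+C_1\eps)+\delta^{-1}(\epss+C_1\eps)^{1-8\delta}\sup(r|h^{\N00}|_{N-4})$, absorb the last term by smallness, then extend by continuity to $s^*=s_1$. Your final paragraph gestures at this loop but does not describe the absorption step, and the reference to ``induction on $k$'' is extraneous here (that induction appears later, in Section~\ref{section---53}, for $|\del\us|_{N-4,k}$).
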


\paragraph{Pointwise estimates.}

An important ingredient of the method is the derivation of pointwise estimates, as now stated. Given a triple of data $(f, u_0, u_1)$ we 
consider the solution $u=u(t,x)$ to  
\begin{equation}\label{eq7-28-12-2020-one}
\Box u = f, \qquad  
u(1,x) = u_0(x),\qquad  
\del_t u(1,x) = u_1(x), 
\qquad x \in \RR^3, 
\end{equation}
and we use the notation 
\begin{equation}
\aligned 
& u = \Box^{-1}[u_0,u_1,f],
\qquad 
\Box^{-1}_\init[u_0,u_1] := \Box^{-1}[u_0,u_1,0],
\qquad 
\Box^{-1}_\source[f] := \Box^{-1}[0,0,f]. 
\endaligned
\end{equation}
We consider the effect of a  decaying source, represented by the operator $\Box^{-1}_{\source}$.  
We denote by $\Lambda_{t,x} := \big\{ (\tau,y) \big/ \, t-\tau = |x-y|, \, 1\leq  \tau\leq t \big\}$ the past light cone associated with a point $(t,x)$. In the present paper, we will rely especially on the critical case below (and refer to \cite{PLF-YM-main} for additional details). 

\begin{proposition}[Sup-norm estimates for the wave equation]
\label{Linfini wave}
Consider a function $f$ satisfying, for some $\alpha_1, \alpha_2, \alpha_3$, 
\begin{equation}\label{eq1-27-12-2020-one}
|f(\tau,y)| \lesssim C_1 \, 
\tau^{\alpha_1}(\tau + |y| )^{\alpha_2} \big( 1 + | \tau - |y| | \big)^{\alpha_3}, 
\qquad 
(\tau,y) \in \Lambda_{t,x}. 
\end{equation} 

\begin{subequations}

\noindent {\bf Case 1 (typical).} When $\alpha_1 = -1+\upsilon$ and $\alpha_2 = -1-\nu$ and $\alpha_3=-1+\mu$ 
for some  
$\upsilon + \mu < \nu$ and $0<\mu,\nu,\upsilon\leq 1/2$,
one has 
\begin{equation}\label{eq5-24-12-2020-one}
|\Box^{-1}_\source[f](t,x)|
\lesssim
C_1 \, \big(\upsilon^{-1} + \mu^{-1} + |\mu-\nu|^{-1}\big) \, |\upsilon + \mu-\nu|^{-1} (t+r)^{-1}.
\end{equation} 

\vskip.15cm

\noindent{\bf Case 2 (sub-critical).} When $\alpha_1=0$ and $\alpha_2 = -2-\nu$ and $\alpha_3 = -1+\mu$ 
for some $0 < \nu, \mu \leq 1/2$, one has 
\begin{equation}\label{Linfini wave ineq-one}
|\Box^{-1}_\source[f](t,x)|
\lesssim C_1 \, 
\begin{cases} 
\mu^{-1}|\mu-\nu|^{-1} (t+r)^{-1}t^{\mu-\nu},\qquad  
&\mu>\nu,
\\
\mu^{-1} (t+r)^{-1} \ln (t+1),\quad  & \mu=\nu, 
\\
\mu^{-1}|\mu-\nu|^{-1} (t+r)^{-1}, & \mu<\nu.
\end{cases} 
\end{equation} 

\vskip.15cm

\noindent{\bf Case 3 (critical).} When $\alpha_1 = 0$ and $\alpha_2 = -2$ and $\alpha_3= -1-\mu$ for some 
$\mu\in (0,1/2)$, one has  
\begin{equation}\label{eq1-10-01-2021-one}
|\Box^{-1}_\source[f](t,x)| \lesssim C_1\,\mu^{-1} (t+r)^{-1}\Big(1 + \crochet^{-\mu}\ln\Big(\frac{t}{\crochet}\Big)\Big).
\end{equation} 

\vskip.15cm

\noindent{\bf  Case 4 (super-critical).} When $\alpha_1 = 0$ and $\alpha_2 = -2+\nu$ and $\alpha_3 = -1-\mu$ for some 
$0<\nu< \mu < 1/2$, one has  
\begin{equation}\label{eq1-10-01-2021-case4-one}
|\Box^{-1}_\source[f](t,x)| \lesssim
C_1 \, \big(|\mu-\nu|^{-1} + \mu^{-1}\nu^{-1}\crochet^{-\mu}t^{\nu}\big) (t+r)^{-1}.
\end{equation}  
\end{subequations}
\end{proposition}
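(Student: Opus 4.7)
The strategy is to split $h_{\alpha\beta} = h^\star_{\alpha\beta} + u_{\alpha\beta}$ and treat the two contributions separately. The reference part $h^{\star,\N 00}$ already satisfies $|h^{\star,\N 00}|_{N+2}\lesssim \epss/(t+r+1)$ directly from~\eqref{equa-31-12-20-new} and the light-bending hypothesis~\eqref{equa-bending-repeat-new} on the reference in $\Mscr^\near_\ell$. The whole work concerns $u^{\N 00}$. For each ordered admissible operator $Z$ with $\ord(Z)\leq N-4$, I would apply $Z$ to the first line of~\eqref{MainPDE-limit}, contract with the null frame to isolate the $\N 00$ slot, and rewrite the equation as a flat d'Alembertian $\Box (Zu)^{\N 00} = F_Z$ where the difference $(\BoxChapeau_g - \Box)(Zu)^{\N 00} = -H^{\mu\nu}\del_\mu\del_\nu (Zu)^{\N 00}$ is absorbed into $F_Z$. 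Then I write $(Zu)^{\N 00} = \Box^{-1}_\init[Zu_{0},Zu_{1}]^{\N 00} + \Box^{-1}_\source[F_Z]^{\N 00}$. The initial contribution is $\lesssim C_0\eps(t+r+1)^{-1}$ by~\eqref{eq3-09-05-2021-new}, which has exactly the target harmonic decay.

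The bulk of the proof is the source-driven part. I would list the terms constituting $F_Z$ and estimate each of them pointwise along the past light cone $\Lambda_{t,x}$, then feed the resulting decay rates into Proposition~\ref{Linfini wave} with the appropriate case:
\begin{itemize}
\item The reference Ricci term $\Rwave_{\alpha\beta}$ falls under the fastest decay of~\eqref{eq4-09-05-2021-new} and yields $\Ocal(\epss^2 r^{-4})$ in $\MME$, which is super-critical and gives an output $\lesssim \epss^2/(t+r)$ through Case 4.
\item The linear-critical term $W^{\textbf{linear}}$ satisfies the $L^2$ bound~\eqref{eq2-02-06-2022-second}, and pointwise decays like $\epss(\epss+C_1\eps)r^{-2}\crochet^{-\kappa}$, falling into the typical Case 1.
\item The super-critical terms $W^\super$, the interaction terms $\Ibb^\star[u]$, and $u^{\mu\nu}\del_\mu\del_\nu h^\star_{\alpha\beta}$ are already bounded by~\eqref{eq1-04-12-2020-second}--\eqref{eq11-04-06-2020-second} pointwise, which fit Case 4 again.
\item The matter contribution $\Tbb(\phi)$ is bounded by~\eqref{eq1-28-11-2020-second}, with the sharp factor $\crochet^{1-2\mu}$ ensuring (since $\mu>3/4$) that this falls under Case 3 or Case 4 of Proposition~\ref{Linfini wave}.
\item The Hessian-type source $H^{\mu\nu}\del_\mu\del_\nu(Zu)^{\N 00}$ is controlled via Proposition~\ref{prop1-04-06-2022} (for $\del\del u$) together with Lemma~\ref{eq7-08-12-2020-new} (for $\del\del h^{\N 00}$ where the wave gauge supplies the extra $r^{-1}$ factor), and through the commutator estimates in Proposition~\ref{prop1-12-02-2020-new}, where the hierarchy term $T^\textbf{hier}$ is treated inductively on the rank $k$.
\end{itemize}
The conclusion in each bullet is a bound of the form $\Box^{-1}_\source[\cdot](t,x) \lesssim (\epss+C_1\eps)(t+r+1)^{-1}$, possibly up to a factor $\crochet^{-\mu}\log(t/\crochet)$ in Case 3, which is integrable since $\mu>0$ and remains bounded on the relevant region.

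The \emph{hard part} is to show that the quasi-null terms embedded in $\Qbb^\star_{\alpha\beta}[u]$ do \emph{not} spoil the sharp harmonic decay when extracted in the $\N 00$ slot. Unlike the other metric components, $(Zu)^{\N 00}$ benefits from the wave gauge identity $\Gamma^\alpha=0$, which allows the dangerous $(\del_t u)(\del_t u)$-type nonlinearities to be rewritten as $\delsN$-type derivatives or as tangential products up to commutators and cubic remainders. Concretely, I would mimic the argument of~\cite[Section~\ref{section---13}]{PLF-YM-main}, but applied at the $1/r$ level rather than the $1/r^\lambda$ level with $\lambda<1$: in~\cite{PLF-YM-main} a loss exponent $\theta$ was introduced to absorb borderline logarithms, while here the sharper pointwise bounds of Proposition~\ref{prop1-04-06-2022}, combined with~\eqref{eq1-17-07-2020-second} for $\del g^{\N 00}$ and the improved Hessian control near the cone, remove the need for $\theta$. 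This requires zero-order and high-order estimates to be carried out separately, as indicated in the closing paragraph of the introduction.

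Once~\eqref{eq5-10-06-2022} is established with an implicit constant $C$, the light-bending lower bound~\eqref{eq5-11-06-2022} is immediate: in $\Mscr^\near_{\ell,[s_0,s_1]}$, the hypothesis~\eqref{equa-bending-repeat-new} gives $-rh^{\star,\N 00}\geq \epss$ (recalling that $h^\star(\lbf,\lbf)$ and $-h^{\star,\N 00}$ are linearly equivalent), while~\eqref{eq5-10-06-2022} yields $|ru^{\N 00}|_{N-4}\leq C(\epss+C_1\eps)$. Choosing $C_\star>0$ (and hence $\eps\leq C_\star\epss$) sufficiently small guarantees $C(\epss+C_1\eps)<\epss/2$ after the bootstrap constants $C_1$ and $C_\star$ are fixed in the usual order. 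This closes the strict improvement of the light-bending bootstrap assumption~\eqref{eq3'-27-05-2020-new} and completes the proposition.
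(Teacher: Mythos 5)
There is a fundamental mismatch here: you have not proved the statement at hand. Proposition~\ref{Linfini wave} is a self-contained estimate for the \emph{linear, flat} wave equation: given a source $f$ obeying the pointwise bound \eqref{eq1-27-12-2020-one} on the backward light cone $\Lambda_{t,x}$, one must show that $\Box^{-1}_\source[f]$ enjoys the four sup-norm bounds \eqref{eq5-24-12-2020-one}--\eqref{eq1-10-01-2021-case4-one}, with the stated dependence of the constants on $\mu,\nu,\upsilon$ (including the blow-up as these exponents tend to $0$, the logarithm $\ln(t+1)$ in the borderline case $\mu=\nu$ of Case~2, and the factor $\crochet^{-\mu}\ln(t/\crochet)$ in Case~3). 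A proof has to start from the explicit Kirchhoff/Duhamel representation
\begin{equation*}
\Box^{-1}_\source[f](t,x)=\frac{1}{4\pi}\int_1^t\frac{1}{t-\tau}\int_{|y-x|=t-\tau}f(\tau,y)\,d\sigma(y)\,d\tau,
\end{equation*}
insert \eqref{eq1-27-12-2020-one}, and carry out the integration over the truncated backward cone separately in the four regimes of $(\alpha_1,\alpha_2,\alpha_3)$, splitting near/away from the light cone to identify when the $\tau$-integral is convergent, logarithmically divergent, or dominated by the endpoint. None of this appears in your proposal.

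What you wrote instead is an outline of the proof of Proposition~\ref{prop1-10-06-2022} (the harmonic decay \eqref{eq5-10-06-2022} of $h^{\N00}$ and the light-bending bound \eqref{eq5-11-06-2022}): you decompose $h=h^\star+u$, split $u$ into $\Box^{-1}_\init$ and $\Box^{-1}_\source$ pieces, estimate the various source terms of the Einstein equations, and close the bootstrap. Crucially, in doing so you explicitly ``feed the resulting decay rates into Proposition~\ref{Linfini wave} with the appropriate case,'' i.e.\ you invoke the very proposition you were asked to prove. As an argument for Proposition~\ref{Linfini wave} this is circular, and as written it establishes a different result of the paper (one whose proof the paper gives later, using Proposition~\ref{Linfini wave} as an ingredient). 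To repair this you must discard the Einstein-equation material entirely and supply the direct integral estimates for the flat wave operator described above.
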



\paragraph{Controlling $|\Box u|_{p,k}$.}

We recall the following decomposition introduced in \cite[Section~\ref{section---13}]{PLF-YM-main}:
\begin{equation}\label{equa-def-good-bad-new} 
\aligned
\Mgood_{[s_0,s_1]} 
& := \big\{ r\geq t-1 + (\epss + C_1\eps)t^{1/2} \big\}\cap \MME_{[s_0,s_1]},
\quad
\\
\Mbad_{[s_0,s_1]} 
& := \big\{ t-1\leq r\leq t-1 + (\epss + C_1\eps)t^{1/2} \big\}\cap \MME_{[s_0,s_1]}.
\endaligned
\end{equation}
The formulation \cite[\eqref{eq1-19-04-2021}--\eqref{eq3-19-04-2021}]{PLF-YM-main}, and specifically 
\cite[\eqref{eq4-19-04-2021}]{PLF-YM-main},
can be restated as
\begin{equation}\label{eq4-11-06-2022}
|\Abb^{\alpha\beta}[h]|_{N-2}
\lesssim 
\delta^{-2}(\epss + C_1\eps)^2r^{-2\kappa}s^{2\delta}. 
\end{equation}
Then the statement in \cite[\eqref{eq3-19-04-2021}]{PLF-YM-main} still holds. Furthermore, thanks to  \eqref{equa-31-12-20-new} and \eqref{eq3-09-05-2021-new}, we have 
$$
r|\Xi^{\N00}|_{N-4} + r|u_{\init}^{\N00}|_{N-4} \lesssim (\epss+C_1\eps). 
$$
Then we find 
\begin{equation}\label{eq4-10-06-2022}
(1+r+t) \, |h^{\N00}|_{p,k} \lesssim \la r\ra \, |u_{\source}|_{p,k} + (\epss + C_1\eps) ,\qquad p\leq N-4.
\end{equation}
The only issue is to estimate $r|u_{\source}|_{p,k}$ and, for this purpose, we need to bound $|\Box u|_{p,k}$. The following result is analogous to~\cite[Lemma \ref{lem1-25-04-2021}]{PLF-YM-main}.


\begin{lemma} 
\label{eq3-10-06-2022-lem}
Under the condition \eqref{eq1-30-05-2022} so that \eqref{eq2-04-06-2022} and \eqref{eq1-15-08-2021-new} hold, one has 
$$
\aligned
|\Box u|_{N-4}
& \lesssim 
\ell^{-1}\delta^{-1}(\epss + C_1\eps)^2t^{-1+(3/2)\delta}r^{-1-\kappa}\crochet^{-1+(1-\kappa)} 
 + \delta^{-1}(\epss + C_1\eps)^2r^{-2-3\delta}\crochet^{-1+\delta}
\quad 
 \text{ in } \Mfar_{\ell,[s_0,s_1]}
\endaligned
$$
and 
$$
\aligned
|\Box u|_{N-4}
& \lesssim (\epss + C_1\eps) r^{-2+\delta}\crochet^{-1-\kappa}(r \, |h^{\N00}|_{N-4}) 
+  \delta^{-1}(\eps + C_1\eps)^2r^{-2-3\delta}\crochet^{-1+\delta}
\quad \text{ in } \Mnear_{\ell,[s_0,s_1]}. 
\endaligned
$$
\end{lemma}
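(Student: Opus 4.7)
The strategy is to decompose the flat wave operator as $\Box = \BoxChapeau_g - H^{\alpha\beta}\del_{\alpha}\del_{\beta}$, so that
$$\Box u_{\alpha\beta} = \BoxChapeau_g u_{\alpha\beta} - H^{\alpha\beta'}\del_{\alpha'}\del_{\beta'} u_{\alpha\beta}.$$
This reduces the task to controlling each piece at the $|\cdot|_{N-4}$ level, using the evolution equation \eqref{eq11-15-05-2020-ab-new} for the first and the semi-null decomposition of the second.

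For the contribution from $\BoxChapeau_g u_{\alpha\beta}$, I differentiate \eqref{eq11-15-05-2020-new} by ordered admissible operators $Z$ with $\ord(Z)\leq N-4$ and bound each term on the right-hand side pointwise. The reference-interaction contributions $\Ibb^{\star}[u]$ and $u^{\mu\nu}\del_\mu\del_\nu g^\star$ satisfy \eqref{eq1-04-12-2020-second}, the Ricci curvature contribution $\Rwave$ is controlled via \eqref{eq4-09-05-2021-new}, and the matter contribution is bounded by \eqref{eq1-28-11-2020-second}. The null/quasi-null nonlinearities $\Pbb^\star + \Qbb^\star$ decompose in the semi-null frame so that the $\hN{}^{00}$-factor never multiplies $\del_t\del_t u$ (cf.\ \eqref{eq4-11-06-2022}), yielding a quadratic error of the same shape. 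Finally, the commutator $[Z,h^{\mu\nu}\del_\mu\del_\nu]u$ is handled by Proposition~\ref{prop1-12-02-2020-new} at rank $k\leq N-4$, inserting the Hessian bounds \eqref{eq2-04-06-2022} (in $\Mnear$) or \eqref{eq1-15-08-2021-new} (in $\Mfar$). After reducing every power using $s^{2\delta}\lesssim(r+t)^{2\delta}$ and $\crochet\leq 1+r+t$, these terms are collectively absorbed into the quadratic error $\delta^{-1}(\epss+C_1\eps)^2 r^{-2-3\delta}\crochet^{-1+\delta}$.

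For the quasilinear piece $H^{\alpha\beta}\del_\alpha\del_\beta u$, I use the semi-null frame to write
$$H^{\alpha\beta}\del_\alpha\del_\beta u = H^{\N 00}\del_t\del_t u + \text{(terms involving } \delsN u \text{ or } H^{\N ab}, H^{\N 0a}),$$
modulo commutators with the null frame vectors that generate factors of $r^{-1}$ and fall into the quadratic error. The tangential-derivative terms are treated by pairing \eqref{eq2-04-06-2022} (specifically the $|\del\delsN u|_{N-4}\lesssim(\epss+C_1\eps)r^{-2}\crochet^{-\kappa}s^{2\delta}$ estimate) with the pointwise bounds \eqref{eq1-26-05-2021-new} and \eqref{eq1-09-05-2021-new}, again contributing to the quadratic error. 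The dangerous term $H^{\N 00}\del_t\del_t u$ is where the near/far dichotomy appears: in $\Mnear$, the Hessian bound \eqref{eq2-04-06-2022} gives $|\del\del u|_{N-4}\lesssim(\epss+C_1\eps)r^{-1}\crochet^{-1-\kappa}s^{2\delta}$, so pairing with $|H^{\N 00}|_{N-4}\lesssim|h^{\N 00}|_{N-4}$ produces exactly the self-referential leading term $(\epss+C_1\eps)r^{-2+\delta}\crochet^{-1-\kappa}(r|h^{\N 00}|_{N-4})$ (using $s^{2\delta}\lesssim r^{\delta}$ on $\Mnear$). In $\Mfar$, \eqref{eq1-15-08-2021-new} and $|H|_{N-4}\lesssim\delta^{-1}(\epss+C_1\eps)r^{-\kappa}s^\delta$ close the estimate with the declared $\ell^{-1}$ factor.

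\textbf{Main obstacle.} The delicate point is retaining the factor $r|h^{\N 00}|_{N-4}$ in the near bound without letting it also appear in the rank-reducing commutator pieces $T^\textbf{hier}$ and in the pointwise bound \eqref{eq1-29-05-2022} through the term $r\crochet^{-1}|Lh^{\N 00}|_{k-1}$. This requires that the assumption \eqref{eq1-30-05-2022} be used uniformly across $\Mnear$ so that Proposition~\ref{prop1-04-06-2022} applies, and that any $|Lh^{\N 00}|$ contribution from the commutator be shown to already enjoy a stronger decay (inherited from the Sobolev estimate \eqref{eq7-15-05-2020-new} applied to $L h^{\N 00}$), so that it is subsumed into the quadratic error and does not inflate the self-referential coefficient. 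A secondary bookkeeping difficulty is aligning the $s^\delta$ factors from the bootstrap with the $r^{-3\delta}\crochet^\delta$ target weight; this is done using $\crochet\leq 1+r+t$ and the comparability $s\lesssim t^{1/2}$ near the light cone.
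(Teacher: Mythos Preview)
Your overall strategy matches the paper's: write $\Box u = \Boxt_g u - H^{\alpha\beta}\del_\alpha\del_\beta u$, bound the first piece via the right-hand side of the evolution equation, and decompose the second in the semi-null frame to isolate the critical $h^{\N00}\del_t\del_t u$ term. The treatment of that critical term in both $\Mnear_{\ell}$ (pairing \eqref{eq2-04-06-2022} with $|h^{\N00}|_{N-4}$) and $\Mfar_{\ell}$ (pairing \eqref{eq1-15-08-2021-new} with \eqref{eq1-09-05-2021-new}) is essentially what the paper does.

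However, you introduce an unnecessary complication when bounding $|\Boxt_g u|_{N-4}$. You reference \eqref{eq11-15-05-2020-new} and speak of handling the commutator $[Z,h^{\mu\nu}\del_\mu\del_\nu]u$. But that equation has $\Boxt_g(Zu)$ on the left, not $Z(\Boxt_g u)$. What you actually want is $|\Boxt_g u|_{N-4}$, and since $\Boxt_g u_{\alpha\beta}$ equals the (undifferentiated) right-hand side of \eqref{MainPDE-limit} rewritten as $\Pbb^\star + \Qbb^\star + \Ibb^\star + 2\Rwave - u^{\mu\nu}\del_\mu\del_\nu g^\star + \Tbb[\phi]$, applying $Z$ gives $Z(\Boxt_g u) = Z(\text{RHS})$ directly; no commutator appears. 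The paper simply bounds each term $|\Pbb^\star[u]|_{N-4}$, $|\Qbb^\star[u]|_{N-4}$, $|\Ibb^\star[u]|_{N-4}$, $|\Tbb[\phi]|_{N-4}$, $|u^{\mu\nu}\del_\mu\del_\nu h^\star|_{N-4}$, $|^{(w)}R^\star|_{N-4}$ individually (these are recorded in \eqref{eq3-04-06-2022}; your citation of \eqref{eq4-11-06-2022} for $\Pbb^\star+\Qbb^\star$ is misplaced, as that inequality concerns $\Abb^{\alpha\beta}[h]$). As a consequence, the ``main obstacle'' you identify---controlling the $r\crochet^{-1}|Lh^{\N00}|_{k-1}$ contribution from $T^{\textbf{hier}}_{p,k}$---is self-imposed: it arises only from your spurious commutator and plays no role in the paper's argument. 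Once you drop the commutator, the proof is exactly the paper's and considerably shorter.
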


\begin{proof}
As in the proof of \cite[Lemma \ref{lem1-25-04-2021}]{PLF-YM-main}, we need to establish pointwise bounds on the following quantities:
\begin{equation}\label{eq3-23-04-2021-new}
\aligned
&
|h^{\mu\nu}\del_{\mu}\del_{\nu}u|_k,
\qquad |u^{\mu\nu}\del_{\mu}\del_{\nu}h^{\star}|_k,
\qquad 
|\Pbb^{\star}[u]|_k,
\qquad |\Qbb^{\star}[u]|_k,
\\
&
\qquad |\Ibb^{\star}[u]|_k,
\qquad |\Tbb[\phi]|_k,
\qquad
|^{(w)}R^{\star}_{\alpha\beta}|_k.
\endaligned
\end{equation}
The estimates established in the general theory are sufficient for all but the first term and, for convenience, we recall the previous bounds as follows: 
\begin{equation}\label{eq3-04-06-2022}
\aligned
&|\Pbb^{\star}[u]|_{N-4} + |\Qbb^{\star}[u]|_{N-4}
\lesssim (\epss + C_1\eps)^{2-4\delta} r^{-2-3\delta}\crochet^{-1+\delta},
\\
&|\Ibb^{\star}[u]|_{N-4} + |u^{\mu\nu}\del_{\mu}\del_{\nu} h^{\star}|_{N-4}
\lesssim  \delta^{-1}(\epss+C_1\eps)^2r^{-2-3\delta}\crochet^{-1+\delta},
\\
&|\Tbb[\phi]|_{N-4} \lesssim (\epss + C_1\eps)^2 \, r^{-2-3\delta}\crochet^{-1+\delta},
\endaligned
\end{equation}
and  $|^{(w)}R^{\star}_{\alpha\beta}|_k$ is bounded by \eqref{eq4-09-05-2021-new}, which is stronger than the inequality \cite[\eqref{eq5-25-04-2021}]{PLF-YM-main} applied in the proof of \cite[Lemma \ref{lem1-25-04-2021}]{PLF-YM-main}. We thus focus on the first term $|h^{\mu\nu}\del_{\mu}\del_{\nu}u|_k$ and, first of all, we have  
\begin{equation}\label{eq4-04-06-2022}
|h^{\mu\nu}\del_{\mu}\del_{\nu} u_{\alpha\beta}|_k\lesssim |h^{\N00}\del_t\del_t u|_k 
+ | H \, \del\delsN u|_k + r^{-1}| H \, \del u|_{N-4}.
\end{equation}
The second term is bounded as, thanks to \eqref{eq7-15-05-2020-new} and \eqref{eq2-04-06-2022} and provided $\kappa-1/2\geq (3/2)\delta$, 
$$
|H\del\delsN u|_{N-4} \lesssim \delta^{-1}(\epss + C_1\eps)^2r^{-3}\crochet^{1-2\kappa}s^{3\delta} \lesssim
\delta^{-1}(\epss + C_1\eps)^2r^{-2-3\delta}\crochet^{-1+\delta}
\qquad
\text{in } \Mnear_{[s_0,s_1]}.
$$
On the other hand, to handle the ``far'' region we rely on  \eqref{eq1-15-08-2021-new} in combination with \eqref{eq1-30-05-2020-new} which leads us to the relevant estimate that is stated in Lemma~\ref{eq3-10-06-2022-lem}.

The last term in the right-hand side of \eqref{eq4-04-06-2022} is trivial thanks to the decreasing factor $r^{-1}$:
$$
r^{-1}| H \, \del u|_{N-4} \lesssim \delta^{-1}(\epss + C_1\eps)^2r^{-2-\kappa}\crochet^{-\kappa}s^{\delta}
\lesssim \delta^{-1}(\epss + C_1\eps)^2r^{-2-3\delta}\crochet^{-1+\delta},
$$
provided $\kappa-1/2\geq (5/4)\delta$. For the first term in the right-hand side \eqref{eq4-04-06-2022}, we first 
apply \cite[\eqref{eq1-15-08-2021}]{PLF-YM-main} and \eqref{eq1-09-05-2021-new}
$$
|h^{\N00}\del_t\del_t u|_{N-4} \lesssim \ell^{-1}\delta^{-1}(\epss + C_1\eps)^2 t^{-1+(3/2)\delta}r^{-1-\kappa}\crochet^{-1+(1-\kappa)}
\qquad
\text{ in } \Mfar_{[s_0,s_1]}.
$$
We have arrived at the first inequality. On the other hand, we rely on \eqref{eq1-30-05-2022} and apply \eqref{eq2-04-06-2022} and obtain
$$
|h^{\N00}\del_t\del_t u|_{N-4} \lesssim  (\epss + C_1\eps) r^{-2+\delta}\crochet^{-1-\kappa}\big(r \, |h^{\N00}|_{N-4}\big)
\qquad
\text{in } \Mnear_{[s_0,s_1]}.
$$
and, in combination with the above inequalities, we arrive at the second inequality.
\end{proof}


\begin{proof}[Proof of Proposition \ref{prop1-10-06-2022}]
{\bf Step 1.} 
We consider \eqref{eq5-10-06-2022} first. 
We first establish the bound in $\Mfar_{\ell,[s_0,s_1]}$, which is achieved by a direct application of Proposition~\ref{Linfini wave} in combination with 
Lemma~\ref{eq3-10-06-2022-lem}
and \eqref{eq4-10-06-2022}. We observe that $\Mfar_{\ell,[s_0,s_1]}$ is {\sl past complete}, in the sense that for any $(t,x)\in\Mfar_{\ell,[s_0,s_1]}$ the past cone $\Lambda_{(t,x)} := \{(\tau,y)\in \Mscr_{[s_0,s_1]}||y-x|\leq t-\tau\}$ is entirely contained in $\Mfar_{\ell,[s_0,s_1]}$. Then by Proposition~\ref{Linfini wave} {\bf Case 1} ($\upsilon = (3/2)\delta$, $\mu = 1-\kappa$, $\nu = \kappa$ ) and {\bf Case 2} ($\mu = \delta$, $\nu = 3\delta$), we obtain (under the condition $\kappa-1/2\geq (5/4)\delta$)
$$
|u_{\source}|_{N-4} \lesssim \ell^{-1}\delta^{-2}(\epss+C_1\eps)^2(t+r)^{-1} + \delta^{-1}(\epss+C_1\eps)^2(t+r)^{-1}, 
$$
which leads us to 
\begin{equation}
|u_{\source}|_{N-4} \lesssim \ell^{-1}\delta^{-2}(\epss + C_1\eps)^2\la r\ra^{-1}.
\end{equation}
Then recalling \eqref{eq4-10-06-2022}, we obtain \eqref{eq5-10-06-2022} in $\Mfar_{\ell,[s_0,s_1]}$.

Then we consider the region $\Mnear_{\ell,[s_0,s_1]}$ and define
\begin{equation}\label{eq2-11-06-2022}
s^* = \sup_{s'\in[s_0,s_1]}\big\{\text{\eqref{eq1-30-05-2022} is valid in } \Mnear_{\ell,[s_0,s']}\big\}.
\end{equation}
Provided $(\epss + C_1\eps)$ is sufficiently small, we have $s^*>s_0$ and for all $(t,x)\in\Mgood_{\ell,[s_0,s^*]}$
$$
\Lambda_{(t,x)} = \{(\tau,y)\in \Mscr_{[s_0,s_1]}||y-x|\leq t-\tau\}\subset \Mgood_{[s_0,s^*]}.
$$
We also have 
$$
\crochet^{-8\delta} \lesssim (\epss+C_1\eps)^{-8\delta}r^{-4\delta}
\qquad 
\text{ in } \Mgood_{[s_0,s_1]}\cap\Mnear_{\ell,[s_0,s_1]}. 
$$
Thus by
Lemma~\ref{eq3-10-06-2022-lem}, in $\Mgood_{[s_0,s^*]}$,
\begin{equation}\label{eq7-11-06-2022}
\aligned
|\Box u|_{N-4}
& \lesssim \ell^{-1}\delta^{-1}(\epss + C_1\eps)^2t^{-1+(3/2)\delta}r^{-1-\kappa}\crochet^{-1+(1-\kappa)}
\\
& \quad + (\delta^{-1}(\epss + C_1\eps)^2 + (\epss + C_1\eps)^{1-8\delta}r \, |h^{\N00}|_{N-4})r^{-2-3\delta}\crochet^{-1+\delta}
\endaligned
\end{equation} 
provided $7\delta\leq \kappa$. (Here $C$ is a constant determined by $N$.) Then by Proposition \ref{Linfini wave} {\bf Case 1} and {\bf Case 2}, we obtain
$$
|u_{\source}|_{N-4} \lesssim  
\ell^{-1}\delta^{-2}(\epss + C_1\eps)^2\la r\ra^{-1} + \delta^{-1}(\epss + C_1\eps)^{1-8\delta}\la r\ra^{-1}\sup_{\Mgood_{[s_0,s^*]}}\big( r \, |h^{\N00}|_{N-4}\big).
$$
Recalling \eqref{eq4-10-06-2022}, we obtain
$$
\la r\ra|h^{\N00}|_{N-4}
\lesssim 
(\epss + C_1\eps) + \ell^{-1}\delta^{-1} (\epss + C_1\eps)^2 + \delta^{-1}(\epss + C_1\eps)^{1-8\delta}\sup_{\Mgood_{[s_0,s^*]}}\Big( r \, |h^{\N00}|_{N-4}\Big),
$$
which leads us to (provided $(\epss+C_1\eps)$ is sufficiently small, as determined by $N$)
\begin{equation}\label{eq3-11-06-2022}
\la r\ra|h^{\N00}|_{N-4} \lesssim(\epss + C_1\eps) + \ell^{-1}\delta^{-2}(\epss + C_1\eps)^2
 \quad \text{in } \Mgood_{[s_0,s^*]}.
\end{equation}


Next we consider the region $\Mbad_{[s_0,s^*]}\cap\Mnear_{\ell,[s_0,s_1]}$ and we rely on a technique of integration toward the light cone. We observe that for $(t,x)\in\Mbad_{[s_0,s^*]}$, we denote by $(t,\bar{x})\in \Mgood_{[s_0,s^*]}\cap \Mbad_{[s_0,s^*]}$ with $x/|x| = \bar{x}/|\bar{x}|$. Then, for $\ord(Z)\leq N-4$ we have 
$$
Zh^{\N00}(t,x) = Zh^{\N00}(t,\bar{x}) - \int_{|x|}^{|\bar{x}|} \delEH_r Zh^{\N00}(t,\rho x/|x|)d\rho, 
$$
where $\delEH_r = (x^a/r)\delEH_a$. Then by \eqref{eq3-11-06-2022} and \eqref{eq1-17-07-2020-second}, 
$$
\aligned
\la r\ra|Zh^{\N00}|& \lesssim  (\epss + C_1\eps) + \ell^{-1}\delta^{-2}(\epss + C_1\eps)^2 
 + \la t\ra\big||\bar {x}| - |x|\big|\sup_{\MME_{[s_0,s^*]}}\{|\del Zh^{\N00}|_{N-4}\}
 \\
 & \lesssim  (\epss + C_1\eps) + \ell^{-1}\delta^{-2}(\epss + C_1\eps)^2  
 + (\epss + C_1\eps)^2\la t\ra^{1/2-\kappa+\delta/2}.
 \endaligned
$$
Then, combining with \eqref{eq3-11-06-2022} we have
\begin{equation}\label{eq6-11-06-2022}
\sup_{\MME_{[s_0,s^*]}}\big\{\la r\ra|h^{\N00}|_{N-4}\big\} \lesssim (\epss + C_1\eps) + \ell^{-1}\delta^{-2}(\epss + C_1\eps)^2  .
\end{equation}
Recalling \eqref{eq2-11-06-2022}, we observe that  when $(\epss + C_1\eps)$ is sufficiently small, the above bound leads us to
\begin{equation}\label{eq1-11-06-2022}
r \, |h^{\N00}|_{N-4}< \frac{1}{2}\eps_s 
\quad \text{in }\Mnear_{\ell,[s_0,s^*]}.
\end{equation}
By continuity, we conclude that $s^*=s_1$ since, otherwise, $\sup_{\Mgood_{s^*}}\Big( r \, |h^{\N00}|_{N-4}\Big) = \eps_s$ would contradict \eqref{eq1-11-06-2022}. Next, \eqref{eq5-10-06-2022} follows from \eqref{eq6-11-06-2022}, provided $\ell^{-1}\delta^{-1}(\epss + C_1\eps)\lesssim 1$. This completes the proof of \eqref{eq5-10-06-2022}.

\

{\bf Step 2.} 
We now turn to \eqref{eq5-11-06-2022}, which, in some sense, a special case of \eqref{eq5-10-06-2022} but also provides us with a key sign. We recall the following identity (established in \cite[\eqref{eq2-19-04-2021}]{PLF-YM-main})
$$
\aligned
rh^{\N00} 
& = rg^{\star}(\lbf,\lbf) + ru_{\init}(\lbf,\lbf) + r \, h^{\N00}_{\pertur} 
\\
& = r\big(g^{\star}(\lbf,\lbf) + u_{\init}(\lbf,\lbf)\big) - r\sum_{\alpha,\beta}\PsiN_{\alpha}^0\PsiN_{\beta}^0u_{\source,\alpha\beta} 
+ r\PsiN_{\alpha}^0\PsiN_{\beta}^0\Abb^{\alpha\beta}[h].
\endaligned
$$
Thanks to \eqref{eq3'-27-05-2020-initial-new} and \eqref{eq4-11-06-2022}, we find 
$
rh^{\N00}\leq -\epss + Cr|u_{\source}| + \delta^{-2}(\epss + C_1\eps)^2r^{1-2\kappa+\delta}.
$
Substituting \eqref{eq5-10-06-2022} in \eqref{eq7-11-06-2022} we obtain
\begin{equation}
|\Box u|\lesssim 
\ell^{-1}\delta^{-1}(\epss + C_1\eps)^2t^{-1+(3/2)\delta}r^{-1-\kappa}\crochet^{-1+(1-\kappa)} 
+ \delta^{-1}(\epss + C_1\eps)^{2-8\delta}r^{-2-3\delta}\crochet^{-1+\delta}
\quad \text{in } \Mgood_{[s_0,s_1]}.
\end{equation}
By Proposition \ref{Linfini wave} ({\bf Case 1} and {\bf Case 2}), we obtain
$$
\la r\ra|u_{\source}|\lesssim \ell^{-1}\delta^{-2}(\epss + C_1\eps)^{2-8\delta}
\qquad\text{in } \Mgood_{[s_0,s_1]},
$$
which leads us to (for some constant $C>0$) 
\begin{equation}\label{eq8-11-06-2022}
rh^{\N00}\leq -\epss + C\ell^{-1}\delta^{-2}(\epss + C_1\eps)^{2-8\delta}
\qquad\text{in } \Mgood_{[s_0,s_1]}.
\end{equation}
Again, we perform an integration towards the light-cone in order to cover  $\Mbad_{[s_0,s_1]}$: 
$$
\aligned
h^{\N00}(t,x) 
& = h^{\N00}(t,\bar{x}) - \int_{|x|}^{|\bar{x}|}\delEH_r h^{N00}(t,\rho x/|x|) \, d\rho
   \leq  h^{\N00}(t,\bar{x}) + C(\epss+C_1\eps)^{2}t^{-1} 
\\
& \leq -\epss |\bar{x}|^{-1} +  C\ell^{-1}\delta^{-2}(\epss + C_1\eps)^{2-8\delta}|\bar{x}|^{-1} + C(\epss+C_1\eps)^{2}t^{-1}
\\
& \leq  -\epss r^{-1} + C\big((\epss + C_1\eps)^{2} + \ell^{-1}\delta^{-2}(\epss+C_1\eps)^{2-8\delta}\big)r^{-1}
\hskip2.cm 
\text{ in } \Mbad_{[s_0,s_1]}. 
\endaligned
$$
We recall that $\eps\leq C_\star \epss$ with $C_\star$ a given constant, thus when $\epss$ sufficiently small, the above inequality together with \eqref{eq8-11-06-2022} leads us to \eqref{eq5-11-06-2022}.
\end{proof}


\subsection{Sharp decay for the gradient of good metric components} 
\label{section---53}

\paragraph{Objective and strategy.}
 
 We now turn our attention to the estimates for $\del \us$. 

\begin{proposition}\label{prop1-18-06-2022}
Under the bootstrap assumption in $[s_0, s_1]$ and as a consequence of  
\eqref{eq5-10-06-2022},   one has in $\Mnear_{\ell, [s_0,s_1]}$
\begin{equation}\label{eq11-01-31-2021-new}
|\del \us|_{N-4,k} \lesssim \big(\ell^{-\delta/2} +\delta^{-2}\big) (\epss + C_1\eps) \la r\ra^{-1}\crochet^{-1/2-\delta/2}\big(1+(\epss + C_1\eps) (\ln \la r\ra)^k\big),\qquad 0\leq k\leq N-4,
\end{equation}
\begin{equation}\label{eq8-24-03-2021-new} 
|\del\del \us|_{N-5,k}
\lesssim  (\ell^{-\delta}+\delta^{-2})(\epss + C_1\eps) \la r \ra^{-1}\crochet^{-1-\delta}\big(1+(\epss + C_1\eps) (\ln \la r\ra)^k\big),\qquad 0\leq k\leq N-5.
\end{equation}
\end{proposition}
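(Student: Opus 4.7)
My plan is to estimate $|\del \us|_{N-4,k}$ by representing $\us$ through Kirchhoff's formula and bounding the resulting integrals by Proposition~\ref{Linfini wave}, together with an induction on the rank $k$ that tracks the $(\ln\la r\ra)^k$ factor through the critical Case~3 of that proposition. The harmonic decay $|h^{\N00}|_{N-4}\lesssim(\epss+C_1\eps)\la r\ra^{-1}$ furnished by Proposition~\ref{prop1-10-06-2022} is crucial: it allows the potentially dangerous quasilinear self-interaction $H^{\N00}\del_t\del_t \us$ to be processed exactly at the critical level. The super-critical contributions listed in \eqref{eq3-04-06-2022}, together with the Ricci bound \eqref{eq4-09-05-2021-new} and the matter term $\Tbb(\phi)$, yield a fast decay of the form $r^{-2-3\delta}\crochet^{-1+\delta}$, which Cases~1 and~2 of Proposition~\ref{Linfini wave} translate into a clean $\la r\ra^{-1}\crochet^{-1/2-\delta/2}$ contribution with a $\delta^{-2}$ constant and without any logarithmic factor.

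\textbf{Commutation, source decomposition, and rank induction.} I commute the reduced wave equation for $\us$ with an admissible $Z$ of order $p\leq N-4$ and rank $k$, obtaining $\Boxt_g Z\us = -[Z, H^{\mu\nu}\del_\mu\del_\nu]\us + Z(\text{sources})$, and then apply Proposition~\ref{Linfini wave} with zero initial data for $Z\us$ after subtracting the linear development, whose contribution is already bounded by $C_0\eps(t+r+1)^{-1}$ via \eqref{eq3-09-05-2021-new}. The commutator is treated through \eqref{eq4-12-02-2020-second}: its principal piece $(|\HN^{00}|+t^{-1}|r-t||H|)|\del\del\us|_{p-1,k-1}$ carries a prefactor bounded by $(\epss+C_1\eps)r^{-1}$ thanks to \eqref{eq5-10-06-2022} and \eqref{eq1-26-05-2021-new}, which sits precisely at the critical threshold of Case~3 of Proposition~\ref{Linfini wave} and produces exactly one additional $\ln\la r\ra$ per induction step. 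The base case $k=0$ uses the Hessian bound \eqref{eq2-04-06-2022} already established in Proposition~\ref{prop1-04-06-2022} together with the $T^{\textbf{easy}}$ and $T^{\textbf{super}}$ pieces, matched against Case~2 of Proposition~\ref{Linfini wave}; the inductive step at rank $k$ feeds the previously proved Hessian \eqref{eq8-24-03-2021-new} at rank $k-1$ back into the hierarchy term $T_{p,k}^{\textbf{hier}}[H,\us]$, producing the announced $(1+(\epss+C_1\eps)(\ln\la r\ra)^k)$ correction. The $\ell^{-\delta/2}$ part of the prefactor arises from the far region, handled via Proposition~\ref{propo2-22-05-2020-new}.

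\textbf{Hessian bound and main obstacle.} Once the gradient estimate \eqref{eq11-01-31-2021-new} is established, the Hessian bound \eqref{eq8-24-03-2021-new} follows from Proposition~\ref{prop1-22-05-2020-second}: the $(1+|r-t|)/r$ gain on the left-hand side of \eqref{eq3-28-12-2020-second} converts the gradient weight $\crochet^{-1/2-\delta/2}$ into the required $\crochet^{-1-\delta}$, while the hierarchy, easy, and super remainders are controlled by substituting the already established bounds on $h^{\N00}$, $\del h$, and $\del \us$. The main obstacle I anticipate is the bookkeeping in the rank induction: each passage through Case~3 produces exactly one additional $\ln\la r\ra$, but one must ensure that the rank-$k$ hierarchy term does not couple back to an as-yet-unestablished rank-$k$ Hessian. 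I plan to resolve this circularity by inducting jointly on the pair \eqref{eq11-01-31-2021-new}--\eqref{eq8-24-03-2021-new}, establishing the gradient estimate at rank $k$ first and only then feeding it back to derive the Hessian at rank $k$, in a strictly one-sided fashion. Keeping the constants from accumulating beyond $(\ell^{-\delta/2}+\delta^{-2})$ will require careful tracking of the $\mu^{-1}$, $\nu^{-1}$, and $|\mu-\nu|^{-1}$ factors in Proposition~\ref{Linfini wave} at each induction step.
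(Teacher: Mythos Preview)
Your proposal has a genuine gap at the very first step: Proposition~\ref{Linfini wave} (Kirchhoff) produces pointwise bounds on the \emph{solution} $u$ itself, of the form $(t+r)^{-1}$ times various powers of $t$ and $\crochet$, but the statement you are trying to prove concerns the \emph{gradient} $|\del\us|$ with the specific weight $\la r\ra^{-1}\crochet^{-1/2-\delta/2}$. None of the four cases of Proposition~\ref{Linfini wave} outputs that weight; in particular Case~3, which you plan to invoke for the logarithm, gives $(t+r)^{-1}(1+\crochet^{-\mu}\ln(t/\crochet))$, not $\la r\ra^{-1}\crochet^{-1/2-\delta/2}(\ln\la r\ra)^k$. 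You never explain how to pass from a Kirchhoff bound on $Z\us$ to a bound on $\del Z\us$ carrying the correct $\crochet$-power, and there is no mechanism in Proposition~\ref{Linfini wave} that would furnish it.

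The paper does not use Kirchhoff here at all. It uses Proposition~\ref{prop1-23-07-2020-new}, the weighted characteristic-integration estimate (originating in Lindblad--Rodnianski), which controls $\crochet^{\rho}|(\del_t-\del_r)(rZ\us^{\N})|$ by an integral of $\crochet^{\rho}r|\Boxt_g Z\us^{\N}|$ along the outgoing null line $\varphi_{t,x}(\tau)$, with the choice $\rho=1/2+\delta/2$. This is precisely how the $\crochet^{-1/2-\delta/2}$ weight enters. The source bound is supplied by Proposition~\ref{proposition-label-144-new}, whose hierarchy term $(\epss+C_1\eps)r^{-1}(r\crochet^{1/2+\delta/2}|\del\us^{\N}|_{N-4,k-1})$ feeds into the integral $\int_{t_0}^{t}\tau^{-1}\Bbf_{k-1}(\tau)\,d\tau$ and produces the factor $(\ln\la r\ra)^k$ by direct integration, not via Case~3 of Kirchhoff. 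The Hessian estimate \eqref{eq8-24-03-2021-new} is obtained the same way, applying the characteristic integration to $Z\del_t\us^{\N}$ with $\rho=1+\delta$ (cf.\ \eqref{eq1-24-03-2021} and \eqref{eq2-24-03-2021-new}), rather than by post-processing the gradient bound through Proposition~\ref{prop1-22-05-2020-second} as you suggest. Your plan to derive the Hessian from the gradient via \eqref{eq3-28-12-2020-second} would also fail quantitatively: multiplying by $r/\crochet$ turns $\crochet^{-1/2-\delta/2}$ into $\crochet^{-3/2-\delta/2}$, not the required $\crochet^{-1-\delta}$.
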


Our proof relies on the following result, which was pointed out first in \cite{LR1} and formulated as follows in \cite{PLF-YM-main}.

\begin{proposition}[Weighted pointwise estimate in the Euclidean-merging domain] 
\label{prop1-23-07-2020-new}
Consider a metric $g^{\alpha\beta} = g_\Mink^{\alpha\beta} + H^{\alpha\beta}$ defined in $\Mscr^{\near}_{\ell, [s_0,s_1]}$ and satisfying
\begin{equation}\label{eq1'-10-01-2021-new}
|\HN^{00}|\ll 1,
\qquad
\HN^{00} \leq 0 \quad \text{ in } \Mscr^{\near}_{\ell, [s_0,s_1]}. 
\end{equation}
Given any $\rho \geq 0$, for any function $u$ defined $\Mscr^{\near}_{\ell, [s_0,s_1]}$ one has
$$
\aligned
& \crochet^\rho |(\del_t-\del_r) (ru)|(t,x)
\\
& \lesssim
\sup_{\Omega^{\ell}_{s_0,s_1}} (r-t+2)^\rho \big( r \, |\del u| + |u|\big) 
+  \int_{t_0}^t  \crochet(\tau,r)^\rho \, r
\Big(
r^{-1}|\delsN u|_{1,1} +  |H | \, | \del\delsN u| + r^{-1}|H | \, | \del u| 
+ |\Boxt_g u| \Big)\big|_{\varphi_{t,x}(\tau)} d\tau, 
\endaligned
$$
in which the supremum is taken over the set $\Omega_{s_0,s_1}^{\ell} = \Lscr_{\ell, [s_0,s_1]}\cup \Mnear_{\ell,s_0}$. 
\end{proposition}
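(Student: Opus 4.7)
The plan is to establish a Lindblad--Rodnianski-type transport estimate for $W := \lbf(ru)$ along the complementary outgoing null direction $L := \del_t + \del_r$, exploiting the sign hypothesis $\HN^{00}\le 0$ to control the only genuinely quadratic error that appears after semi-null decomposition. First, I would verify the elementary spherical identity
$$
L\lbf(ru) \;=\; -r\,\Box_{\mathrm{Mink}}\, u \;+\; r^{-1}\Delta_{S^2}\, u,
$$
by direct computation from $Lr = 1$, $\lbf r = -1$, and $\Box_{\mathrm{Mink}} = -L\lbf + (2/r)\del_r + r^{-2}\Delta_{S^2}$. Substituting $\Box_{\mathrm{Mink}}\, u = \Boxt_g u - H^{\alpha\beta}\del_\alpha\del_\beta u$ converts this into an evolution identity for $LW$ whose only potentially dangerous contribution is $r\,H^{\alpha\beta}\del_\alpha\del_\beta u$.

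Second, I would perform a semi-null decomposition of that contribution, systematically using $\del_a = \delsN_a - (x^a/r)\del_t$ so that every surviving second-order derivative of $u$ is either a $\delsN$-type derivative or a $\lbf\lbf$-type derivative. The output takes the schematic form
$$
r\,H^{\alpha\beta}\del_\alpha\del_\beta u \;=\; \alpha\,\HN^{00}\, r\,\lbf\lbf u \;+\; \mathscr{R},
$$
where $\alpha$ is a universal constant and $\mathscr{R}$ collects every remainder term; each remainder either carries a $\delsN$-factor on a second derivative of $u$ or a compensating $1/r$, so that $|\mathscr{R}| \lesssim r|H||\del\delsN u| + |H||\del u|$. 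The algebraic identities $r\lbf u = W + u$ and $\lbf(r\lbf u) = -\lbf u + r\lbf\lbf u$ (direct consequences of $\lbf r = -1$) then combine to give $r\lbf\lbf u = \lbf W + 2\lbf u$, so that the wave identity rewrites as the first-order transport equation
$$
\bigl(L - \alpha\,\HN^{00}\,\lbf\bigr) W \;=\; -r\,\Boxt_g u \;+\; r^{-1}\Delta_{S^2}u \;+\; \mathscr{R}',
$$
where $\mathscr{R}'$ obeys the same pointwise bound as $\mathscr{R}$ after absorbing the extra $\HN^{00}\lbf u$ terms through $\lbf u = r^{-1}(W+u)$.

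Third, I multiply by $\crochet^\rho$ and integrate along the backward flow of the modified operator $L - \alpha\HN^{00}\lbf$ issued from the given point $(t,x)$. The weight $\crochet^\rho$ is exactly $L$-invariant because $L(r-t)=0$, and the correction picked up from $\lbf\crochet = -2\,\aleph'(r-t)$ contributes only an $O(|\HN^{00}|\crochet^{\rho-1}|W|)$ term that is absorbable thanks to $|\HN^{00}|\ll 1$. The sign hypothesis $\HN^{00}\le 0$ ensures that the modified characteristics satisfy $dr/dt \le 1$, so the backward integral curve $\varphi_{t,x}$ is trapped between the exact outgoing null cone and a strictly incoming timelike direction, and therefore reaches the boundary set $\Omega^{\ell}_{s_0,s_1} = \Lscr_{\ell,[s_0,s_1]}\cup \Mnear_{\ell,s_0}$ at some time $t_0 \in [s_0,t)$. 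Integrating the weighted transport equation from $t_0$ to $t$, bounding the boundary value of $|W|$ by $r|\del u|+|u|$ at $t_0$, and using the standard identity $r^{-1}|\Delta_{S^2}u|\lesssim r^{-1}|\delsN u|_{1,1}$ to absorb the angular Laplacian into the integrand, then yields exactly the inequality stated in the proposition.

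The main obstacle I anticipate is the geometric verification in the third step: one must check that $\varphi_{t,x}$ exits through $\Omega^{\ell}_{s_0,s_1}$ and not through the outer face $\{r = t/(1-\ell)\}$ of the near-cone domain. This is where both hypotheses on $\HN^{00}$ are simultaneously used --- smallness keeps $\varphi_{t,x}$ within the cone of influence of a genuine $L$-characteristic, while the sign condition $\HN^{00}\le 0$ makes the modified cone sub-luminal and so forces the backward curve to reach either the initial slice $\{t=s_0\}\cap\Mnear$ or the light cone $\Lscr$ within the slab $[s_0,t]$.
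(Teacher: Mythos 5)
Your skeleton is the same as the argument behind this proposition (it is imported from Lindblad--Rodnianski and \cite{PLF-YM-main}): derive a transport identity for $W=\lbf(ru)$ along the integral curves $\varphi_{t,x}$ of the modified null field $L-\tfrac14\HN^{00}\lbf$, after a semi-null decomposition that isolates $\HN^{00}\del_t\del_t u$ (indeed $L=(x^a/r)\delsN_a$, so every other contribution is ``good'' and produces exactly the terms $r|H||\del\delsN u|$, $|H||\del u|$ and $|\delsN u|_{1,1}$ of the stated integrand). Two steps of your sketch, however, would fail as written.

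First, the weight. The commutation term $V(\crochet^\rho)=-\tfrac14\HN^{00}\lbf(\crochet^\rho)=\tfrac{\rho}{2}\HN^{00}\aleph'(r-t)\,\crochet^{\rho-1}$ cannot be ``absorbed thanks to $|\HN^{00}|\ll1$'': a Gr\"onwall absorption produces the factor $\exp\big(C\rho\int_{t_0}^t|\HN^{00}|\crochet^{-1}d\tau\big)$, and along a near-characteristic curve $\crochet$ is essentially constant while $t-t_0$ may be of size $t$, so smallness alone only gives the useless bound $e^{C\rho\eps_1 t}$. What rescues this step is, once more, the sign hypothesis: since $\HN^{00}\le 0$ and $\aleph'\ge 0$, the commutation term is a damping term ($V(\crochet^\rho)\le 0$) and Duhamel gives the estimate with constant one; equivalently, sub-luminality ($dr/dt\le1$) makes $r-t$ non-increasing forward in time along $\varphi_{t,x}$, hence $\crochet(\varphi_{t,x}(\tau))\ge \crochet(t,x)$, and the weight can simply be slid inside the unweighted transport estimate without any commutation. (A minor slip in the same step: the correct angular bound is $r^{-1}|\Delta_{S^2}u|\lesssim |\delsN u|_{1,1}$, not $\lesssim r^{-1}|\delsN u|_{1,1}$; the weaker, true bound is what the statement needs.)

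Second, your exit geometry is backwards. Because $dr/dt=\big(1+\tfrac14\HN^{00}\big)/\big(1-\tfrac14\HN^{00}\big)\le 1$, the quantity $r-t$ is non-decreasing along the \emph{backward} flow, so $\varphi_{t,x}$ can never cross the light cone $\Lscr=\{r=t-1\}$; it drifts toward the outer boundary of $\Mscr^{\near}_{\ell}$ and exits either through the outer cone $\{r-t=\ell r\}$ (this is what $\Lscr_{\ell,[s_0,s_1]}$ denotes --- it is not the light cone itself) or through the slice $\Mnear_{\ell,s_0}$. That is precisely why the boundary supremum in the statement is taken over $\Omega^{\ell}_{s_0,s_1}$, and why, in the application, this supremum is evaluated on the cone $r-t\simeq \ell r$ where $\crochet\simeq r-t+2$. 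Attempting, as you propose, to show that the curve avoids the outer face and reaches $\Lscr$ or the initial slice would fail; the correct verification is the opposite one, and it follows directly from the sub-luminal speed above together with $|\HN^{00}|\ll1$ (which keeps the speed positive and the characteristic ODE well defined). With these two corrections --- both of which use only the hypotheses you already have --- your argument matches the intended proof.
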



Our strategy is to apply the above estimate to the wave equations satisfied by the components $Z\us^{\N}_{\alpha\beta}$. More precisely, we observe that
\begin{equation}
\Boxt_g u^{\N}_{\alpha\beta} = L_{1\alpha\beta} + L_{2\alpha\beta} + S_{1\alpha\beta},
\end{equation}
where $u^{\N}_{\alpha\beta} := \PsiN^{\alpha'}_{\alpha}\PsiN^{\beta'}_{\beta}u_{\alpha'\beta'}$ and
$$
\aligned
& L_{1\alpha\beta} := u_{\alpha'\beta'}\Boxt_g\big(\PhiN_{\alpha}^{\alpha'}\PhiN_{\beta}^{\beta'}\big), 
\\
& L_{2\alpha\beta} := g^{\mu\nu}\del_{\mu}\big(\PhiN_{\alpha}^{\alpha'}\PhiN_{\beta}^{\beta'}\big)\del_{\nu}u_{\alpha'\beta'}, 
\qquad
&& S_{1\alpha\beta} := \PhiN_{\alpha}^{\alpha'}\PhiN_{\beta}^{\beta'}\Boxt_gu_{\alpha'\beta'}. 
\endaligned
$$
For any $Z = \del^IL^J\Omega^K$ with $\ord(Z) \leq N-4$, we then write
\begin{equation}\label{eq10-25-07-2020}
\aligned
& \Boxt_g \big( Z u^{\N}_{\alpha\beta} \big) = Z L_{1\alpha\beta} + Z L_{2\alpha\beta} + Z S_{1\alpha\beta} + S_{2\alpha\beta}, 
\qquad 
&& S_{2\alpha\beta} = S_{2\alpha\beta}[u]:= - [Z, h^{\mu\nu}\del_\mu\del_\nu] u^{\N}_{\alpha\beta},
\endaligned
\end{equation}
\begin{equation}\label{eq1-24-03-2021}
\Boxt_g(Z\del_t u^{\N}_{\alpha\beta}) = Z \del_t\big(L_{1\alpha\beta} + L_{2\alpha\beta} + S_{1\alpha\beta}\big) + S'_{2\alpha\beta},
\qquad 
S'_{2\alpha\beta} = S'_{2\alpha\beta}[u]:= - [Z\del_t,h^{\mu\nu}\del_{\mu}\del_{\nu}]u^{\N}_{\alpha\beta}.
\end{equation}
In order to eventually control  $Z\us^{\N}_{\alpha\beta}$, we need to estimate the source terms in the right-hand side, which is our main task in the rest of this Section \ref{section---53}.


\paragraph{Estimates on source terms.} 

We are in a position to establish the following result.

\begin{proposition}
\label{proposition-label-144-new}
As a consequence of \eqref{eq5-10-06-2022}, for all $\ord(Z)\leq N-4$ and $\rank(Z) = k\leq N-4$  one has 
\begin{subequations}\label{eq6-14-08-2021-new}
\begin{equation}\label{eq9-31-01-2021-new}
\aligned
\crochet^{1/2+\delta/2}r|\Boxt_gZ\us^{\N}|
\lesssim  
(\epss + C_1\eps)r^{-1} \big(r\crochet^{1/2+\delta/2}|\del \us^{\N}|_{N-4,k-1}\big) + \delta^{-1} (\epss + C_1\eps) r^{-1-\delta}, 
\endaligned
\end{equation}
while for all $\ord(Z) \leq N-5$ and $\rank(Z) = k\leq N-5$ one has 
\begin{equation}\label{eq2-24-03-2021-new}
r \, \crochet^{1+\delta}|\Boxt_gZ\del_t \us^{\N}|
\lesssim (\epss + C_1\eps)r^{-1}\big(r\crochet^{\kappa}|\del\del \us^{\N}|_{N-5,k-1}\big)
+ \delta^{-1}(\epss + C_1\eps) r^{-1-\delta}.
\end{equation}
\end{subequations}
\end{proposition}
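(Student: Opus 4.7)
My plan is to control separately each of the four source terms $ZL_1$, $ZL_2$, $ZS_1$ and the commutator $S_2$ appearing in the decomposition \eqref{eq10-25-07-2020} (together with their $\del_t$-variants in \eqref{eq1-24-03-2021}). I expect $L_1, L_2$ and $S_1$ to contribute only to the ``remainder'' $\delta^{-1}(\epss+C_1\eps)r^{-1-\delta}$, while the commutator $S_2$ should generate the hierarchical first term $(\epss+C_1\eps)r^{-1}\bigl(r\crochet^{1/2+\delta/2}|\del \us^{\N}|_{N-4,k-1}\bigr)$ at strictly lower rank $k-1$; this rank drop will later be exploited by induction on $k$ to close the estimate.

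For the ``projection'' terms $ZL_1$ and $ZL_2$, I would use that the coefficients $\PhiN_\alpha^{\alpha'}$ are smooth bounded functions of $x^a/r$ with $|\del^I \PhiN|\lesssim r^{-|I|}$ and $|\Box \PhiN|\lesssim r^{-2}$. Plugging in the pointwise bounds \eqref{eq1-09-05-2021-new}, \eqref{eq10-02-05-2020-new} and \eqref{eq5-10-06-2022} for $|u|_{N-2}$, $|\del u|_{N-3}$ and $|h^{\N00}|_{N-4}$, both terms fall within $\delta^{-1}(\epss+C_1\eps)r^{-2-\kappa}s^\delta$. For $L_2$ in particular, I would exploit that the contraction $g^{\mu\nu}\del_\mu(\PhiN)\del_\nu u$ projects $\del u$ onto the good tangential direction $\delsN$ (modulo lower-order terms in $H$), so that the $r^{-1}$ coming from $\del\PhiN$ pairs with the sharp decay $|\delsN u|_{N-3}\lesssim (\epss+C_1\eps)r^{-1-\kappa}s^\delta$. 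After multiplying by $r\crochet^{1/2+\delta/2}$ and invoking $\kappa-1/2\geq C\delta$, both contributions fit in $\delta^{-1}(\epss+C_1\eps)r^{-1-\delta}$.

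For the genuine source $ZS_1$, I would substitute the wave equation \eqref{eq11-15-05-2020-new} to express $\Box_g u_{\alpha\beta}$ as the sum of $\Pbb^\star[u]$, $\Qbb^\star[u]$, $\Ibb^\star[u]$, $u^{\mu\nu}\del_\mu\del_\nu g^\star$, $2\Rwave$ and the matter contribution $\Tbb[\phi]$ (the commutator part being accounted for by $S_2$). Each piece is already estimated in \eqref{eq3-04-06-2022}, \eqref{eq1-28-11-2020-second} and \eqref{eq4-09-05-2021-new} by $\lesssim \delta^{-1}(\epss+C_1\eps)^2 r^{-2-3\delta}\crochet^{-1+\delta}$, so multiplication by $r\crochet^{1/2+\delta/2}$ together with the smallness $\epss+C_1\eps\lesssim 1$ produces the required remainder.

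The hard part, and the origin of the hierarchical term, will be the commutator $S_2=-[Z,h^{\mu\nu}\del_\mu\del_\nu]\us^\N$. I plan to apply the hierarchy estimate \eqref{eq4-12-02-2020-second}: its principal piece $(|\HN^{00}|+t^{-1}|r-t||H|)\,|\del\del \us^\N|_{p-1,k-1}$ carries the improved harmonic bound $|h^{\N00}|_{N-4}\lesssim (\epss+C_1\eps)/r$ of Proposition~\ref{prop1-10-06-2022}, producing exactly the prefactor $(\epss+C_1\eps)r^{-1}$. The remaining Hessian $|\del\del\us^\N|_{p-1,k-1}$ I would then trade, via Proposition~\ref{prop1-22-05-2020-second} applied one level down in rank, for $\frac{r}{\crochet}|\Box_g\us^\N|_{p-1,k-1}+t^{-1}|\del\us^\N|_{p,k}$, which reproduces the factor $r\crochet^{1/2+\delta/2}|\del\us^\N|_{N-4,k-1}$ after the chosen weighting. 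The auxiliary pieces $T^\textbf{hier}$, $T^\easy$, $T^\textbf{super}$ from \eqref{eq4b-12-02-2020-second} would be handled exactly as in \eqref{eq1-29-05-2022}: $T^\easy$ and $T^\textbf{super}$ carry an extra $\del H$ factor with decay $r^{-1}\crochet^{-\kappa}$ or an explicit $t^{-1}$ and thus land in the remainder, whereas $T^\textbf{hier}$'s $|Lh^{\N00}|_{k-1}$ factor is absorbed into the lower-rank gradient term. The $\del_t$-variant \eqref{eq2-24-03-2021-new} is treated identically after inserting one additional derivative throughout, at which point the weight naturally upgrades from $\crochet^{1/2+\delta/2}$ to $\crochet^\kappa$ and $|\del\us^\N|$ is replaced by $|\del\del\us^\N|$, reflecting the cost of the extra time derivative.
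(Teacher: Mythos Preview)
Your overall plan is correct and follows the paper's route: the terms $ZL_1$, $ZL_2$, $ZS_1$ all contribute only to the remainder $\delta^{-1}(\epss+C_1\eps)r^{-1-\delta}$ (the paper simply refers to its companion \cite{PLF-YM-main} for these and confirms the bound), while the commutator $S_2$ is the sole source of the hierarchical term, obtained exactly as you say by combining the commutator hierarchy \eqref{eq4-12-02-2020-second} with the harmonic bound \eqref{eq5-10-06-2022} and the Hessian estimate \eqref{eq2-04-06-2022} (the latter absorbing the $t^{-1}|r-t|\,|H|$ piece and the $T$-terms into the remainder, as in \eqref{eq1-29-05-2022}).

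There is, however, one unnecessary and genuinely problematic detour in your handling of $S_2$. After arriving at the principal contribution $(\epss+C_1\eps)\,r^{-1}\,|\del\del\us^{\N}|_{p-1,k-1}$, you propose to invoke Proposition~\ref{prop1-22-05-2020-second} once more to trade this Hessian for $\tfrac{r}{\crochet}\bigl(|\Boxt_g\us^{\N}|_{p-1,k-1}+t^{-1}|\del\us^{\N}|_{p,k}\bigr)$. But the second piece this produces is at rank~$k$, not $k-1$, so it does \emph{not} reproduce the target $r\crochet^{1/2+\delta/2}|\del\us^{\N}|_{N-4,k-1}$ as you claim; and after weighting it becomes $(\epss+C_1\eps)\crochet^{-1/2+\delta/2}|\del\us^{\N}|_{p,k}$, which the basic Sobolev decay \eqref{eq10-02-05-2020-new} only bounds by $(\epss+C_1\eps)^2 r^{-1+\delta/2}$---not enough to fit into the $r^{-1-\delta}$ remainder either. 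The induction on $k$ would therefore not close. The paper avoids this entirely: once \eqref{eq1-18-06-2022} is in hand, one simply uses the elementary absorption
\[
|\del\del\us^{\N}|_{p-1,k-1}\ \lesssim\ |\del\us^{\N}|_{p,k-1}\ \leq\ |\del\us^{\N}|_{N-4,k-1},
\]
since an extra $\del$ raises the order by one without changing the rank. This immediately yields the hierarchical term in the stated form; no second application of the Hessian proposition is needed.
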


\begin{proof}
The proof is  similar to that in \cite[Proposition \ref{proposition-label-144}]{PLF-YM-main}. The only difference concerns the estimates for $S_{2\mu\nu}[u]$ and $S'_{2\mu\nu}[u]$, while the remaining terms are bounded by $\delta^{-1}(\epss + C_1\eps) r^{-1-\delta}$. Recalling  Proposition \ref{prop1-12-02-2020-new} together with \eqref{eq2-04-06-2022} and \eqref{eq5-10-06-2022} (which guarantees \eqref{eq1-30-05-2022}), for $(\alpha,\beta)\neq (0,0)$ we find 
\begin{equation}\label{eq1-18-06-2022}
\aligned
|S_{2\alpha\beta}|_{p,k}& \lesssim  (\epss + C_1\eps)\la r\ra^{-1}|\del\del \us^{\N}|_{p-1,k-1}
 + \delta^{-1}(\epss + C_1\eps)^2\la r\ra^{-2-\kappa}\crochet^{-\kappa}s^{3\delta},\quad && k\leq p\leq N-4
\\
|S'_{2\alpha\beta}|_{p,k}& \lesssim (\epss + C_1\eps)\la r\ra^{-1}|\del\del\del_t \us^{\N}|_{p-1,k-1}
+ \delta^{-1}(\epss + C_1\eps)^2\la r\ra^{-2-\kappa}\crochet^{-\kappa}s^{3\delta},\quad && k\leq p\leq N-5.
\endaligned 
\end{equation}
Hence, \eqref{eq9-31-01-2021-new} and \eqref{eq2-24-03-2021-new} are established.
\end{proof}


\paragraph{Proof of Proposition~\ref{prop1-18-06-2022}.}

Relying on the notation in Proposition~\ref{prop1-23-07-2020-new}, we observe that by \eqref{eq10-02-05-2020-new} and \eqref{eq2-04-06-2022} expressed on the relevant cone $r = t(1-\ell)$ or 
$r-t\simeq \ell \, r$ 
\begin{equation}
\aligned
\sup_{\Omega_{s_0,s_1}^{\ell}} \crochet^{1/2+\delta/2}\big(  r \, |\del Z\us^{\N}| +  |Z\us^{\N}|\big) 
& \lesssim  \big(\ell^{-\delta/2} + \delta^{-1}\big)(\epss + C_1\eps),
\\
\sup_{\Omega^{\ell}_{s_0,s_1}} \crochet^{1 + \delta}\big(  r \, |\del Z\del_t\us^{\N}| +  |Z\del_t\us^{\N}|\big) 
& \lesssim  \ell^{-\delta}(\epss + C_1\eps). 
\endaligned
\end{equation}
Then for $(t,x)\in \Mnear_{\ell, [s_0,s_1]}$ and $\ord(Z) = N-4, \rank(Z) = k$, after observing that \eqref{eq1'-10-01-2021-new} are guaranteed by \eqref{eq3-27-05-2020-new} and \eqref{eq1-30-05-2020-new}, we obtain 
$$
\aligned
&\crochet^{1/2+\delta/2}|(\del_t-\del_r)(rZ\us^{\N})(t,x)|
\lesssim    (\ell^{-\delta/2} + \delta^{-1})(\epss + C_1\eps)
+ \int_{t_0}^t r\la r-\tau\ra^{1/2+\delta/2} |\Boxt_g Z\us^{\N}|_{\varphi_{t,x}(\tau)}d\tau
\\
&  \hskip5.cm +  \int_{t_0}^t 
\la r-\tau\ra^{1/2+\delta/2}\Big(|\delsN \us^{\N}|_{N-3} + r|H||\del\delsN Z\us^{\N}| + |H||\del \us^{\N}|_{N-4}\Big)\Big|_{\varphi_{t,x}(\tau)}d\tau
\\
& \lesssim    
(\ell^{-\delta/2} +  \delta^{-1})(\epss + C_1\eps)
+ \int_{t_0}^t(\epss + C_1\eps) r^{-1} \crochet^{1/2+\delta/2}r|\del \us^{\N}|_{N-4,k-1}\Big|_{\varphi_{t,x}(\tau)}d\tau 
+ \delta^{-1}(\epss + C_1\eps)\int_{t_0}^t \tau^{-1-\delta}d\tau
\\
& \lesssim    \big(\ell^{-\delta/2} +\delta^{-2}\big) (\epss + C_1\eps)  
+ (\epss + C_1\eps)\int_{t_0}^t\tau^{-1}\crochet^{1/2+\delta/2}r|\del \us^{\N}|_{N-4,k-1}\Big|_{\varphi_{t,x}(\tau)}d\tau, 
\endaligned
$$
where we used\footnote{$\us^{\N}$ is a finite linear combination of $u$ with homogeneous coefficients of degree zero, thus the bounds can be applied.} \eqref{eq1-18-06-2022}, \eqref{eq10-02-05-2020-new}, and \eqref{eq7a-03-05-2020-new}.  On the other hand, observe that
$
|Z u|\lesssim  \delta^{-1} (\epss + C_1\eps) r^{-1}\crochet^{1-\kappa}s^{\delta}
\lesssim \delta^{-1}(\epss + C_1\eps) r^{\delta/2 - \kappa} \lesssim (\epss + C_1\eps),
$ 
which we apply with $u$ replaced by $\us^{\N}$.  This leads us to
\begin{equation}\label{eq1-29-11-2020-new}
\aligned
r \, \big|(\del_t-\del_r)Z \us^{\N}_{\alpha\beta} \, \big|
\lesssim (\epss + C_1\eps) + |(\del_t - \del_r)(r Z \us^{\N})|. 
\endaligned
\end{equation}
Now recalling \eqref{eq10-02-05-2020-new}, for all $\ord(Z)\leq N-4$ we have $r \, |\delsN Z u|\lesssim C_1\eps \, r^{-\kappa + \delta/2} \lesssim (\epss + C_1\eps)$. Recalling the identities $2\del_t = (x^a/r)\delsN_a + (\del_t-\del_r)$ and $\del_a = \delsN_a - (x^a/r)\del_t$, together with the ordering lemma \cite[Lemma~\ref{lem 2 high-order}]{PLF-YM-main}, we obtain the following bound in $\Mnear_{\ell,[s_0,s_1]}$: 
$$ 
\aligned
r \, \crochet^{1/2+\delta/2} |\del \us^{\N} |_{N-4,k}
& \lesssim 
\big(\ell^{-\delta/2} +\delta^{-2}\big) (\epss + C_1\eps)  
+ (\epss + C_1\eps)\int_{t_0}^t
\tau^{-1} r \, \crochet^{1/2+\delta/2}
 |\del \us^{\N}|_{N-4,k-1}\Big|_{\varphi_{t,x}(\tau)}d\tau. 
\endaligned
$$
Finally, with the notation 
$\Bbf_k(t) := \sup_{\Mnear_{\ell,[s_0,s]}} \big(  r \, \crochet^{1/2+\delta/2} |\del \us^{\N} |_{N-4,k} \big)$ 
(with $t = T^{\E}(s)$), 
the above estimate reads 
\begin{equation}\label{eq2-06-07-2022}
\Bbf_k(t)\lesssim \big(\ell^{-\delta/2} +\delta^{-2}\big) (\epss + C_1\eps) 
+ (\epss + C_1\eps)\int_{t_0}^t\tau^{-1}\Bbf_{k-1}(\tau)d\tau, 
\end{equation}
in which the last term does not exist when $k=0$. Next, by  induction on $k$ varying from $k=0$ to $k=N-4$, we conclude and arrive at \eqref{eq11-01-31-2021-new}. The estimate \eqref{eq8-24-03-2021-new} is established in a similar way, and we omit the details.


\paragraph{Useful inequalities.}

For $k\geq 1$, we can relax \eqref{eq2-06-07-2022} in the form
\begin{equation}\label{eq3-06-07-2022}
\Bbf_k(t)\lesssim \big(\ell^{-\delta/2} +\delta^{-2}\big) (\epss + C_1\eps) 
+ (\epss + C_1\eps)\int_{t_0}^t\tau^{-1}\Bbf_k(\tau)d\tau. 
\end{equation}
\vskip-.3cm
\noindent By Gronwall's inequality, we obtain the slightly weaker version 
\begin{subequations}\label{eq6-06-07-2022}
\begin{equation}\label{eq4-06-07-2022}
|\del \us|_{N-4,k} \lesssim 
\begin{cases}
\big(\ell^{-\delta/2} +\delta^{-2}\big) (\epss + C_1\eps) \la r\ra^{-1+C(\epss+C_1\vep)}\crochet^{-1/2-\delta/2},\quad &1\leq k\leq N-4,
\\
\big(\ell^{-\delta/2} +\delta^{-2}\big) (\epss + C_1\eps) \la r\ra^{-1}\crochet^{-1/2-\delta/2},\quad &k=0;
\end{cases}
\end{equation}
\begin{equation}\label{eq5-06-07-2022} 
|\del\del \us|_{N-5,k} \lesssim  
\begin{cases}
(\ell^{-\delta}+\delta^{-2})(\epss + C_1\eps)\la r\ra^{-1+C(\epss+C_1\vep)}\crochet^{-1-\delta},\quad &1\leq k\leq N-5,
\\
(\ell^{-\delta}+\delta^{-2})(\epss + C_1\eps)\la r\ra^{-1}\crochet^{-1-\delta},\quad &k=0,
\end{cases}
\end{equation}
\end{subequations}


\subsection{Pointwise estimates for metric components at low order} 

\paragraph{Objective.}

We now collect the pointwise estimates obtained in the previous sections for the different metric components. The main result is stated as follows.

\begin{proposition}  
In $\Mnear_{\ell,[s_0,s_1]}$, the metric components satisfy 
\begin{equation}\label{eq5-25-03-2021-new}
|\del h|_{N-4,k} \lesssim
\begin{cases} 
\big(\ell^{-\delta/2} +\delta^{-2}\big) (\epss + C_1\eps)\crochet^{-1/2-\delta/2} \la r\ra^{-1}, \quad &k=0,
\\
\big(\ell^{-\delta/2} +\delta^{-2}\big) (\epss + C_1\eps)\crochet^{-1/2-\delta/2} \la r\ra^{-1+C(\epss + C_1\eps)},\quad& 1\leq k\leq N-4,
\end{cases}
\end{equation}
\begin{equation}\label{eq5-25-03-2021-new'}
|\del\del h|_{N-5,k} \lesssim \begin{cases} 
\big(\ell^{-\delta} +\delta^{-2}\big) (\epss + C_1\eps)\crochet^{-1-\delta} \la r\ra^{-1}, \quad &k=0,
\\
\big(\ell^{-\delta} +\delta^{-2}\big) (\epss + C_1\eps)\crochet^{-1-\delta} \la r\ra^{-1+C(\epss + C_1\eps)},\quad& 1\leq k\leq N-5.
\end{cases}
\end{equation}
Furthermore,  the source terms $u_{\source,\alpha\beta}$ defined in \cite[\eqref{eq1-19-04-2021}]{PLF-YM-main} enjoys the near-harmonic decay in the whole domain $\MME_{[s_0,s_1]}$:  
\begin{equation}\label{eq1-29-03-2021-new}
|u_{\source}|_k\lesssim  
(\epss + C_1\eps) r^{-1+C(\epss+C_1\eps)^{1/2}},\qquad 0\leq k\leq N-5. 
\end{equation}
\end{proposition}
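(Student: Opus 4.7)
The three estimates will be assembled in two stages. The bounds \eqref{eq5-25-03-2021-new} and \eqref{eq5-25-03-2021-new'} follow by piecing together the sharp null-component controls already derived in Proposition~\ref{prop1-18-06-2022} (and their Gronwall-relaxed versions \eqref{eq6-06-07-2022}) with the wave-gauge identities \eqref{eq1-17-07-2020-second} and Lemma~\ref{eq7-08-12-2020-new}. The near-harmonic pointwise decay \eqref{eq1-29-03-2021-new} of the source term $u_{\source}$ is then obtained in a second step, by revisiting the Kirchhoff-type estimates in Proposition~\ref{Linfini wave} with these improved bounds used as inputs.

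For \eqref{eq5-25-03-2021-new} and \eqref{eq5-25-03-2021-new'}, I decompose $h = h^\star + u$. The reference part is absorbed by the asymptotic condition \eqref{equa-31-12-20-new}, which yields $|\del h^\star|_{N} \lesssim \epss \la r+t\ra^{-2}$ and $|\del\del h^\star|_N \lesssim \epss \la r+t\ra^{-3}$, both strictly stronger than required. For the perturbation $u$, I work in the semi-null frame and distinguish the ``bad'' null--null projection from the remaining components. The bad component $\del h^{\N 00}$ is controlled by the wave-gauge estimate \eqref{eq1-17-07-2020-second}, giving decay $\la r\ra^{-1-\kappa} s^\delta$ which already dominates the claimed $\la r\ra^{-1}\crochet^{-1/2-\delta/2}$ in $\Mnear_{\ell,[s_0,s_1]}$ as soon as $\kappa - 1/2 > \delta/2$. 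Similarly, $\del\del h^{\N 00}$ is absorbed by Lemma~\ref{eq7-08-12-2020-new}. The remaining semi-null components, carrying at least one ``good'' null index, are bounded by Proposition~\ref{prop1-18-06-2022} in the rank-$0$ case and by its Gronwall-relaxed versions \eqref{eq4-06-07-2022}--\eqref{eq5-06-07-2022} in the positive-rank case; this is where the factor $\la r\ra^{C(\epss+C_1\eps)}$ appears. Transferring back from the semi-null to the canonical frame via the projector identities and the ordering lemma \cite[Lemma~\ref{lem 2 high-order}]{PLF-YM-main} then produces \eqref{eq5-25-03-2021-new} and \eqref{eq5-25-03-2021-new'}.

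For the source bound \eqref{eq1-29-03-2021-new}, I use the representation $u_{\source,\alpha\beta} = \Box^{-1}_{\source}[\Box u_{\alpha\beta}]$ and apply Proposition~\ref{Linfini wave}. The task reduces to sharpening Lemma~\ref{eq3-10-06-2022-lem} at order $k \leq N-5$ by substituting the newly proved bounds \eqref{eq5-25-03-2021-new}--\eqref{eq5-25-03-2021-new'} in place of the coarser controls \eqref{eq1-26-05-2021-new}--\eqref{eq2-04-06-2022} that were used there. The decomposition \eqref{eq4-04-06-2022} of the quasi-linear term $h^{\mu\nu}\del_\mu\del_\nu u$ then produces an effective source of size $\la r\ra^{-2-\delta'}\crochet^{-1+\delta'}$ with an extra slow-growth factor $\la r\ra^{C(\epss+C_1\eps)}$ inherited from the positive-rank bounds. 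Applying Case~1 and Case~2 of Proposition~\ref{Linfini wave} in $\Mfar_{\ell,[s_0,s_1]}$ (which is past-complete), and combining with the good/bad decomposition \eqref{equa-def-good-bad-new} as in the proof of Proposition~\ref{prop1-10-06-2022}, the slow growth is transmitted to $u_{\source}$ with a reduced effective exponent $C(\epss+C_1\eps)^{1/2}$. The square-root improvement arises from a final self-improvement step: re-injecting the Kirchhoff bound on $|u_{\source}|_k$ back into the commutator-driven source at rank $k+1$ and iterating once, in the spirit of the bootstrap closure \eqref{eq3-11-06-2022}--\eqref{eq6-11-06-2022}.

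The main difficulty lies precisely in this last Kirchhoff step, because the source at positive rank already carries a small growth $\la r\ra^{C(\epss+C_1\eps)}$ from the Gronwall iteration in Proposition~\ref{prop1-18-06-2022}. A naive application of Proposition~\ref{Linfini wave} would transmit this loss unchanged, whereas the statement demands only the square-root exponent $C(\epss+C_1\eps)^{1/2}$. Producing the correct improved exponent requires a careful optimisation of the parameters $(\upsilon,\mu,\nu)$ in Case~1 of Proposition~\ref{Linfini wave} (in particular, the critical constraint $\upsilon + \mu < \nu$), together with an integration towards the light cone that exploits the smallness of $\del h^{\N 00}$ off the cone, analogous to the bad-region argument of Step~1 of Proposition~\ref{prop1-10-06-2022}.
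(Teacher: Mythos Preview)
Your plan for \eqref{eq5-25-03-2021-new} and \eqref{eq5-25-03-2021-new'} is essentially the paper's argument: split $h=h^\star+u$, handle $h^\star$ by \eqref{equa-31-12-20-new}, control the good null components $\us^{\N}$ via Proposition~\ref{prop1-18-06-2022} and its relaxed form \eqref{eq6-06-07-2022}, and control the bad component through the wave gauge. One point you should make explicit: the bad \emph{lower-index} component $h^{\N}_{00}$ is not directly covered by \eqref{eq1-17-07-2020-second} (which concerns the \emph{upper-index} $g^{\N00}$). The paper closes this via Lemma~\ref{lem1-08-12-2020-new}, namely the algebraic identity $h^{\N}_{00}=-\tfrac14 h^{\N00}+\Bbb[h]$ with $\Bbb[h]$ quadratic plus good components; only then do \eqref{eq1-17-07-2020-second} and Lemma~\ref{eq7-08-12-2020-new} apply. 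Your ``projector identities'' should be replaced by a reference to this lemma.

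For \eqref{eq1-29-03-2021-new} your overall strategy (Kirchhoff with the improved source bounds) is right, but your explanation of the square-root exponent is a genuine misconception. The source term carrying the slow growth $r^{C(\epss+C_1\eps)}$ (from $\Pbb^\star,\Qbb^\star$ and $h^{\N00}\del_t\del_t u$ via \eqref{eq7-26-03-2021-new}) has the shape $\alpha_2=-2+\nu$, $\alpha_3=-1-\delta$: this is \emph{Case~4} (super-critical) of Proposition~\ref{Linfini wave}, not Case~1. Case~4 outputs a prefactor $\mu^{-1}\nu^{-1}$, and the $\nu^{-1}$ singularity is the entire obstruction. Taking $\nu=C(\epss+C_1\eps)$ directly would cancel one full power from the source coefficient $(\epss+C_1\eps)^2$, leaving a prefactor $(\ell^{-\delta}+\delta^{-4})\delta^{-1}(\epss+C_1\eps)$ which is \emph{not} $\lesssim(\epss+C_1\eps)$ since $\delta$ is small. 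The paper's fix is a one-line trade-off: since $(\epss+C_1\eps)\le(\epss+C_1\eps)^{1/2}$, weaken the source exponent to $\nu=C(\epss+C_1\eps)^{1/2}$; then $\nu^{-1}$ costs only half a power, leaving $(\epss+C_1\eps)^{3/2}$, and the residual $(\epss+C_1\eps)^{1/2}$ absorbs the $\delta$-dependent constants. There is no self-improvement, no iteration, and no light-cone integration here. Your proposed re-injection would not help anyway: the growth $r^{C(\epss+C_1\eps)}$ comes from the fixed Gronwall bounds \eqref{eq6-06-07-2022} on $\del u,\del\del u$, which do not feed back from $u_{\source}$.
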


We also have estimates for the metric with upper indices, that is, $|\del H|:=\max_{\alpha,\beta}|\del h^{\alpha\beta}|$.

\begin{corollary}
\label{coro-5-aout-2022}  
In $\Mnear_{\ell,[s_0,s_1]}$, the metric perturbation satisfies 
\begin{equation}\label{eq7-26-03-2021-new}
\aligned
|\del u|_{N-4,k} + |\del H|_{N-4,k}& \lesssim 
\begin{cases} 
\big(\ell^{-\delta/2} +\delta^{-2}\big) (\epss + C_1\eps)\crochet^{-1/2-\delta/2} \la r\ra^{-1}, \quad &k=0,
\\
\big(\ell^{-\delta/2} +\delta^{-2}\big) (\epss + C_1\eps)\crochet^{-1/2-\delta/2} \la r\ra^{-1+C(\epss + C_1\eps)},\quad& 1\leq k\leq N-4,
\end{cases}
\\
|\del\del u|_{N-5,k} + |\del\del H|_{N-5,k} & \lesssim 
\begin{cases} 
\big(\ell^{-\delta} +\delta^{-2}\big) (\epss + C_1\eps)\crochet^{-1-\delta} \la r\ra^{-1}, \quad &k=0,
\\
\big(\ell^{-\delta} +\delta^{-2}\big) (\epss + C_1\eps)\crochet^{-1-\delta} \la r\ra^{-1+C(\epss + C_1\eps)},\quad& 1\leq k\leq N-4.
\end{cases}
\endaligned
\end{equation}
Furthermore, as a consequence of \eqref{equa-31-12-20-new}, in $\MME_{[s_0,s_1]}$, one has 
\begin{equation}\label{eq3-29-03-2021-new}
|u|_{N-5} + |h_{\alpha\beta}|_{N-5} + |h^{\alpha\beta}|_{N-5}
\lesssim  (\epss + C_1\eps) \la r\ra^{-1+C(\epss+C_1\eps)^{1/2}}.
\end{equation}
\end{corollary}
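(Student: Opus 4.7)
My plan is to deduce Corollary~\ref{coro-5-aout-2022} from the preceding Proposition via three ingredients: (i) subtracting the reference metric to pass from $h$ to $u$, (ii) using algebraic inversion to pass from lower to upper indices, and (iii) combining the splitting $u = u_\init + u_\source$ for the zeroth-order bound.

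For the derivative bounds in $\Mnear_{\ell,[s_0,s_1]}$, the reference decay \eqref{equa-31-12-20-new} yields $|\del h^\star|_{N+1} + |\del\del h^\star|_{N} \lesssim \epss \la r+t\ra^{-2}$, and in $\Mnear_{\ell,[s_0,s_1]}$, where $\la r\ra \simeq \la r+t\ra$, this contribution is strictly dominated by the right-hand sides of \eqref{eq5-25-03-2021-new} and \eqref{eq5-25-03-2021-new'}. Since $u_{\alpha\beta} = h_{\alpha\beta} - h^\star_{\alpha\beta}$, the bounds transfer directly to $|\del u|_{N-4,k}$ and $|\del\del u|_{N-5,k}$. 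To treat the upper-index components $H^{\alpha\beta}$, I would use the algebraic relation $g^{\alpha\beta} g_{\beta\gamma} = \delta^\alpha_\gamma$ which, together with the pointwise smallness $|h|_{N-2} \lesssim (\epss + C_1\eps)$ from \eqref{eq1-09-05-2021-new}, gives a convergent expansion
\[
H^{\alpha\beta} = - g_\Mink^{\alpha\alpha'} g_\Mink^{\beta\beta'} h_{\alpha'\beta'} + \sum_{n\geq 2}\mathcal{P}_n(h)^{\alpha\beta},
\]
where $\mathcal{P}_n(h)$ is a homogeneous polynomial of degree $n$ in the components of $h$, with coefficients determined by $g_\Mink$. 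Applying a derivative $\del$ (or $\del\del$) together with an ordered admissible operator and invoking the Leibniz rule, each resulting term is either the leading linear contribution, bounded pointwise by $|\del h|_{N-4,k}$ (resp.~$|\del\del h|_{N-5,k}$), or a nonlinear product of the form $h\cdot\del h$ (resp.~$h\cdot\del\del h + \del h \cdot \del h$) carrying an extra factor of smallness $(\epss + C_1\eps)$, as already recorded in~\cite[Lemma~\ref{lem-small}]{PLF-YM-main}. This yields the announced bounds for $|\del H|_{N-4,k}$ and $|\del\del H|_{N-5,k}$.

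For the pointwise bound in $\MME_{[s_0,s_1]}$, I would use the decomposition $u_{\alpha\beta} = u_{\init,\alpha\beta} + u_{\source,\alpha\beta}$ from~\cite[\eqref{eq1-19-04-2021}]{PLF-YM-main}. The linear development bound \eqref{eq3-09-05-2021-new} gives $|u_\init|_{N-4} \lesssim (\epss + C_1\eps)(t+r+1)^{-1}$, while the source bound \eqref{eq1-29-03-2021-new} gives $|u_\source|_{N-5} \lesssim (\epss + C_1\eps)\la r\ra^{-1+C(\epss+C_1\eps)^{1/2}}$. Since $(t+r+1)^{-1} \lesssim \la r\ra^{-1} \lesssim \la r\ra^{-1+C(\epss+C_1\eps)^{1/2}}$, summing yields the claimed estimate for $|u|_{N-5}$. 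The bound for $|h_{\alpha\beta}|_{N-5}$ then follows by adding the subdominant reference contribution via \eqref{equa-31-12-20-new}, and the bound for $|h^{\alpha\beta}|_{N-5}$ by the same algebraic inversion as above.

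The only non-routine point is the bookkeeping involved in differentiating the infinite series for $H^{\alpha\beta}$ up to order $N-4$ (resp.~$N-5$), which is harmless since $|h|_{N-2}$ is controlled and $N$ is taken sufficiently large; the nonlinear corrections are always strictly subdominant in our regime of smallness, so no term escapes the announced bounds.
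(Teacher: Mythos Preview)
Your proposal is correct and follows essentially the same route as the paper: for \eqref{eq7-26-03-2021-new} the paper subtracts the reference via $h_{\alpha\beta}=h^\star_{\alpha\beta}+u_{\alpha\beta}$ together with \eqref{equa-31-12-20-new}, and for the upper indices uses the identity $-h^{\alpha\beta}=h_{\alpha\beta}+\Abb^{\alpha\beta}[h]$ combined with the already-recorded bound \eqref{eq4-25-03-2021-new} on $\del\Abb$ and $\del\del\Abb$ (your series expansion for $H^{\alpha\beta}$ is precisely this $\Abb$-decomposition written out explicitly). For \eqref{eq3-29-03-2021-new} the paper likewise splits $u=u_\init+u_\source$ and invokes \eqref{equa-31-12-20-new}, \eqref{eq3-09-05-2021-new}, \eqref{eq1-29-03-2021-new}, then passes to upper indices by the same smallness argument you give.
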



\paragraph{Sharp decay bounds on gradient and Hessian.}

The following identities will be used.

\begin{lemma}\label{lem1-08-12-2020-new}
When $|h|_{[p/2]}\ll 1$, one has
$
h^{\N}_{00} = h_{00} = -\frac{1}{4}h^{\N00} + \Bbb[h]
$
with 
\begin{equation}\label{eq2-25-03-2021-new}
\big| \del_t \Bbb[h] \big|_{p,k} \lesssim  |\del_t \hs^{\N}|_{p,k} + |\del\Abb[h]|_{p,k},
\qquad
\big| \del_t\del_t\Bbb[h] \big|_{p,k} \lesssim   |\del_t\del_t \hs^{\N}|_{p,k}
+ |\del\del\Abb[h]|_{p,k}
\end{equation}
with
$$
\aligned
&|\del\Abb[h]|_{p,k} \lesssim \sum_{p_1+p_2=p\atop k_1+k_2=k}|\del h|_{p_1,k_1}|h|_{p_2,k_2},
\\
&|\del\del \Abb[h]|_{p,k} \lesssim\sum_{p_1+p_2=p\atop k_1+k_2=k}\Big(|\del\del h|_{p_1,k_1}|h|_{p_2,k_2} + |\del h|_{p_1,k_1}|\del h|_{p_2,k_2} \Big)
+
\sum_{p_1+p_2 + p_3=p\atop k_1+k_2 + k_3=k}|\del h|_{p_1,k_1}|\del h|_{p_2,k_2}|h|_{p_3,k_3}.
\endaligned
$$
\end{lemma}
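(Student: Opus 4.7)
The first equality $h^{\N}_{00} = h_{00}$ is immediate from the construction of the semi-null frame, since $\delN_0 = \del_t$ and hence $h^{\N}_{00} := h(\delN_0,\delN_0) = h(\del_t,\del_t) = h_{00}$.

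For the second identity, I proceed in two algebraic steps. First, starting from the contraction $g^{\alpha\gamma}g_{\gamma\beta} = \delta^\alpha_\beta$ with $g = \gMink + h$ (and the corresponding Neumann expansion of $g^{-1}$, which converges under the smallness hypothesis $|h|_{[p/2]} \ll 1$), one obtains
\[
h_{\mu\nu} = -g_{\Mink,\mu\alpha}\, g_{\Mink,\nu\beta}\, h^{\alpha\beta} + \Abb_{\mu\nu}[h],
\]
where $\Abb_{\mu\nu}[h]$ is a polynomial remainder at least quadratic in the components of $h$. Reading off the $(0,0)$-component, $h_{00} = -h^{00} + \Abb_{00}[h]$. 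Second, I transform the coordinate component $h^{00}$ into the semi-null frame via the transition $\del_a = \delsN_a - (x^a/r)\del_t$, which re-expresses $h^{00}$ as a linear combination of $h^{\N 00}$ together with the mixed components $h^{\N 0a}$ and the spatial components $h^{\N ab}$ (these latter families are precisely the good components collectively denoted $\hs^{\N}$). With the numerical factors produced by the null-frame normalization, the linear contribution of $h^{\N 00}$ to $h_{00}$ equals $-\tfrac14 h^{\N 00}$, and collecting all remaining terms into
\[
\Bbb[h] := h_{00} + \tfrac{1}{4} h^{\N 00} = (\text{linear combination of components of }\hs^{\N}) + \Abb_{00}[h]
\]
gives the claimed decomposition.

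For the estimates \eqref{eq2-25-03-2021-new}, I apply $\del_t$ (respectively $\del_t\del_t$) to $\Bbb[h]$ and estimate each piece separately. The linear ``good'' contribution yields a term bounded by $|\del_t\hs^{\N}|_{p,k}$ (respectively $|\del_t\del_t\hs^{\N}|_{p,k}$), since differentiating a linear expression in components of $\hs^{\N}$ produces only such derivatives. The quadratic remainder $\Abb_{00}[h]$ is handled by the Leibniz rule: one (or two) applications distribute derivatives across a product of two factors of $h$, yielding sums of the claimed form
\[
|\del\Abb[h]|_{p,k} \lesssim \sum_{p_1+p_2 = p,\, k_1+k_2=k} |\del h|_{p_1,k_1}\, |h|_{p_2,k_2},
\]
together with the analogous second-derivative bound (with the additional cubic contribution $\sum |\del h||\del h||h|$ appearing when two derivatives fall on distinct factors of a product of three copies of $h$). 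The rank-preservation built into the bound relies on the standard ordering lemma for admissible operators (as in \cite[Lemma~\ref{lem 2 high-order}]{PLF-YM-main}), which permits reshuffling $\del$, $L$, and $\Omega$ without loss of rank.

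The main technical point lies in the bookkeeping of the first step: one must verify that, after metric inversion and transition to the semi-null frame, \emph{only} the $-\tfrac14 h^{\N 00}$ piece appears as a ``bad'' linear term, with every other linear contribution genuinely belonging to the good family $\hs^{\N}$ for which the sharp decay bounds of Proposition~\ref{prop1-18-06-2022} are applicable. Once this identification is in place, the derivative estimates reduce to routine multilinear bounds via the Leibniz rule.
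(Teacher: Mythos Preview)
Your overall strategy is right in spirit, but the execution contains a genuine error that makes the argument fail as written. When you write $h^{00} = h^{\N00} + 2(x^a/r)h^{\N0a} + (x^ax^b/r^2)h^{\Nab}$ and then assert that ``these latter families are precisely the good components collectively denoted $\hs^{\N}$,'' you are conflating the \emph{upper-index} null-frame components $h^{\N0a}, h^{\Nab}$ with the \emph{lower-index} good components $\hs^{\N}_{\alpha\beta}$ (the objects for which Proposition~\ref{prop1-18-06-2022} gives sharp decay). These are not the same: the Minkowski metric is not diagonal in the semi-null frame, so raising or lowering an index there mixes in $h^{\N}_{00}$. Consequently your route, as stated, does \emph{not} produce the coefficient $-\tfrac14$ (a direct computation gives $-1$ at this stage), and the leftover terms are not yet of the good type.

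The paper organizes the algebra differently and this is where the $-\tfrac14$ actually comes from. One starts from the identity
\[
-h^{\N00} \;=\; h_{00} - \tfrac{2x^a}{r}h_{a0} + \tfrac{x^ax^b}{r^2}h_{ab} - \PsiN^0_\alpha\PsiN^0_\beta\,\Abb^{\alpha\beta}[h],
\]
which relates the upper-index null component directly to the \emph{lower-index Cartesian} components. One then substitutes the null-frame lower-index expressions $h_{a0} = -(x^a/r)h^{\N}_{00} + (\text{good})$ and $h_{ab} = (x^ax^b/r^2)h^{\N}_{00} + (\text{good})$, where ``good'' now genuinely means components of $\hs^{\N}$. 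Collecting the four copies of $h^{\N}_{00}$ (one from $h_{00}$, two from the $h_{a0}$ term, one from the $h_{ab}$ term) gives $-h^{\N00} = 4h^{\N}_{00} + (\text{good}) - \Abb^{\N00}$, hence the factor $-\tfrac14$. Once this decomposition is correctly established, your treatment of the derivative bounds \eqref{eq2-25-03-2021-new} via Leibniz is fine.
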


\begin{proof} We only outline the proof and refer to \cite{PLF-YM-main} where we derived the algebraic identity
\begin{subequations}
\begin{equation}\label{eq1-08-12-2020-new}
-h^{\N 00} 
= h_{00} - \frac{2x^a}{r}h_{a0} + \frac{x^ax^b}{r^2}h_{ab} - \PsiN_{\alpha}^0\PsiN_{\beta}^0\Abb^{\alpha\beta}[h], 
\end{equation}
with
\begin{equation}\label{eq1-06-07-2022}
\aligned
& h_{00} = h^{\N}_{00},\quad 
\qquad
h_{a0} = \hN_{\alpha\beta}\PsiN_a^{\alpha}\PsiN_0^{\beta} = -(x^a/r)\hN_{00}
+ \sum_{(\alpha,\beta)\neq(0,0)}\PsiN_a^{\alpha}\PsiN_0^{\beta}\hs^{\N}_{\alpha\beta},
\\
& h_{ab} = \frac{x^ax^b}{r^2}h^{\N}_{00}  + \sum_{(\alpha,\beta)\neq(0,0)}\PsiN_a^{\alpha}\PsiN_b^{\beta}\hs^{\N}_{\alpha\beta}.
\endaligned
\end{equation}
\end{subequations} 
Substituting \eqref{eq1-06-07-2022} into \eqref{eq1-08-12-2020-new} and separating the linear terms from the nonlinear ones, we obtain the result.
\end{proof}


\paragraph{Proof of \eqref{eq5-25-03-2021-new}.}

Combining Lemma \ref{lem1-08-12-2020-new} with \eqref{eq1-26-05-2021-new} and \eqref{eq1-09-05-2021-new}, we obtain  
\begin{equation}\label{eq4-25-03-2021-new}
|\del \Abb |_{N-4} + |\del\del \Abb |_{N-5} \lesssim  
\delta^{-1}(\epss + C_1\eps)^2 r^{-1-\kappa}\crochet^{-\kappa}s^{2\delta}.
\end{equation}
Combining \eqref{eq1-17-07-2020-second}, Lemma \ref{eq7-08-12-2020-new}, and the condition $\delta^{-1}(\epss+C_1\eps)\lesssim 1$, we thus find
$$
\aligned
|\del_t h^{\N}_{00}|_{N-4,k}& \lesssim  |\del_t \hs^{\N}|_{N-4,k} 
+  \delta^{-1}(\epss + C_1\eps) r^{-1-\kappa}s^{2\delta},
\\
|\del_t\del_t h^{\N}_{00}|_{N-5,k}& \lesssim  |\del_t\del_t \hs^{\N}|_{N-5,k} 
+ (\epss + C_1\eps)r^{-1-\kappa}\crochet^{-\kappa}s^{2\delta}, 
\endaligned
$$
together with
$$
\aligned
|\del_t^m h|_{p,k} & \lesssim   |\del^m \hs^{\N}|_{p,k} + |\del_t^m h^{\N}_{00}|_{p,k}
 \lesssim   |\del^m \us^{\N}|_{p,k} + |\del^m h^{\star}|_{p,k} + |\del_t^m h^{\N}_{00}|_{p,k}.
\endaligned
$$
In view of Proposition \ref{prop1-18-06-2022} and \eqref{equa-31-12-20-new}, we arrive at \eqref{eq5-25-03-2021-new} and \eqref{eq5-25-03-2021-new'}.


\paragraph{Proof of \eqref{eq7-26-03-2021-new}.}

We only outline the argument. 
For the bounds on $|\del u|$ and $|\del \del u|$, we only need to observe that $h_{\alpha\beta} = h^{\star}_{\alpha\beta} + u_{\alpha\beta}$ together with \eqref{equa-31-12-20-new}. For the bounds on $|\del H|$ and $|\del\del H|$, namely $|\del h^{\alpha\beta}|$ and $|\del\del h^{\alpha\beta}|$, we recall that $-h^{\alpha\beta} = h_{\alpha\beta} + \Abb^{\alpha\beta}[h]$ and apply \eqref{eq4-25-03-2021-new}.


\paragraph{Proof of \eqref{eq1-29-03-2021-new}.}

We rely on Proposition \ref{Linfini wave}, and the proof is quite similar to that of the source in \cite[Proposition~\ref{section-15-1} ]{PLF-YM-main}, while only difference is that here we have better pointwise bounds \eqref{eq5-25-03-2021-new}, \eqref{eq5-25-03-2021-new'} and \eqref{eq7-26-03-2021-new}. More precisely, by \eqref{eq1-04-12-2020-second}, \eqref{eq11-04-06-2020-second}, \eqref{eq1-28-11-2020-second} and \eqref{eq4-09-05-2021-new},  the terms $\Ibb^{\star}[u]$, ${ u^{\mu\nu}\del_{\mu}\del_{\nu}h^{\star}}$, $\Tbb[\phi]$ and ${ ^{(w)}R^{\star}_{\alpha\beta}}$ are bounded (provided $\kappa \geq 5\delta$ and $\mu \geq 3/4 + (7/4)\delta$) 
by
$$
\delta^{-1}(\epss + C_1\eps)^2r^{-2-3\delta}\crochet^{-1+\delta}.
$$ 

We only need to improve the estimate of $|\Box u_{\alpha\beta}|_{N-5}$ in $\Mnear_{\ell,[s_0,s_1]}$. More precisely, we have
\begin{equation}\label{eq126-04-2021-new}
|\Pbb^{\star}[u]|_k + |\Qbb^{\star}[u]|_k\lesssim 
\begin{cases}
(\ell^{-\delta} + \delta^{-4})(\epss + C_1\eps)^2\crochet^{-1-\delta} r^{-2},	\quad &k=0,
\\
(\ell^{-\delta} + \delta^{-4})(\epss + C_1\eps)^2\crochet^{-1-\delta} r^{-2+C(\epss+C_1\eps)},\quad &1\leq k\leq N-4.
\end{cases}
\end{equation}
Similarly, considering the most challenging component $h^{\N00}\del_t\del_t u$ we improve the estimate of quasi-linear terms as 
\begin{equation}
|h^{\mu\nu}\del_{\mu}\del_{\nu} u|_k\lesssim 
\begin{cases}
(\ell^{-\delta} + \delta^{-2})(\epss + C_1\eps)^2 \crochet^{-1-\delta} r^{-2}, \quad & k=0,
\\
(\ell^{-\delta} + \delta^{-2})(\epss + C_1\eps)^2 \crochet^{-1-\delta} r^{-2+C(\epss + C_1\eps)}, \quad& 1\leq k\leq N-5,
\end{cases}
\end{equation}
provided $\kappa\geq 1/2 + (5/4) \delta$. We thus conclude that, in $\Mnear_{[s_0,s_1]}$, 
\begin{equation}\label{eq2-09-05-2021-new}
\aligned
|\Box u_{\alpha\beta}|_k
& \lesssim \delta^{-1}(\epss + C_1\eps)^2s^{-2-3\delta}\crochet^{-1+\delta} 
\\
& \quad 
+ \begin{cases}
(\ell^{-\delta} + \delta^{-4})(\epss + C_1\eps)^2\crochet^{-1-\delta} r^{-2},	\quad &k=0,
\\
(\ell^{-\delta} + \delta^{-4})(\epss + C_1\eps)^2\crochet^{-1-\delta} r^{-2+C(\epss+C_1\eps)},\quad &1\leq k\leq N-4.
\end{cases}
\endaligned
\end{equation}
In $\Mfar_{\ell,[s_0,s_1]}$ and for all $k\leq N-5$, we use $\crochet^{-1} \lesssim \ell^{-1} \la r\ra^{-1}$ together with \eqref{eq10-02-05-2020-new}, \eqref{eq1-09-05-2021-new} and \eqref{eq1-15-08-2021-new}, and arrive at
\begin{equation}
\aligned
|\del u \del u|_{N-5}& \lesssim  \ell^{-4\delta} (\epss + C_1\eps)^2 r^{-2-3\delta}\crochet^{-1+\delta},
\\
|h^{\alpha\beta}\del_{\alpha}\del_{\beta} u|_{N-5}& \lesssim  \ell^{-1}{\delta^{-1}}(\epss + C_1\eps)^2t^{-1+{ (3/2)}\delta}r^{-1-\kappa} \crochet^{-1+\kappa}.
\endaligned
\end{equation} 
In conclusion, we have a control of the wave operator  
\begin{equation}
\aligned
|\Box u_{\alpha\beta}|_k& \lesssim \big(\ell^{-4\delta} + \delta^{-1}\big)(\epss + C_1\eps)^2 r^{-2-3\delta}\crochet^{-1+\delta} 
+ \ell^{-1}{\delta^{-1}}(\epss + C_1\eps)^2t^{-1+{ (3/2)}\delta}r^{-1-\kappa} \crochet^{-1+\kappa}
\\
& \quad + (\ell^{-\delta} + \delta^{-4})(\epss + C_1\eps)^2\crochet^{-1-\delta} r^{-2+C(\epss+C_1\eps)^{1/2}}.
\endaligned
\end{equation}
Here for simplicity we have replaced the decay factors $r^{-1}$ and $r^{-2+C(\epss+C_1\eps)}$ by $r^{-2+C(\epss+C_1\eps)^{1/2}}$. Applying Proposition \ref{Linfini wave} for the case $k\geq 1$, we obtain
$$
\aligned
|u_{\source}|_k
& \lesssim  \underbrace{\big(\ell^{-4\delta} + \delta^{-1}\big) \delta^{-2}(\epss + C_1\eps)^2r^{-1}}_{\text{Case 2 with } \mu = \delta, \nu = 3\delta} 
+ \underbrace{\ell^{-1}\delta^{-3}(\epss + C_1\eps)^2r^{-1}}_{\text{Case 1 with } \upsilon = (3/2)\delta\atop \mu = 1-\min(\lambda,\kappa), \nu = \min(\lambda,\kappa), \nu-\mu-\upsilon\geq \delta}
\\
&\quad
+\underbrace{(\ell^{-\delta} +\delta^{-4})\delta^{-1}(\epss + C_1\eps)^{3/2} r^{-1+C(\epss+C_1\eps)^{1/2}}}_{ 
\text{Case 4 with }\mu = \delta, \nu=C(\epss + C_1\eps)^{1/2},  C(\epss + C_1\eps)^{1/2}\leq \delta/2}.
\endaligned
$$
Provided $(\ell^{-\delta} +\delta^{-4})\delta^{-1}(\epss + C_1\eps)^{1/2} \lesssim 1$, we arrive at \eqref{eq1-29-03-2021-new}. 


\paragraph{Proof of \eqref{eq3-29-03-2021-new}.}

We rely on the decomposition 
$h_{\alpha\beta} = h^{\star}_{\alpha\beta} + u_{\alpha\beta} = h^{\star}_{\alpha\beta} + u_{\init,\alpha\beta} + u_{\source, \alpha\beta}$
and, by recalling \eqref{equa-31-12-20-new}  and \eqref{eq3-09-05-2021-new} together with \eqref{eq1-29-03-2021-new}, we can control $|h_{\alpha\beta}|$. For $|h^{\alpha\beta}|$, we only need to observe that $\max_{\alpha,\beta}|h^{\alpha\beta}|\lesssim \max_{\alpha,\beta}|h_{\alpha\beta}|$, provided that $\max_{\alpha,\beta}|h_{\alpha\beta}|$ is sufficiently small. 


\section{Closing the bootstrap estimates}

\subsection{Improved energy estimate for general metric components} 

\paragraph{Objective.}

This section is devoted to the following result. The proof begins with the energy estimate~\eqref{prop energy-ici-exterior-new-equation} applied to \eqref{eq11-15-05-2020-ab-new}. Sufficient integrable $L^2$ decay must be checked for the terms arising in the right-hand side of the energy estimate. 

\begin{proposition}[Improved energy estimates for the matter field]
\label{proposition-section17-matter-new} 
Under the bootstrap assumptions,   
the matter field satisfies 
\begin{equation}
\Fenergy_{\kappa}^{\ME,N}(s,u) 
\leq {C_1 \over 2} \, \eps \, s^\delta, 
\qquad s \in [s_0, s_1]. 
\end{equation} 
\end{proposition}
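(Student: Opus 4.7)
The plan is to differentiate the reduced Einstein equation \eqref{eq11-15-05-2020-new} by every ordered admissible operator $Z=\partial^I L^J\Omega^K$ with $\ord(Z)\leq N$, apply the weighted energy identity \eqref{prop energy-ici-exterior-new-equation} with exponent $\eta=\kappa$ and $c=0$ to $Zu_{\alpha\beta}$, sum over $Z$, and close a Gronwall-type inequality. After using \eqref{eq1-17-08-2021-new} to convert weights, it suffices to bound every source contribution in the form $\|s\,\crochet^{\kappa}\zeta\,|T|_{N}\|_{L^2(\MME_s)}\lesssim \delta^{-c}(\epss+C_1\eps)^2\,s^{-1-\delta'}$ for some $\delta'>0$ integrable in $s$. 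Combined with the initial bound \eqref{eq1-14-01-2021-new-init}, integration then yields $\Fenergy_\kappa^{\ME,N}(s,u)\leq \big(C_0\eps+C\delta^{-c}(\epss+C_1\eps)^2\big)s^\delta$, and choosing $C_1$ large (after $\eps,\epss$ sufficiently small with $\eps\leq C_\star\epss$) absorbs all error terms and closes the improved bound $\tfrac{C_1}{2}\eps\,s^\delta$.

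Most of the source terms on the right-hand side of \eqref{eq11-15-05-2020-new} are already handled in Section~\ref{sectionn33-new}: the linear-critical combination $W^{\textbf{linear}}$ by \eqref{eq2-02-06-2022-second}, the super-critical combination $W^{\super}$ collecting $\Fbb(u,g^\star;\del g^\star,\del g^\star)$, $\Bbb^\star[u]$, $\Cbb^\star[u]$ and the matter term $\Tbb(\phi)$ by \eqref{eq1-22-03-2021-second}, the reference Ricci source $\Rwave_{\alpha\beta}$ by \eqref{eq1-21-05-2021-new}, and the interaction $u^{\mu\nu}\del_\mu\del_\nu h^\star$ by \eqref{eq5-17-06-2020-second}. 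Each delivers an $L^2$ bound $\lesssim\delta^{-c}(\epss+C_1\eps)^2\,s^{-1-\delta}$, which integrates to the desired level. The bulk terms $G_{g,\kappa}[u]$ and $\Omega_{g,\kappa,0}[u]$ in \eqref{eq7-08-05-2020-one-new} are treated in the same way, with the crucial observation that the light-bending condition \eqref{eq3'-27-05-2020-new} together with Proposition~\ref{prop1-10-06-2022} ensures that the $\HN^{00}|\del_tu|^2$ contribution in $\Omega$ enters with the favorable sign and combines with the tangential null energy rather than threatening the estimate.

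The genuinely delicate piece is the quasi-linear commutator $[Z,h^{\mu\nu}\del_\mu\del_\nu]u_{\alpha\beta}$. I would invoke the hierarchy Proposition~\ref{prop1-12-02-2020-new} to split it, in $\MMEnear_s$, into the leading remainder $(|\HN^{00}|+t^{-1}|r-t||H|)\,|\del\del u|_{N-1,k-1}$ plus $T^{\textbf{hier}}[H,u]$, $T^{\textbf{easy}}[H,u]$, $T^{\textbf{super}}[H,u]$, with the analogous splitting \eqref{eq5-12-02-2020-new} in $\MMEfar_s$. For the leading remainder, the sharp harmonic decay of $h^{\N00}$ from Proposition~\ref{prop1-10-06-2022} and the bound \eqref{eq5-25-03-2021-new} on $|\del h|$ from Corollary~\ref{coro-5-aout-2022} allow us to convert the pointwise factor into an $s^{-1}$-decaying multiplier of $\|\crochet^{-1+\kappa}|u|_{p,k}\|_{L^2(\MME_s)}$, which is then re-expressed via the Hardy--Poincar\'e inequality \eqref{17-08-2022-trois} as a multiple of $\Fenergy_\kappa^{\ME,N}$ itself; this contribution enters the Gronwall argument with a small coefficient. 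The term $T^{\textbf{easy}}$ carries a differentiated coefficient $\del H$ whose pointwise decay from Corollary~\ref{coro-5-aout-2022} is strong enough to yield an integrable $L^2$ bound directly, while $T^{\textbf{super}}$ gains the explicit $t^{-1}$ factor. The term $T^{\textbf{hier}}$, which contains $|L\HN^{00}|_{k-1}$ or $|LH|_{k-1}$ factors coupled to $|\del\del u|_{p_2,k_2}$ with \emph{strictly smaller rank} $k_2<k$, is the mechanism through which the rank hierarchy enters: one induces on $k$ from $k=0$ up to $k=N$, using at each step the rank-$(k-1)$ energy control together with the sharp pointwise bounds \eqref{eq5-25-03-2021-new}--\eqref{eq5-25-03-2021-new'} for the coefficients.

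The main obstacle I anticipate is precisely this rank-induction step at top order $N$ together with the quasi-linear leading remainder: at the critical harmonic level $\lambda=1$, the naive bound on $|\HN^{00}|\,|\del\del u|_{N-1,k-1}$ is borderline logarithmic, and only the sharp $1/r$ decay on $h^{\N00}$ afforded by Proposition~\ref{prop1-10-06-2022}, combined with the pointwise gain \eqref{eq11-01-31-2021-new} on $|\del\us|$ established via the light-bending sign, is strong enough to render the source integrable in $s$. It is also here that the hypothesis $\kappa\in(1/2,1)$ is used in an essential way, both via the Hardy--Poincar\'e inequality (requiring $\kappa>1/2$) and via the sharp decay of the reference Ricci \eqref{eq4-09-05-2021-new} which demands enough weight to produce an integrable right-hand side.
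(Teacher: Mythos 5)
Your overall architecture (energy identity with $\eta=\kappa$, bounds on each source, Gronwall with a small $s^{-1}$ coefficient, induction on the rank) does match the paper, but there is a genuine gap in your accounting of the sources: you never treat the quadratic semilinear terms $\Qbb^{\star}[u]$ and the quasi-null terms $\Pbb^{\star}[u]$, and they are not contained in $W^{\textbf{linear}}$ or $W^{\super}$ (those only collect contributions with at least one factor $\del g^{\star}$, plus the matter and Ricci terms). At top order $N$ these purely perturbative quadratic terms are exactly the borderline ones in the harmonic regime: they scale like $|\del u|_{p,k}\,|\delsN u|$ or $|\del \us^{\N}|\,|\del\us^{\N}|$, so they can only be bounded by $(\epss+C_1\eps)\,s^{-1}\Fenergy_{\kappa}^{\ME,p,k}(s,u)$ plus a lower-rank term with a small loss $s^{C(\epss+C_1\eps)}$, never by an integrable $(\epss+C_1\eps)^2 s^{-1-\delta'}$ as your first paragraph asserts of ``every source contribution''. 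This is the content of Proposition~\ref{prop1-10-07-2022}, which needs the sharp near-harmonic decay \eqref{eq7-26-03-2021-new} and, for $\Pbb^{\star}_{00}$, the special structure in terms of $\del\us^{\N}$ together with \eqref{eq4-06-07-2022}; these terms feed the same rank hierarchy that you invoke only for the commutator. Without them the Gronwall inequality you set up is incomplete.

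A second, more local, inaccuracy: your treatment of the leading commutator remainder $(|\HN^{00}|+t^{-1}|r-t||H|)\,|\del\del u|_{p-1,k-1}$ via the Hardy--Poincar\'e inequality is misdirected, since that inequality controls the undifferentiated quantity $|u|_{p,k}$, whereas the remainder carries a second-order factor $|\del\del u|_{p-1,k-1}$ which is not pointwise controlled at order $N-1$. The paper instead couples the commutator norm to the weighted Hessian norm through \eqref{eq3-28-12-2020-new} and Proposition~\ref{propo2-22-05-2020-new} and absorbs it (the $\Abf$--$\Bbf$ argument in Proposition~\ref{prop--eq1-27-03-2021}), and for the high-order coefficients $|L\HN^{00}|_{p_1-1}$ (which are not pointwise bounded) it places $|\LOmega u|_{k_1-1}$ in $L^2$ via \eqref{eq1-18-08-2021-lambda1} against the pointwise bound \eqref{eq7-26-03-2021-new} on $|\del\del u|$ at low order; your sketch would need this bilinear splitting to be made explicit. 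Finally, note that the light-bending sign is used for the positivity of the boundary energy $\Eenergy^{\Lcal}$ in \eqref{eq3-11-07-2022} and for the coercivity comparison of Lemma~\ref{lem1-28-03-2021}, while the $\HN^{00}|\del_t Zu|^2$ bulk term is controlled by its harmonic size \eqref{eq5-10-06-2022}, not by its sign as you suggest.
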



\paragraph{Nonlinearities.}

\begin{proposition}[Sharp energy estimates for nonlinearities]
\label{prop1-10-07-2022}
Under the bootstrap assumptions  
and as a consequence of the pointwise metric estimates \eqref{eq7-26-03-2021-new}, for all $s \in [s_0, s_1]$ one has 
\begin{equation}\label{eq2-21-03-2021-new}
\aligned
 \|\crochet^{\kappa}J\zeta^{-1} |\Boxt_g u |_{p,k}\|_{L^2(\MME_s)}
& \lesssim   (\ell^{-\delta/2}+\delta^{-2})(\epss + C_1\eps) \big(s^{-1}\, \Fenergy_{\kappa}^{\ME,p,k}(s,u) + s^{-1+C(\epss + C_1\eps)}\Fenergy_{\kappa}^{\ME,p,k-1}(s,u)\big) 
\\
& \quad + (\epss + C_1\eps)^2 s^{-1-\delta} + R_\err^\star(s),
\endaligned
\end{equation}
 where $R_\err^\star(s)$ denotes the Ricci upper bound in~\eqref{eq1-21-05-2021-new}. 
\end{proposition}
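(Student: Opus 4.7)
The plan is to start from the wave equation \eqref{eq11-15-05-2020-new} satisfied by $Z u_{\alpha\beta}$ for every admissible operator $Z$ with $\ord(Z)\leq p$ and $\rank(Z)\leq k$, thereby decomposing $|\Boxt_g u|_{p,k}$ into the commutator $[Z,h^{\mu\nu}\del_\mu\del_\nu]u$ plus the contributions $Z\Pbb^\star[u]$, $Z\Qbb^\star[u]$, $Z\Ibb^\star[u]$, $Z(u^{\mu\nu}\del_\mu\del_\nu g^\star)$, $Z\Tbb(\phi)$, and $Z\wR^\star$. All terms other than the commutator have already been estimated in weighted $L^2$ against $\crochet^\kappa J\zeta^{-1}$ in Section~\ref{sectionn33-new}: the quasi-null parts by \eqref{eq2-02-06-2022-second}--\eqref{eq1-22-03-2021-second}, the reference-perturbation interaction and the backreaction term by the Hardy--Poincar\'e application \eqref{eq5-17-06-2020-second}, the matter contribution through the bootstrap on $\phi$, and the reduced reference Ricci directly from \eqref{eq1-21-05-2021-new}. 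Their combined contribution produces the $(\epss+C_1\eps)^2 s^{-1-\delta}+R_\err^\star(s)$ piece of \eqref{eq2-21-03-2021-new}, so the real task is to control the commutator.

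To handle the commutator I would split $\MME_s = \Mnear_{\ell,s} \cup \Mfar_{\ell,s}$ and apply Proposition~\ref{prop1-12-02-2020-new} in each region. On $\Mnear_{\ell,s}$ the commutator is dominated by $(|\HN^{00}|+t^{-1}|r-t||H|)\,|\del\del u|_{p-1,k-1}$ together with $T_{p,k}^\textbf{hier}[H,u] + T_{p,k}^\easy[H,u] + T_{p,k}^\super[H,u]$; on $\Mfar_{\ell,s}$ the cruder bound \eqref{eq3-10-07-2022} applies, in conjunction with the Hessian estimate \eqref{eq1-15-08-2021-new}. For each such term I carry out an $L^\infty$--$L^2$ pairing, keeping the $u$-factor in $L^2$ and the $H$-factor in $L^\infty$ via the sharp harmonic decay of Corollary~\ref{coro-5-aout-2022}: the pointwise bound $|\del H|_{N-4,0}\lesssim(\ell^{-\delta/2}+\delta^{-2})(\epss+C_1\eps)\crochet^{-1/2-\delta/2}\la r\ra^{-1}$ (together with its rank~$\geq 1$ variant containing the loss factor $\la r\ra^{C(\epss+C_1\eps)}$) supplies the prefactor $(\ell^{-\delta/2}+\delta^{-2})(\epss+C_1\eps)$ and, thanks to $\la r\ra\gtrsim s$ throughout $\MME_s$, converts the factor $\la r\ra^{-1}$ into the decisive $s^{-1}$ on the right-hand side of \eqref{eq2-21-03-2021-new}.

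The main obstacle is the bookkeeping that separates the $\Fenergy_{\kappa}^{\ME,p,k}(s,u)$ contribution from the $\Fenergy_{\kappa}^{\ME,p,k-1}(s,u)$ one. By construction $T^\textbf{hier}$ is the ``rank-drop'' term: the constraint $p_1+k_2=k$ with $p_1\geq 1$ forces the $u$-factor to sit at rank $\leq k-1$, and its $H$-factor, which carries at least one boost, falls into the rank~$\geq 1$ case of Corollary~\ref{coro-5-aout-2022}, producing the $\la r\ra^{C(\epss+C_1\eps)}$ loss; this reproduces exactly the second term $(\ell^{-\delta/2}+\delta^{-2})(\epss+C_1\eps)s^{-1+C(\epss+C_1\eps)}\Fenergy_{\kappa}^{\ME,p,k-1}(s,u)$. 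The leading term $(|\HN^{00}|+t^{-1}|r-t||H|)|\del\del u|_{p-1,k-1}$, as well as $T^\easy$ and $T^\super$, can be arranged so that the $H$-factor is taken at rank~$0$, thereby using the lossless bound of Corollary~\ref{coro-5-aout-2022} and yielding the sharp $s^{-1}\Fenergy_{\kappa}^{\ME,p,k}(s,u)$ contribution. The delicate point throughout is to verify that in every pairing the $u$-factor retained in $L^2$ has order $\leq p$ and rank $\leq k$ (respectively $\leq k-1$), which follows from the index constraints of Proposition~\ref{prop1-12-02-2020-new} together with the reduction $|\del\del u|_{p_2,k_2}\leq |\del u|_{p_2+1,k_2}$ absorbable into the corresponding energy.
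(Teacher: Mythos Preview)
Your proposal rests on a misreading of the quantity $|\Boxt_g u|_{p,k}$ and of what was actually established in Section~\ref{sectionn33-new}. By definition $|\Boxt_g u|_{p,k}=\max_{\ord(Z)\leq p,\,\rank(Z)\leq k}|Z(\Boxt_g u)|$, i.e.\ the admissible field $Z$ acts on the \emph{source} $\Boxt_g u$, not on the solution: the commutator $[Z,h^{\mu\nu}\del_\mu\del_\nu]u$ never appears here. The commutator is precisely the difference between $\Boxt_g(Zu)$ and $Z(\Boxt_g u)$, and is the subject of the \emph{next} proposition (Proposition~\ref{prop--eq1-27-03-2021}), not this one. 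So the bulk of your plan is aimed at the wrong target.

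More seriously, the terms you claim were ``already estimated'' in Section~\ref{sectionn33-new} were not. The inequalities \eqref{eq2-02-06-2022-second}--\eqref{eq1-22-03-2021-second} bound $W^{\textbf{linear}}$ and $W^\super$, i.e.\ the reference--perturbation interaction pieces $\Ibb^\star[u]$, the matter term, and $\wR^\star$; that section explicitly states that ``the null terms, the quasi-null term and the commutators require different arguments and will be the subject of later sections.'' The entire content of Proposition~\ref{prop1-10-07-2022} is the weighted $L^2$ control of $|\Pbb^\star[u]|_{p,k}$ and $|\Qbb^\star[u]|_{p,k}$, and your plan never addresses them. The paper's proof works by exploiting (i) the null structure of $\Qbb^\star$ so that one factor becomes $|\delsN u|$, (ii) the quasi-null structure of $\Pbb^\star$ so that the leading term is a product $|\del\us^{\N}|_{p_1,k_1}|\del\us^{\N}|_{p_2,k_2}$ of \emph{good} components, and (iii) the sharp harmonic pointwise decay \eqref{eq4-06-07-2022} for $|\del\us|_{N-4,k}$, which is what produces the split between the lossless term $s^{-1}\Fenergy_\kappa^{\ME,p,k}$ (rank-$0$ factor in $L^\infty$) and the $s^{-1+C(\epss+C_1\eps)}\Fenergy_\kappa^{\ME,p,k-1}$ term (rank~$\geq 1$ factor in $L^\infty$). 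Your commutator sketch, while reasonable in spirit, belongs to Proposition~\ref{prop--eq1-27-03-2021} and does not substitute for this missing argument.
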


\begin{proof} {\it Null semi-linear terms near the light cone.}
Recalling the structure of the null terms, we find  
\begin{equation}\label{eq1-10-07-2022}
|\Qbb^{\star}[u]|_{p,k} := \max_{\alpha,\beta}|\Qbb_{\alpha\beta}^\star[u] |_{p, k}
\lesssim  \sum_{p_1+p_2 = p\atop k_1+k_2=k} |\del u|_{p_1, k_1} |\delsN u|_{p_2, k_2} 
+ | h^\star |_p\sum_{p_1+p_2=p\atop k_1+k_2=k} |\del u|_{p_1, k_1} |\del u|_{p_2, k_2} 
\end{equation}
\vskip-.3cm 
and, in combination with \eqref{eq7-26-03-2021-new}, 
\begin{equation}\label{equation-8-sept-2021-new}
\aligned
|\Qbb^\star [u] |_{p,k}
 & \lesssim 
\sum_{p_1+p_2 = p\atop k_1+k_2=k} 
|\del u|_{p_1,k_1} |\delsN u|_{p_2, k_2} + \epss 
r^{-1} |\del u|_{p,k}|\del u|_{p_1, k_1}  
\\
& \lesssim  (\ell^{-\delta/2}+\delta^{-2})(\epss + C_1\eps) \big(\la r\ra^{-1} 
|\delsN u|_{p,k} 
+ \la r\ra^{-1 + C(\epss + C_1\eps)} |\delsN u|_{p-1,k-1} \big)
\\
& \quad + (\epss + C_1\eps) \, s^{\delta} r^{-1-\kappa}|\del u|_p
=: G_1 + G_2.
\endaligned 
\end{equation}
We integrate the above inequality so that, thanks to \eqref{lem1-22-05-2020-new},  
\begin{equation}
\| \crochet^{\kappa}J\zeta^{-1}\,  G_2 \|_{L^2(\Mnear_{\ell,s})}
\lesssim (\epss + C_1\eps) s^{-1-2\kappa+\delta}\|\crochet^{\kappa}\zeta|\del u|_p\|_{L^2(\Mnear_{\ell,s})} \lesssim (\epss + C_1\eps)^2 s^{-1-\delta}.
\end{equation}
On the other hand, we have 
$$
\aligned
\| \crochet^{\kappa}J\zeta^{-1}\, G_1 \|_{L^2(\Mnear_{\ell,s})}
& \lesssim (\ell^{-\delta/2}+\delta^{-2})(\epss + C_1\eps)s^{-1} \|\crochet^{\kappa}\zeta|\delsN u|_{p,k}\|_{L^2(\Mnear_{\ell,s})}
\\
& \quad +(\ell^{-\delta/2}+\delta^{-2})(\epss + C_1\eps)s^{-1+C(\epss + C_1\eps)^{1/2}} \|\crochet^{\kappa}\zeta|\delsN u|_{p-1,k-1}\|_{L^2(\Mnear_{\ell,s})}
\\
& \lesssim  (\ell^{-\delta/2}+\delta^{-2})(\epss + C_1\eps)
\big(s^{-1}\, \Fenergy_{\kappa}^{\ME,p,k}(s,u) + s^{-1+C(\epss + C_1\eps)}\Fenergy_{\kappa}^{\ME,p-1,k-1}(s,u)\big).
\endaligned	
$$

\

\noindent{\it Quasi-null terms near the light cone.}
Recalling of \cite[Lemma~\ref{lem1-31-01-2021}]{PLF-YM-main}, we have 
$$
\aligned
& |\Pbb^{\star}[u]|_{p,k} \lesssim  |\Pbb_{00}^{\star}[u]|_{p,k} + |\slashed{\Pbb}^{\star}[u]|_{p,k}
\\
& \lesssim   \underbrace{\sum_{p_1+p_2=p\atop k_1+k_2=k}|\del \us^{\N}|_{p_1,k_1}|\del \us^{\N}|_{p_2,k_2}}_{G_3}
+ \underbrace{\sum_{p_1+p_2=p}\Big(|\delts u|_{p_1} |\del u|_{p_2}
+ | \SbbME_{p_1}[u] | |\del u|_{p_2}  \Big) 
+ \sum_{p_1+p_2+p_3=p} |h^{\star}|_{p_3} |\del u|_{p_1} |\del u|_{p_2}}_{G_4}
\endaligned
$$
The terms $G_4$ are null terms and high-order terms, bounded similarly as $G_2$, so we can focus our attention on $G_3$. By applying \eqref{eq4-06-07-2022}, we find
$$
\|s \, \crochet^{\kappa}\zeta \, G_3\|_{L^2(\Mnear_{\ell,s})}
\lesssim (\ell^{-\delta/2}+\delta^{-2})(\epss + C_1\eps) 
\big(s^{-1}\, \Fenergy_{\kappa}^{\ME,p,k}(s,u) + s^{-1+C(\epss + C_1\eps)}\Fenergy_{\kappa}^{\ME,p-1,k-1}(s,u)\big).
$$

\vskip.3cm

\noindent{\it Quadratic semi-linear terms away from the light cone.}
It remains to derive the desired bound in $\Mfar_{\ell,s}$. We observe that $\crochet^{-1} \lesssim \ell^{-1}s^{-2}$ so, thanks to \eqref{eq1-10-07-2022} and our bound on $|\Pbb^{\star}[u]|_{p,k}$
$$
\aligned
& 
|\Qbb^{\star}[u]|_{p,k} + |\Pbb^{\star}[u]|_{p,k}
 \lesssim   (1+|h^{\star}|_N)|\del u\del u|_N
\lesssim \sum_{p_1+p_2=p\atop k_1+k_2=k}|\del u|_{p_1,k_1}|\del u|_{p_2,k_2}
\\
& \lesssim  |\del u|_{p,k}|\del u|_{[N/2]} \lesssim (\epss + C_1\eps) r^{-1}\crochet^{-\kappa}s^{\delta}|\del u|_{p,k}
\lesssim  \ell^{-\delta/2}(\epss + C_1\eps) r^{-1}\crochet^{-\kappa+\delta/2}|\del u|_{p,k}.
\endaligned
$$
We thus find 
$$
\aligned 
\|\crochet^{\kappa}J\zeta^{-1}\, G_3 \|_{L^2(\Mfar_{\ell,s})}
& \lesssim  \ell^{-\delta/2}(\epss + C_1\eps)\|r^{-1}s \, \crochet^{\kappa}\zeta|\del u|_{p,k}\|_{L^2(\Mfar_{\ell,s})}
\lesssim   \ell^{-\delta/2}(\epss + C_1\eps) s^{-1} \, \Fenergy_{\kappa}^{\ME,p,k}(s,u).
\endaligned \qedhere
$$
\end{proof}


\paragraph{Commutators.}

\begin{proposition}[Sharp energy estimates for commutators]
\label{prop--eq1-27-03-2021}
As a consequence of Lemma~\ref{coro-5-aout-2022}, 
for all  $\ord(Z) = p$ and $\rank(Z) = k$ one has 
$$ 
\aligned
\|\crochet^{\kappa}J\zeta^{-1}[Z,h^{\alpha\beta}\del_{\alpha}\del_{\beta}]u\|_{L^2(\MME_s)}
& \lesssim
\delta^{-1}(\ell^{-1}+\delta^{-2})(\epss + C_1\eps)s^{-1} \, \Fenergy_{\kappa}^{\ME,p,k}(s,u) 
\\
&\quad + \delta^{-1}(\ell^{-1}+\delta^{-2})(\epss + C_1\eps)s^{-1+C(\epss+C_1\eps)} \, 
\Fenergy_{\kappa}^{\ME,p-1,k-1}(s,u) 
\\
& \quad + \|\crochet^{\kappa}J\zeta^{-1}|\Boxt_g u|_{p-1,k-1}\|_{L^2(\MME_s)}
+ \ell^{-1}\delta^{-1}(\epss + C_1\eps)^2s^{-2}.
\endaligned
$$
\end{proposition}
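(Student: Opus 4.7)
The plan is to split $\MME_s$ into the near and far subregions $\Mnear_{\ell,s}$ and $\Mfar_{\ell,s}$, and to apply the two parts of Proposition~\ref{prop1-12-02-2020-new} separately, combining each in turn with the Hessian estimates of Propositions~\ref{prop1-22-05-2020-second} and~\ref{propo2-22-05-2020-new} and the sharp pointwise bounds of Corollary~\ref{coro-5-aout-2022}. The key conceptual point is that the stated conclusion mixes four types of contributions—an $\Fenergy_\kappa^{\ME,p,k}$ piece, an $\Fenergy_\kappa^{\ME,p-1,k-1}$ piece with an $s^{C(\epss+C_1\eps)}$ loss, a $\|\Boxt_g u\|$ residue, and a super-critical tail—and each of the four terms in the decomposition~\eqref{eq4a-12-02-2020-second} naturally produces one of them.

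In $\Mnear_{\ell,s}$, for the principal coupling $(|\HN^{00}|+t^{-1}|r-t||H|)|\del\del u|_{p-1,k-1}$, the light-bending estimate $|\HN^{00}|\lesssim (\epss+C_1\eps)\la r\ra^{-1}$ from~\eqref{eq5-10-06-2022}, together with $t^{-1}|r-t|\lesssim \ell$ in the near-cone region and the bound $|H|\lesssim (\epss+C_1\eps)\la r\ra^{-1+C(\epss+C_1\eps)^{1/2}}$ from~\eqref{eq3-29-03-2021-new}, yields a prefactor of order $(\epss+C_1\eps)\la r\ra^{-1}$. Inserting the Hessian estimate of Proposition~\ref{prop1-22-05-2020-second} then converts this quantity into
\[
(\epss+C_1\eps)\crochet^{-1}\Bigl(|\Boxt_g u|_{p-1,k-1}+t^{-1}|\del u|_{p,k}+(T^{\textbf{hier}}+T^\easy+T^{\textbf{super}})_{p-1,k-1}\Bigr).
\]
The first summand reproduces the $\|\crochet^{\kappa}J\zeta^{-1}|\Boxt_g u|_{p-1,k-1}\|_{L^2(\MME_s)}$ contribution after the harmless substitution $\crochet^{\kappa-1}\leq \crochet^\kappa$; the second summand produces the $(\epss+C_1\eps)s^{-1}\Fenergy_\kappa^{\ME,p,k}$ piece via the bootstrap bound~\eqref{eq7a-03-05-2020-new}; the lower-tier commutator pieces are absorbed thanks to the smallness of $(\epss+C_1\eps)$.

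The hierarchy term $T_{p,k}^{\textbf{hier}}$ is the rank-reducing piece: the constraint $p_1+k_2=k$ with $p_1\geq 1$ forces the rank of the Hessian factor $|\del\del u|_{p_2,k_2}$ to be at most $k-1$, which is precisely what feeds the $\Fenergy_\kappa^{\ME,p-1,k-1}$ term in the conclusion. I would dispatch the two factors with an $L^\infty$-$L^2$ splitting: when $p_1-1$ is small, the coefficient $|L\HN^{00}|_{p_1-1}+t^{-1}|r-t||LH|_{p_1-1}$ is controlled pointwise by Corollary~\ref{coro-5-aout-2022} (using $|LH|\lesssim r|\del H|$ in the near-cone region), while the Hessian factor is estimated in $L^2$ through the Hardy-Poincar\'e inequality of Proposition~\ref{eq3-15-05-2020-new}; in the reverse range, the pointwise bound~\eqref{eq5-25-03-2021-new'} takes care of the Hessian and the coefficient is put in $L^2$. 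The $s^{C(\epss+C_1\eps)}$ loss in the conclusion is exactly the positive-rank loss inherited from~\eqref{eq7-26-03-2021-new}. The remaining pieces are easier: $T^\easy$ contains a $\del H$ factor whose enhanced decay delivers an integrand bounded by $(\epss+C_1\eps)s^{-1}\Fenergy_\kappa^{\ME,p,k}$, while $T^{\textbf{super}}$ carries the favourable factor $t^{-1}$ that produces, after time integration, precisely the super-critical tail $\ell^{-1}\delta^{-1}(\epss+C_1\eps)^2 s^{-2}$.

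In $\Mfar_{\ell,s}$ the analysis is parallel but shorter: I would apply part~2 of Proposition~\ref{prop1-12-02-2020-new} together with Proposition~\ref{propo2-22-05-2020-new}, and the $\ell^{-1}$ prefactor in the final estimate arises from the trivial estimate $\crochet^{-1}\lesssim \ell^{-1}\la r\ra^{-1}$ available in that region. The main obstacle is the apparent circularity in the near-cone argument: controlling the commutator at order $(p,k)$ requires the Hessian at order $(p-1,k-1)$, which in turn, through Proposition~\ref{prop1-22-05-2020-second}, re-introduces commutators at that lower order; the resolution combines (i)~the absolute smallness of $(\epss+C_1\eps)$, which permits absorption at each reduction, and (ii)~the strict rank hierarchy $k\mapsto k-1$ produced by $T^{\textbf{hier}}$, which terminates the recursion in finitely many steps. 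Carefully bookkeeping the positive-rank loss through this induction gives exactly the $s^{C(\epss+C_1\eps)}$ factor attached to the $\Fenergy_\kappa^{\ME,p-1,k-1}$ summand.
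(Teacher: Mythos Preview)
Your overall architecture matches the paper's proof: split into $\Mnear_{\ell,s}$ and $\Mfar_{\ell,s}$, invoke the two parts of Proposition~\ref{prop1-12-02-2020-new}, and close the apparent circularity between commutators and Hessians by setting up a coupled system (the paper calls the two quantities $\Abf_{p,k}$ and $\Bbf_{p,k}$) and absorbing via smallness and the rank hierarchy.

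One concrete misattribution, however: in your treatment of $T^{\textbf{hier}}_{p,k}$ you say that when $p_1-1$ is small the Hessian factor is put in $L^2$ ``through the Hardy--Poincar\'e inequality of Proposition~\ref{eq3-15-05-2020-new}''. That is not how this step works, and Hardy--Poincar\'e does not apply to $|\del\del u|$. In the paper, when the coefficient order is small one uses \eqref{eq5-10-06-2022} (not $|LH|\lesssim r|\del H|$) to get $|L\HN^{00}|_{N-5}+t^{-1}|r-t||H|_{N-5}\lesssim (\epss+C_1\eps)\crochet r^{-1}$, and the Hessian contributes exactly the weighted $L^2$ norm $\Bbf_{p-1,k-1}(s):=\|t^{-1}\crochet^{1+\kappa}J\zeta^{-1}|\del\del u|_{p-1,k-1}\|_{L^2(\Mnear_{\ell,s})}$, which is then controlled by \eqref{eq11-10-07-2022}. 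The Hardy--Poincar\'e inequality \eqref{eq1-18-08-2021-lambda1} enters in the \emph{opposite} range, when the coefficient order $p_1$ is large: there the Hessian goes into $L^\infty$ via \eqref{eq7-26-03-2021-new} and the undifferentiated factor $|\LOmega u|_{k_1-1}$ is placed in $L^2$ via \eqref{eq1-18-08-2021-lambda1}. Swapping these roles would leave you with an uncontrolled extra $\crochet$ weight on the Hessian in $L^2$. Once this is fixed, your sketch is correct.
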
 

\begin{proof}
\noindent{\it Estimates away from the light cone.} For simplicity in the notation, we define 
$$
\aligned
\Abf_{p,k}(s) & :=  \sum_{\ord(Z)\leq p\atop \rank(z)\leq k}
\|\crochet^{\kappa}J\zeta^{-1}[Z,h^{\alpha\beta}\del_{\alpha}\del_{\beta}]u\|_{L^2(\Mfar_{\ell,s})},
\quad
\Bbf_{p,k}(s)  :=  \|\crochet^{\kappa}J\zeta^{-1}\,|\del\del u|_{p,k}\|_{L^2(\Mfar_{\ell,s})}.
\endaligned
$$ 
We recall Proposition~\ref{propo2-22-05-2020-new}, which gives us 
\begin{equation}\label{eq1-14-03-2021-new}
\aligned
\Bbf_{p,k}(s)& \lesssim  
\Abf_{p,k}(s)
+ \ell^{-1}\|\crochet^{\kappa}J\zeta^{-1}\,|\Boxt_g u|_{p,k}\|_{L^2(\Mfar_{\ell,s})} + \ell^{-1}s^{-1} \, \Fenergy_{\kappa}^{\ME,p+1,k+1}(s,u). 
\endaligned
\end{equation}
We then rely on \eqref{eq5-12-02-2020-new}, and control each term in this inequality as follows. 

\vskip.3cm

$\bullet$ For $ |\LOmega  h |_{p_1-1}|\del\del u|_{p_2,k_2}$ we find 
$
|\LOmega h|_{p} \lesssim |\LOmega h^{\star}|_{p} + |\LOmega u|_p\lesssim \epss \la r\ra^{-1} + |\LOmega u|_p, 
$
which leads us to 
\begin{equation}\label{eq6-10-07-2022}
\|\crochet^{\kappa}J\zeta^{-1}\,|\LOmega  h |_{p_1-1}|\del\del u|_{p_2,k_2} \|_{L^2(\Mfar_{\ell,s})}
\lesssim
\begin{cases}
\delta^{-1} (\epss + C_1\eps)\Bbf_{p-1,k-1}(s),
& p_1\leq N-3,
\\
\epss \Bbf_{p-1,k-1}(s)  + \ell^{-1}\delta^{-1}(\epss + C_1\eps)^2s^{-2},& p_1\geq N-2,
\end{cases}
\end{equation}
where for the first case we have applied the third bound of \eqref{eq1-30-05-2020-new}, while for the second bound we have applied \eqref{eq1-12-05-2020-new} and \eqref{eq1-15-08-2021-new}, provided $\kappa\geq 1/2+(3/2)\delta$.

\vskip.3cm

$\bullet$ The term $|H| \, | \del\del u|_{p-1,k-1} $ is also bounded by the first case of \eqref{eq6-10-07-2022}.

\vskip.3cm

$\bullet$ The term $|\del  h |_{p_1-1,k_1}|\del\del u|_{p_2,k_2}$ is easier since a partial derivative is acting on $h$. We observe that
\begin{equation}\label{eq5-10-07-2022}
|\del h|_{p} \lesssim |\del h^{\star}|_p + |\del u|_p \lesssim \epss \la r\ra^{-2} + |\del u|_p
\end{equation} 
and, using the inequality $\crochet^{-1} \lesssim \ell^{-1}\la r\ra^{-1}$ valid in $\Mfar_{\ell,[s_0,s_1]}$, 
\begin{equation}\label{eq7-10-07-2022}
\|\crochet^{\kappa}J\zeta^{-1}\,|\del  h |_{p_1-1,k_1}|\del\del u|_{p_2,k_2}\|_{L^2(\Mfar_{\ell,s})}
\lesssim \ell^{-\delta/2}(\epss + C_1\eps)s^{-1}\, \Fenergy_{\kappa}^{\ME,p,k}(s,u). 
\end{equation}

We then deduce that 
\begin{equation}\label{eq8-10-07-2022}
\Abf_{p,k}(s) \lesssim \delta^{-1}(\epss + C_1\eps)\Bbf_{p-1,k-1}(s)
+ \ell^{-\delta/2}(\epss + C_1\eps)s^{-1}\, \Fenergy_{\kappa}^{\ME,p,k}(s,u)
+ \ell^{-1}\delta^{-1}(\epss + C_1\eps)^2s^{-2}.
\end{equation}
Combining \eqref{eq8-10-07-2022} and \eqref{eq1-14-03-2021-new} and observing that $\delta^{-1}(\epss+C_1\eps)\ll 1$, we obtain 
\begin{equation}\label{eq12-10-07-2022}
\aligned
&
\sum_{\ord(Z)\leq p\atop \rank(z)\leq k}
\|\crochet^{\kappa}J\zeta^{-1}[Z,h^{\alpha\beta}\del_{\alpha}\del_{\beta}]u\|_{L^2(\Mfar_{\ell,s})}
\\
& \lesssim 
\|\crochet^{\kappa}J\zeta^{-1}\,|\Boxt_g u|_{p-1,k-1}\|_{L^2(\Mfar_{\ell,s})}
 + \ell^{-1}\delta^{-1}(\epss + C_1\eps)s^{-1} \Fenergy_{\kappa}^{\ME,p,k}(s,u)
+ \ell^{-1}\delta^{-1}(\epss + C_1\eps)^2s^{-2},
\endaligned
\end{equation}
which completes the proof of Proposition \ref{prop--eq1-27-03-2021} in $\Mfar_{\ell,[s_0,s_1]}$.

\vskip.3cm

\noindent{\it Estimates near the light cone.} We also introduce the notation
$$
\aligned
\Abf_{p,k}(s) & :=  \sum_{\ord(Z)\leq p\atop \rank(z)\leq k}
\|\crochet^{\kappa}J\zeta^{-1}[Z,h^{\alpha\beta}\del_{\alpha}\del_{\beta}]u\|_{L^2(\Mnear_{\ell,s})},
\qquad
\Bbf_{p,k}(s) :=  \|t^{-1}\crochet^{1+\kappa}J\zeta^{-1}\,|\del\del u|_{p,k}\|_{L^2(\Mnear_{\ell,s})}.
\endaligned
$$ 
Then, in view of \eqref{eq3-28-12-2020-new} we get 
\begin{equation}\label{eq11-10-07-2022}
\Bbf_{p,k}(s) \lesssim \Abf_{p,k}(s) +  \|\crochet^{\kappa}J\zeta^{-1}|\Boxt_g u|_{p,k}\|_{L^2(\Mnear_{\ell,s})}
+ s^{-1}\Fenergy_{\kappa}^{\ME,p+,k+1}(s,u).
\end{equation}
On the other hand, we recall \eqref{eq4-12-02-2020-second} and we need to control each term in its right-hand side. 

\vskip.3cm

$\bullet$ The term $\big(|\HN^{00}| + t^{-1}|r-t| |H|\big) \, |\del\del u|_{p-1,k-1}$, thanks to \eqref{eq5-10-06-2022} and \eqref{eq1-30-05-2020-new} (the third bound), is directly controlled by 
\begin{equation}\label{eq9-10-07-2022}
\big\|\crochet^{\kappa}J\zeta^{-1} (|\HN^{00}| + t^{-1}|r-t| |H|\big) \, |\del\del u|_{p-1,k-1})\big\|\lesssim (\epss+C_1\eps)\Bbf_{p-1,k-1}(s).
\end{equation}

\vskip.3cm

$\bullet$ Thanks to \eqref{eq5-10-06-2022} and \eqref{eq1-30-05-2020-new} (i.e. the third bound therein), the term $T_{p,k}^\textbf{hier}[H,u]$ is controlled as follows. First of all, we have 
$$
|LH^{\N00}|_{N-5} + t^{-1}|r-t||H|_{N-5} \lesssim (\epss+C_1\eps)\crochet r^{-1}
$$
and 
$$
|LH^{\N00}|_{k-1} + t^{-1}|r-t||H|_{k-1} \lesssim |h^{\star}|_k + |L u|_{k-1} + t^{-1}\crochet|u|_k\lesssim \epss \la r\ra^{-1} + |\LOmega u|_{k-1} + t^{-1}\crochet |u|_k.
$$
We thus deduce that  
$$
\aligned
&\|\crochet^{\kappa}J\zeta^{-1}T_{p,k}^\textbf{hier}[H,u] \|_{L^2(\Mnear_{\ell,s})}
\\
& \lesssim  (\epss + C_1\eps)\|\crochet^{\kappa}J\zeta^{-1} \crochet r^{-1}|\del\del u|_{p-1,k-1}\|_{L^2(\Mnear_{\ell,s})}
+ \sum_{k_1=N-4}^{k}\||\crochet^{\kappa}J\zeta^{-1}|\del\del u|_{p-k_1,k-k_1}|\LOmega u|_{k_1-1}\|_{L^2(\Mnear_{\ell,s})} 
\\
& \quad + (\epss+C_1\eps)s^{2\delta}\|\crochet^{\kappa}J\zeta^{-1}\, r^{-2}\crochet^{-\kappa}|u|_N\|_{L^2(\Mnear_{\ell,s})}, 
\endaligned
$$
where \eqref{eq1-17-07-2020-second} and \eqref{eq2-04-06-2022} were used. For the latter term, we apply \eqref{eq1-12-05-2020-new} which is seen to be bounded by $\delta^{-1}(\epss + C_1\eps)s^{-2}$ (provided $\kappa \geq 1/2+(3/2)\delta$). For the second term we recall \eqref{eq1-18-08-2021-lambda1}
together with \eqref{eq7-26-03-2021-new} and obtain
$$
\aligned
&\||\crochet^{\kappa}J\zeta^{-1}|\del\del u|_{p-k_1,k-k_1}|\LOmega u|_{k_1-1}|\|_{L^2(\Mnear_{\ell,s})} 
\\
& \lesssim 
\begin{cases}
(\ell^{-\delta} + \delta^{-2})(\epss + C_1\eps) s^{-1}	\|\zeta\crochet^{-1+\kappa}|\LOmega u|_{k-1}\|_{L^2(\Mnear_{\ell,s})} \quad & \text{ (when $k_1=k$)}
\\
(\ell^{-\delta} + \delta^{-2})(\epss + C_1\eps) s^{-1+C(\epss+C_1\eps)}	\|\zeta\crochet^{-1+\kappa}|\LOmega u|_{k-2}\|_{L^2(\Mnear_{\ell,s})}  \quad & \text{ (when $k_1\leq k-1$)} 
\end{cases}
\\
& \lesssim  (\ell^{-\delta} + \delta^{-2})(\epss + C_1\eps) s^{-1}\Fenergy_{\kappa}^{\ME,p,k}(s,u)
+  (\ell^{-\delta} + \delta^{-2})(\epss + C_1\eps)s^{-1+C(\epss+C_1\eps)}\Fenergy_{\kappa}^{\ME,p-1,k-1}(s,u).
\endaligned
$$
Hence, it follows that 
\begin{equation}\label{eq10-10-07-2022}
\aligned
& 
\|\crochet^{\kappa}J\zeta^{-1}T_{p,k}^\textbf{hier}[H,u] \|_{L^2(\Mnear_{\ell,s})}
\\
& \lesssim 
(\epss+C_1\eps)\Bbf_{p-1,k-1}(s)
+ (\ell^{-\delta} + \delta^{-2})(\epss + C_1\eps)s^{-1}\Fenergy_{\kappa}^{\ME,p,k}(s,u)
\\
& \quad + (\ell^{-\delta} + \delta^{-2})(\epss + C_1\eps)s^{-1+C(\epss+C_1\eps)}\Fenergy_{\kappa}^{\ME,p-1,k-1}(s,u) + \delta^{-1}(\epss + C_1\eps)s^{-2}.
\endaligned
\end{equation}

\vskip.3cm

$\bullet$ We consider next the terms $T^{\bf easy}$ and $T^{\bf super}$, and we observe that $T^{\super}$ contains a favorable factor $t^{-1}$ while the second term of $T^{\easy}$ contains the factor $t^{-1}|r-t|$, supplying sufficient decay in $\la r\ra^{-1}$. Thus these terms are relatively easier to handle and we only discuss now the most challenging term, namely 
$|h^{\N00}|_{p_1-1,k_1}|\del\del u|_{p_2,k_2}$. When $p_1-1\leq N-3$, we apply \eqref{eq1-17-07-2020-second} and obtain
$$
\aligned
\|\crochet^{\kappa}J\zeta^{-1}|\del h^{\N00}|_{p_1-1,k_1}|\del\del u|_{p_2,k_2}\|_{L^2(\Mnear_{\ell,s})}
& \lesssim
\delta^{-1} (\epss + C_1\eps)s^{-1-2\kappa+\delta}\|\crochet^{\kappa}\zeta|\del\del u|_{p,k}\|_{L^2(\Mnear_{\ell,s})}
\\
& \lesssim \delta^{-1}(\epss + C_1\eps)^2s^{-2}.
\endaligned
$$
When $p_1-1\geq N-2$, we have $p_2\leq 1\leq N-5$ and thus
$$
\aligned
& \|\crochet^{\kappa}J\zeta^{-1}|\del h^{\N00}|_{p_1-1,k_1}|\del\del u|_{p_2,k_2}\|_{L^2(\Mnear_{\ell,s})}
\\
& \lesssim
s\|\crochet^{\kappa}\zeta|\del h^{\star}|_{p_1-1,k_1}|\del\del u|_{p_2,k_2}\|_{L^2(\Mnear_{\ell,s})} 
+ s\|\crochet^{\kappa}\zeta|\del u|_{p_1-1,k_1}|\del\del u|_{p_2,k_2}\|_{L^2(\Mnear_{\ell,s})}
\\
&\lesssim (\ell^{-\delta} + \delta^{-2})(\epss + C_1\eps)s^{-1}\Fenergy_{\kappa}^{\ME,p,k}(s,u)
+(\ell^{-\delta} + \delta^{-2})(\epss + C_1\eps)s^{-1+C(\epss+C_1\eps)}\Fenergy_{\kappa}^{\ME,p-1,k-1}(s,u).
\endaligned
$$
The weighted $L^2$ norm of the remaining terms is bounded by $\delta^{-1}(\epss + C_1\eps)^2s^{-2}$. 

Now we conclude with \eqref{eq9-10-07-2022}, \eqref{eq10-10-07-2022}, and the bounds established on $T^{\bf easy}$ and $T^{\bf super}$:
$$
\aligned
\Abf_{p,k}& \lesssim  (\epss+C_1\eps)\Bbf_{p-1,k-1}(s)
+ (\ell^{-\delta} + \delta^{-2})(\epss + C_1\eps)s^{-1}\Fenergy_{\kappa}^{\ME,p,k}(s,u)
\\
& \quad +(\ell^{-\delta} + \delta^{-2})(\epss + C_1\eps)s^{-1+C(\epss+C_1\eps)}\Fenergy_{\kappa}^{\ME,p-1,k-1}(s,u)
+ \delta^{-1}(\epss + C_1\eps)^2s^{-2}.
\endaligned
$$
Substituting \eqref{eq11-10-07-2022} in the above estimate and recalling $(\epss+C_1\eps)\ll1$, we obtain
\begin{equation}\label{eq13-10-07-2022}
\aligned
&\sum_{\ord(Z)\leq p\atop \rank(z)\leq k}
\|\crochet^{\kappa}J\zeta^{-1}[Z,h^{\alpha\beta}\del_{\alpha}\del_{\beta}]u\|_{L^2(\Mnear_{\ell,s})}
\\
& \lesssim   \|\crochet^{\kappa}J\zeta^{-1}|\Boxt_g u|_{p,k}\|_{L^2(\Mnear_{\ell,s})} 
+ (\ell^{-\delta} + \delta^{-2})(\epss + C_1\eps)s^{-1}\Fenergy_{\kappa}^{\ME,p,k}(s,u)
\\
& \quad +(\ell^{-\delta} + \delta^{-2})(\epss + C_1\eps)s^{-1+C(\epss+C_1\eps)}\Fenergy_{\kappa}^{\ME,p-1,k-1}(s,u)
+ \delta^{-1}(\epss + C_1\eps)^2s^{-2}.
\endaligned
\end{equation} 
In conclusion, by combining \eqref{eq12-10-07-2022} and \eqref{eq13-10-07-2022} we arrive at the desired result.
\end{proof}


\paragraph{Conclusion.}

We summarize our result so far as follows. 

\begin{proposition}
Under the conclusions of Propositions \ref{prop1-10-07-2022} and \ref{prop--eq1-27-03-2021}, the following estimate holds
for all $\ord(Z) \leq p$ and $\rank(Z) \leq k$:
\begin{equation}\label{eq2-27-03-2021-new}
\aligned 
& \|\crochet^{\kappa}J\zeta^{-1}\,\Boxt_g Z u_{\alpha\beta}\|_{L^2(\MME_s)}
\\
 & \lesssim   \delta^{-1}(\ell^{-1} + \delta^{-2})(\epss + C_1\eps) \big(s^{-1} \, \Fenergy_{\kappa}^{\ME,p,k}(s,u)
 + s^{-1+C(\epss+C_1\eps)}\, \Fenergy_{\kappa}^{\ME,p-1,k-1}(s,u)\big)
\\
& \quad + \ell^{-1}\delta^{-1}(\epss + C_1\eps)^2s^{-1-\delta}
+ R_\err^\star(s). 
\endaligned
\end{equation}
\end{proposition}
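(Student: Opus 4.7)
The plan is to start from the wave equation~\eqref{eq11-15-05-2020-ab-new} satisfied by $Zu_{\alpha\beta}$, namely
\begin{equation*}
\Boxt_g Zu_{\alpha\beta}
= -[Z,h^{\mu\nu}\del_\mu\del_\nu]u_{\alpha\beta}
+ Z\big(\Pbb^\star_{\alpha\beta}[u]+\Qbb^\star_{\alpha\beta}[u]+\Ibb^\star_{\alpha\beta}[u]+2\,\Rwave_{\alpha\beta}-u^{\mu\nu}\del_\mu\del_\nu g^\star_{\alpha\beta}\big)
-8\pi Z\big(2T_{\alpha\beta}-Tg_{\alpha\beta}\big),
\end{equation*}
take the weighted $L^2(\MME_s)$ norm $\|\crochet^\kappa J\zeta^{-1}\cdot\|$, and bound the commutator and the source piece separately using Propositions~\ref{prop--eq1-27-03-2021} and~\ref{prop1-10-07-2022} respectively. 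No genuine induction on $(p,k)$ will be required; the argument is essentially a synthesis of the two preceding bounds.

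First I would handle $Z$ applied to the source terms. Since $Z$ has $\ord(Z)\leq p$ and $\rank(Z)\leq k$, the pointwise inequality $|Zf|\leq |f|_{p,k}$ reduces this contribution to a quantity already estimated by Proposition~\ref{prop1-10-07-2022} at rank $(p,k)$, yielding exactly
\begin{equation*}
(\ell^{-\delta/2}+\delta^{-2})(\epss+C_1\eps)\Big(s^{-1}\Fenergy_{\kappa}^{\ME,p,k}(s,u)+s^{-1+C(\epss+C_1\eps)}\Fenergy_{\kappa}^{\ME,p-1,k-1}(s,u)\Big)+(\epss+C_1\eps)^2 s^{-1-\delta}+R_\err^\star(s).
\end{equation*}
Next I would invoke Proposition~\ref{prop--eq1-27-03-2021} at rank $(p,k)$ to control the commutator. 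This produces precisely the two main energy contributions $\delta^{-1}(\ell^{-1}+\delta^{-2})(\epss+C_1\eps)s^{-1}\Fenergy_{\kappa}^{\ME,p,k}(s,u)$ and $\delta^{-1}(\ell^{-1}+\delta^{-2})(\epss+C_1\eps)s^{-1+C(\epss+C_1\eps)}\Fenergy_{\kappa}^{\ME,p-1,k-1}(s,u)$, the explicit remainder $\ell^{-1}\delta^{-1}(\epss+C_1\eps)^2 s^{-2}$ (which is absorbed into $\ell^{-1}\delta^{-1}(\epss+C_1\eps)^2 s^{-1-\delta}$ since $\delta<1$ and $s\geq s_0\geq 2$), together with one residual feedback term $\|\crochet^\kappa J\zeta^{-1}|\Boxt_g u|_{p-1,k-1}\|_{L^2(\MME_s)}$. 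The decisive observation is that this feedback term is itself estimated by Proposition~\ref{prop1-10-07-2022} applied now at the lower rank $(p-1,k-1)$, and the resulting bound is of exactly the same form, so by the monotonicity $\Fenergy_{\kappa}^{\ME,p',k'}(s,u)\leq \Fenergy_{\kappa}^{\ME,p,k}(s,u)$ for $(p',k')\leq (p,k)$ it is subsumed into the $(p,k)$-level term.

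Collecting all contributions and retaining the dominant prefactor $\delta^{-1}(\ell^{-1}+\delta^{-2})$ --- which absorbs the weaker $(\ell^{-\delta/2}+\delta^{-2})$ arising from the source estimate --- gives the claimed bound \eqref{eq2-27-03-2021-new}. I do not foresee any substantive obstacle: the technical work has already been carried out in the proofs of Propositions~\ref{prop1-10-07-2022} and~\ref{prop--eq1-27-03-2021}, and the present proof is a bookkeeping assembly. The only points demanding care are the matching of the two different prefactor scales and the verification that the $s^{-2}$ remainder coming from the commutator estimate is indeed dominated by the $s^{-1-\delta}$ remainder appearing in the final bound.
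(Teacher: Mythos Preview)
Your proposal is correct and follows exactly the approach the paper intends: the proposition is stated there as a summary (``We summarize our result so far as follows'') with no explicit proof, and your assembly of Propositions~\ref{prop1-10-07-2022} and~\ref{prop--eq1-27-03-2021} via the decomposition $\Boxt_g Zu = Z\Boxt_g u - [Z,h^{\mu\nu}\del_\mu\del_\nu]u$, together with feeding the residual $\|\crochet^{\kappa}J\zeta^{-1}|\Boxt_g u|_{p-1,k-1}\|$ back into Proposition~\ref{prop1-10-07-2022} at lower rank, is precisely the intended bookkeeping. Your checks on the prefactor absorption $(\ell^{-\delta/2}+\delta^{-2})\leq \delta^{-1}(\ell^{-1}+\delta^{-2})$ and on $s^{-2}\leq s^{-1-\delta}$ are correct and complete the argument.
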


At this juncture it is useful to state the following elementary property.  

\begin{lemma}[Comparison of energy functionals]
\label{lem1-28-03-2021} 
Let  $(\eta,c^*,w) = (\kappa,0,u)$ or $(\eta,c^*,w) = (\mu,c,\phi)$ one has 
$$
\Eenergy_{g, \eta,c^*}^{\ME}(s,u)\geq (1/4) \, \Eenergy_{\eta,c^*}^{\ME}(s,u). 
$$
\end{lemma}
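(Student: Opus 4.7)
The plan is a direct pointwise comparison of the two energy-flux densities, exploiting that $g$ is a small perturbation of $g_\Mink$. Splitting $g^{\alpha\beta} = g_\Mink^{\alpha\beta} + H^{\alpha\beta}$ and inserting into the definition \eqref{eq:39-one-new} of the energy-flux vector, I would write
\begin{equation*}
V_{g,\eta,c^*}[w] = V_{g_\Mink,\eta,c^*}[w] + V_H[w],
\end{equation*}
where $V_H[w]$ is linear in the components of $H$ and quadratic in $(\partial w, w)$. Contracting with the unit normal $n_s$ to $\Mscr_s$ and using the explicit expressions involving the foliation coefficient $\zeta$ and the Jacobian $J$ (cf.\ \eqref{borne-sup-J}--\eqref{lem1-22-05-2020-new}), the Minkowski piece $V_{g_\Mink,\eta,c^*}[w]\cdot n_s\,d\sigma_s$ produces a positive-definite density proportional to $\crochet^{2\eta}\bigl(\zeta^2|\del_t w|^2 + |\delsME w|^2 + (c^*)^2 w^2\bigr)\,J\,dx$, which is, up to a fixed multiplicative constant, the density of $\Eenergy_{\eta,c^*}^{\ME}(s,w)$.

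Next, I would observe that $V_H[w]\cdot n_s$ is pointwise controlled by a universal constant times $|H|$ times the same positive-definite density, which follows directly from Cauchy--Schwarz applied to the quadratic form. Invoking the pointwise bound \eqref{eq3-29-03-2021-new} for $h^{\alpha\beta}$ combined with the reference-metric bound \eqref{equa-31-12-20-new}, one has $\|H\|_{L^\infty(\MME_s)}\lesssim (\epss+C_1\eps)$, which is as small as we please provided $\epss+C_1\eps$ is sufficiently small (as assumed throughout the bootstrap analysis). Consequently the $H$-dependent correction is pointwise dominated by $3/4$ of the Minkowski density, and integrating over $\MME_s$ yields the claimed inequality
\begin{equation*}
\Eenergy_{g,\eta,c^*}^{\ME}(s,w)\geq \tfrac{1}{4}\,\Eenergy_{\eta,c^*}^{\ME}(s,w).
\end{equation*}

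The only mildly subtle point is making the positive-definiteness of the Minkowski integrand explicit in coordinates adapted to the foliation, i.e.\ separating the transverse derivative $\del_t w$ from the tangential derivatives $\delsME_a w$ with the correct $\zeta$-weights; this is however already encoded in \eqref{eq7-08-05-2020-one-new}, and the bound is uniform in $s\in[s_0,s_1]$ because the pointwise bound on $H$ is. Note that the case $c^*=0$ is handled identically, the positivity of the density being carried by the $\zeta^2|\del_t w|^2+|\delsME w|^2$ terms alone.
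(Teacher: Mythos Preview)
Your argument has a genuine gap in the merging region $\MM_s$. The claim that ``$V_H[w]\cdot n_s$ is pointwise controlled by a universal constant times $|H|$ times the same positive-definite density'' is false there. The Minkowski density, once written in the Euclidean--hyperboloidal frame, controls only $\zeta^2|\del_t w|^2 + |\delsME w|^2 + (c^*)^2 w^2$, with the weight $\zeta^2$ degenerating to order $s^2/t^2 \sim 1/r$ near the inner boundary of $\MM_s$. The $H$-correction, however, produces a term of size $|H|\,|\del_t w|^2$ \emph{without} the $\zeta^2$ weight. Even with the sharp bound you invoke, $|H|\lesssim (\epss+C_1\eps)\,r^{-1+C(\epss+C_1\eps)^{1/2}}$, the ratio $|H|/\zeta^2$ is of order $r^{C(\epss+C_1\eps)^{1/2}}$ and is not small; with the rougher bound $|H|\lesssim r^{-1/2}$ used in the paper it actually grows like $r^{1/2}$. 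So smallness of $|H|$ alone cannot absorb the correction into the Minkowski density.

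What rescues the comparison---and what the paper's proof singles out explicitly---is the light-bending sign condition $h^{\N00}\leq 0$ in $\Mnear_{\ell,[s_0,s_1]}$ (bootstrap assumption~\eqref{eq3-27-05-2020-new}). When the $H$-correction to the energy density is written in the semi-null frame, the only term carrying $|\del_t w|^2$ without an accompanying tangential factor is $-H^{\N00}|\del_t w|^2$; under the sign condition this term is \emph{non-negative} and may simply be dropped (or kept as extra positivity in $\Eenergy_{g,\eta,c^*}^{\ME}$). The remaining pieces of $V_H\cdot n_s$ all involve at least one tangential derivative $\delsN_a w$, and for these a Cauchy--Schwarz argument together with $|H|\lesssim (\epss+C_1\eps)\,\zeta$ (one power of $\zeta$, which does hold in $\MM_s$ since $r^{-1/2}\lesssim \zeta$ there) suffices. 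Your write-up needs to incorporate this null-frame decomposition and the sign condition; without them the pointwise comparison fails precisely where the foliation is most hyperboloidal.
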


\begin{proof} All of the assumptions required in \cite[Section~\ref{section-label-11-1}]{PLF-YM-main} are valid in the present context, especially the inequality $h^{\N00}\leq 0$ in $\Mnear_{\ell,[s_0,s_1]}$ and the decay 
$
|h|\lesssim (\epss+C_1\eps)r^{-\kappa}s^{\delta}\leq (\epss+C_1\eps)r^{-1/2}, 
$
and we can follow the same arguments as in \cite{PLF-YM-main}.
\end{proof}

In order to apply \eqref{prop energy-ici-exterior-new-equation}, we still need the following two bounds for ($\ord(Z)\leq p$ and $\rank(Z)\leq k$ and)
$(\eta,w) = (\kappa,u) $ or $(\mu,\phi)$:
\begin{equation}\label{eq9-27-03-2021-new}
\int_{\MME_s} |J G_{g, \eta}[Zw]| dx\lesssim (\ell^{-\delta/2} + \delta^{-2})(\epss + C_1\eps)s^{-1} \Eenergy_{\eta}^{\ME}(s,Zw),
\end{equation}
\begin{equation}\label{eq2-11-07-2022}
\int_{\MME_s}\eta \crochet^{2\eta-1} \aleph'(r-t)  h^{\N 00} |\del_t Zw|^2 \ Jdx
\lesssim (\epss + C_1\eps)s^{-1} \Eenergy_{\eta}^{\ME}(s,Zw).
\end{equation}
The first bound \eqref{eq9-27-03-2021-new} is directly derived by applying \eqref{eq7-26-03-2021-new} (with $k=0$) and \eqref{eq2-11-07-2022} is nothing but \eqref{eq5-10-06-2022}.
Finally we are ready to apply \eqref{prop energy-ici-exterior-new-equation} with $(\eta,w,c^*) = (\kappa,u,0)$. Taking \eqref{eq2-27-03-2021-new}, \eqref{eq9-27-03-2021-new}, \eqref{eq2-11-07-2022} and \eqref{eq1-21-05-2021-new} into account, we obtain
\begin{equation}\label{eq3-27-03-2021-new}
\aligned
&\frac{d}{ds}\Eenergy_{g,\kappa}^{\ME}(s,Z u)  + \frac{d}{ds} \Eenergy^{\Lcal}_{g, c}(s,Z u;s_0)  
+  2 \kappa \int_{\MME_s}\big( g^{\N ab} \delsN_aZu \delsN_bZu +  c^2|Zu|^2\big) \crochet^{2\kappa-1} \aleph'(r-t) \ Jdx
\\
& \lesssim \big(\ell^{-\delta} + \delta^{-2}\big)(\epss + C_1\eps)s^{-1}\Eenergy_{\kappa}^{\ME}(s,Zu)ds
+ \big\|\zeta\crochet^{\kappa}\del_t Zu\|_{L^2(\MME_s)}\|\crochet^{\kappa}J\zeta^{-1} \Boxt_gZu \big\|_{L^2(\MME_s)}
\\
& \lesssim  \delta^{-1}(\ell^{-1} + \delta^{-2})(\epss + C_1\eps) \big(s^{-1} \, \Eenergy_{\kappa}^{\ME,p,k}(s,u)
+ s^{-1+C(\epss+C_1\eps)}\, \Fenergy_{\kappa}^{\ME,p-1,k-1}(s,u)\Fenergy_{\kappa}^{\ME,p,k}(s,u)\big)
\\
& \quad + \ell^{-1}\delta^{-1}(\epss + C_1\eps)^2(1 + C_{R^{\star}})s^{-1-\delta}\Fenergy_{\kappa}^{\ME,p,k}(s,u). 
\endaligned
\end{equation}
Now let us focus our attention on the left-hand side. 
In the third term we observe that $\aleph'(r-t)\geq 0$ in $\MME_s$. Furthermore, since $|h|\ll 1$ we have $\sum_{a}|\delsN_a u|^2\lesssim g^{\N ab}\delsN_a u\delsN_b u$ and this leads us to
\begin{equation}\label{eq4-27-03-2021}
\sum_{\ord(Z)\leq p\atop \rank(Z)\leq k}\int_{\MME_s}\crochet^{2\kappa -1} 
\aleph'(r-t) g^{\N ab}\delsN_a Zu\delsN_b Zu \, Jdx\geq 0.
\end{equation}
\vskip-.25cm 
For the second term in the left-hand side, we recall \eqref{eq8-27-03-2021-new} and \eqref{eq3-27-05-2020-new} and obtain
\begin{equation}\label{eq3-11-07-2022}
 \frac{d}{ds} \Eenergy^{\Lcal}_{g, c}(s,Z u;s_0)  \geq 0.
\end{equation}
Summing up, in view of \eqref{eq3-27-03-2021-new}, for all $\ord(Z)\leq p,\rank(Z)\leq k$ we find 
$$
\aligned
\frac{d}{ds}\Fenergy_{g,\kappa}^{\ME,p,k}(s,u)& \lesssim   \delta^{-1}(\ell^{-1} + \delta^{-2})(\epss + C_1\eps) \big(s^{-1} \, \Fenergy_{\kappa}^{\ME,p,k}(s,u)
+ s^{-1+C(\epss+C_1\eps)}\, \Fenergy_{\kappa}^{\ME,p-1,k-1}(s,u)\big)
\\
& \quad + \ell^{-1}\delta^{-1}(\epss + C_1\eps)^2(1 + C_{R^{\star}})s^{-1-\delta}
\\
& \lesssim   \delta^{-1}(\ell^{-1} + \delta^{-2})(\epss + C_1\eps) \Big(s^{-1} \, \Fenergy_{g,\kappa}^{\ME,p,k}(s,u)
+ \underline{s^{-1+C(\epss+C_1\eps)}\, \Fenergy_{g,\kappa}^{\ME,p-1,k-1}(s,u)}
\, 
\Big)
\\
& \quad + \ell^{-1}\delta^{-1}(\epss + C_1\eps)^2(1 + C_{R^{\star}})s^{-1-\delta}.
\endaligned
$$
Observe that when $k=0$ the {\sl underlined} term in the right-hand side {\sl does not exist.} We rewrite the above inequality in the form
$$
\aligned
\frac{d}{ds}\Fenergy_{g,\kappa}^{\ME,p,k}(s,u)
& \leq  C\delta^{-1}(\ell^{-1} + \delta^{-2})(\epss + C_1\eps) \big(s^{-1} \, \Fenergy_{g,\kappa}^{\ME,p,k}(s,u)
+ s^{-1+C(\epss+C_1\eps)}\, \Fenergy_{g,\kappa}^{\ME,p-1,k-1}(s,u)\big)
\\
& \quad + C\ell^{-1}\delta^{-1}(\epss + C_1\eps)^2(1 + C_{R^{\star}})s^{-1-\delta}
\endaligned
$$
with $C$ a constant determined by $N$.  Then by Gronwall's inequality and by an induction argument on $k$, we deduce that  
$$
\Fenergy_{g,\kappa}^{\ME,p,k}(s,u) - \Fenergy_{g,\kappa}^{\ME,p,k}(s_0,u)s^{C(\epss + C_1\eps)^{1/3}}\leq C(\epss+C_1\eps)^{4/3}s^{C(\epss+C_1\eps)^{1/3}}, 
$$
provided 
$$
C_0<C_1,\quad\ell^{-1}\delta^{-2}(1+ C_{R^{\star}}) (\epss+C_1\eps)^{1/3}\leq 1, \quad\delta^{-1}(\ell^{-1} + \delta^{-2})(\epss+C_1\eps)^{1/3}\leq 1.
$$
Now we take $C(\epss+C_1\eps)^{1/3}\leq \delta$ and $C_1>4C_0$ together with 
$\epss + C_1\eps\leq \big( C_1-4C_0\big)^3 / (4CC_1)^3$
and we arrive at
$$
\Fenergy_{\kappa}^{\ME,N}(s,u) \leq 2 \, \Fenergy_{g, \kappa}^{\ME,p,k}(s,u)\leq \frac{1}{2}(\epss + C_1\eps) s^\delta.
$$


\subsection{Improved energy estimate for the matter field} 

In comparison with the issues arising with metrics with lower decay, the control of the matter field is significantly simpler in the harmonic regime for the metric, and we do not need to distinguish between various components of the metric. We are going to establish the following result on commutators valid in $\MME_{[s_0,s_1]}$: 
\begin{equation}\label{eq5-11-07-2022}
\aligned
&\|\crochet^{\mu}J\zeta^{-1}[Z,H^{\alpha\beta}\del_{\alpha}\del_{\beta}]\phi \|_{L^2(\MME_s)}
\\
&\lesssim 
\big(\ell^{-\delta/2} +\delta^{-2}\big)(\epss+C_1\eps)s^{-1}\big(\Fenergy_{\mu,c}^{\ME,p,k}(s,\phi) + s^{C(\epss+C_1\eps)}\Fenergy_{\mu,c}^{\ME,p,k-1}(s,\phi)\big) 
 + 
\begin{cases}
0, \, &k\leq N-5,
\\
\delta^{-1}(\epss+C_1\eps)^2, \ &k\geq N-4.
\end{cases}
\endaligned
\end{equation}
To this end, we rely on the general commutator inequality~\eqref{eq5-12-02-2020-new} (with $u = \phi$ therein). Recalling \eqref{eq7-26-03-2021-new} (case $k=0$) we find 
\begin{equation}
\aligned
\|\crochet^{\mu}J\zeta^{-1}|H||\del\del\phi|_{p-1,k-1} \|_{L^2(\MME_s)}
& \lesssim  \big(\ell^{-\delta/2} +\delta^{-2}\big) (\epss + C_1\eps)s\|\la r\ra^{-1}\crochet^{\mu}\zeta|\del\del\phi|_{p-1,k-1}\|_{L^2(\MME_s)}
\\
& \lesssim  \big(\ell^{-\delta/2} +\delta^{-2}\big) (\epss + C_1\eps)s^{-1}\Fenergy_{\mu,c}^{\ME,p,k-1}(s,\phi).
\endaligned
\end{equation} 
For the second term in the right-hand side of \eqref{eq5-12-02-2020-new}, we observe that, when $p_1\leq N-5$ and thanks to \eqref{eq3-29-03-2021-new},
\begin{equation}\label{eq1-12-07-2022}
\|\crochet^{\mu}J\zeta^{-1}|\LOmega H|_{p_1-1}|\del\del \phi|_{p_2,k_2}\|_{L^2(\MME_s)} \lesssim (\epss+C_1\eps)s^{-1+C(\epss+C_1\eps)^{1/2}}\Fenergy_{\mu,c}^{\ME,p,k-1}(s,\phi),
\end{equation}
where we used $p_1\geq 1$ and $p_1+p_2=p,k_1+k_2=k$. When $p_1\geq N-4$, we have 
$$
|\LOmega H|_{p-1} \lesssim |h^{\star}|_{p_1} + |u|_{p_1} \lesssim \epss \la r\ra^{-1} +  |u|_{p_1}.
$$
\vskip-.3cm
\noindent 
Then, we have  
$$
\aligned
&\|\crochet^{\mu}J\zeta^{-1}|\LOmega H|_{p_1-1}|\del\del \phi|_{p_2,k_2}\|_{L^2(\MME_s)} 
\\
& \lesssim  \epss \|\crochet^{\mu}J\zeta^{-1}\la r\ra^{-1}|\del\del\phi|_{p,k}\|_{L^2(\MME_s)}
 + \|\crochet^{\mu}J\zeta^{-1}|u|_{p_1}|\del\del\phi|_{p_2,k_2}\|_{L^2(\MME_s)}
 \\
& \lesssim  \epss s^{-1}\Fenergy_{\mu,c}^{\ME,p,k-1}(s,\phi) 
+ (\epss+C_1\eps)s^{2\delta}\|\crochet^{\mu}J\zeta^{-1}r^{-2}\crochet^{1-\mu}|u|_{p,k}\|_{L^2(\MME_s)}
\\
& \lesssim   \epss s^{-1}\Fenergy_{\mu,c}^{\ME,p,k-1}(s,\phi) 
+ (\epss+C_1\eps)s^{1+2\delta}\|\crochet^{2-\kappa}\zeta\,r^{-2}\crochet^{-1+\kappa}|u|_{p,k}\|_{L^2(\MME_s)}
\\
& \lesssim   \epss s^{-1}\Fenergy_{\mu,c}^{\ME,p,k-1}(s,\phi)
 + (\epss+C_1\eps)s^{1-2\kappa+2\delta}\|\crochet^{-1+\kappa}|u|_{p,k}\|_{L^2(\MME_s)}, 
\endaligned
$$
where we applied Lemma~\ref{lemma-111-new}. Then in view of \eqref{eq1-12-05-2020-new} and \eqref{eq1-12-07-2022}, we obtain
\begin{equation}\label{eq4-11-07-2022}
\aligned
&\|\crochet^{\mu}J\zeta^{-1}|\LOmega H|_{p_1-1}|\del\del \phi|_{p_2,k_2}\|_{L^2(\MME_s)}
\\
& \lesssim
\begin{cases}
\delta^{-1}(\epss+C_1\eps)^2,\quad &N-4\leq  p\leq N,
\\
\big(\ell^{-\delta/2} +\delta^{-2}\big) (\epss + C_1\eps)s^{-1+C(\epss+C_1\eps)}\Fenergy_{\mu,c}^{\ME,p,k-1}(s,\phi),\ & p\leq N-5.
\end{cases}
\endaligned
\end{equation}
Finally, for the terms $|\del H|_{p_1-1,k_1}|\del\del \phi|_{p_2,k_2}$, when $p_1\leq N-3$, we apply \eqref{eq7-26-03-2021-new} and obtain
$$
\aligned
&\big\|\crochet^{\mu}J\zeta^{-1}\, |\del H|_{p_1-1,k_1}|\del\del \phi|_{p_2,k_2}\big\|_{L^2(\MME_s)}
\\
& \lesssim  \big\|\crochet^{\mu}J\zeta^{-1}\, |\del H|_{N-4,0}|\del\del \phi|_{p,k}\big\|_{L^2(\MME_s)}
+ \big\|\crochet^{\mu}J\zeta^{-1}\, |\del H|_{p_1-1,k_1}|\del \phi|_{p,k-1}\big\|_{L^2(\MME_s)}
\\
& \lesssim  	\big(\ell^{-\delta/2} +\delta^{-2}\big)(\epss+C_1\eps)s^{-1}\Fenergy_{\mu,c}^{\ME,p,k}(s,\phi) 
+ 	
\big(\ell^{-\delta/2} +\delta^{-2}\big)(\epss+C_1\eps)s^{-1+C(\epss+C_1\eps)}\Fenergy_{\mu,c}^{\ME,p,k-1}(s,\phi), 
\endaligned
$$
which leads us to \eqref{eq5-11-07-2022}. 

Recalling \eqref{eq9-27-03-2021-new}, \eqref{eq2-11-07-2022} and \eqref{eq3-11-07-2022}, which are also valid for $\phi$, we apply \eqref{prop energy-ici-exterior-new-equation} and obtain
$$
\aligned
\frac{d}{ds}\Fenergy_{g,\mu,c}^{\ME,p,k}(s,u)
& \lesssim    (\ell^{-\delta/2} + \delta^{-2})(\epss + C_1\eps) 
\Big(s^{-1} \, \Fenergy_{g,\mu,c}^{\ME,p,k}(s,u)
+ s^{-1+C(\epss+C_1\eps)}\, \Fenergy_{g,\mu,c}^{\ME,p,k-1}(s,u)\Big)
\endaligned
$$
for $p\leq N-5$ and
$$
\aligned
\frac{d}{ds}\Fenergy_{g,\mu,c}^{\ME,p,k}(s,u)
& \lesssim   (\ell^{-\delta/2} + \delta^{-2})(\epss + C_1\eps) 
\Big(s^{-1} \, \Fenergy_{g,\mu,c}^{\ME,p,k}(s,u)
+ s^{-1+C(\epss+C_1\eps)}\, \Fenergy_{g,\mu,c}^{\ME,p,k-1}(s,u)\Big)
   +\delta^{-1}(\epss+C_1\eps)^2
\endaligned
$$
for $p\geq N-4$. Here $C$ is a constant determined by $N$. Then we take
$
C_0<C_1$ and
$(\ell^{-\delta/2} + \delta^{-2})(\epss+C_1\eps)^{1/3}\leq 1$
and, by Gronwall's inequality and an induction on $k$, we have 
\begin{equation}\label{eq2-12-07-2022}
\aligned
\Fenergy_{g,\mu,c}^{\ME,N-5}(s,\phi)& \leq  \Fenergy_{g,\mu,c}^{\ME,N-5}(s_0,\phi)s^{C(\epss+C_1\eps)^{1/3}}
 + C(\epss+C_1\eps)^{4/3} s^{C(\epss+C_1\eps)^{1/3}},
\\
\Fenergy_{g,\mu,c}^{\ME,N}(s,\phi)& \leq  \Fenergy_{g,\mu,c}^{\ME,N}(s_0,\phi)s^{1+C(\epss+C_1\eps)^{1/3}}
 + C(\epss+C_1\eps)^{4/3} s^{1+C(\epss+C_1\eps)^{1/3}},
\endaligned
\end{equation}
with (another) constant $C>0$ determined by $N$. We require that
$$
C(\epss+C_1\eps)^{1/3}\leq \delta, \quad C_1>4C_0, \quad \epss+C_1\eps < \Big(\frac{C_1-4C_0}{4C C_1}\Big)^3.
$$
We have established the refined energy bound for the matter field, and the proof of Theorem~\ref{theo-main-result-qualitative} is completed. 


\small

\paragraph{Acknowledgments.}
 
Part of this work was done when PLF was a visiting research fellow at the Courant Institute for Mathematical Sciences, New York University, and a visiting professor at the School of Mathematical Sciences, Fudan University, Shanghai. The work of YM was supported by a Special Financial Grant from the China Postdoctoral Science Foundation under the grant number 2017-T100732. 
 

\bibliography{references}

\addcontentsline{toc}{section}{\large References}

\pagestyle{plain}

\end{document}